\newcommand{\AlgoResetCount}{\renewcommand{\@ResetCounterIfNeeded}{\setcounter{AlgoLine}{0}}}
\newcommand{\AlgoNoResetCount}{\renewcommand{\@ResetCounterIfNeeded}{}}
\newcounter{AlgoSavedLineCount}
\newtheorem{theorem}{Theorem}
\newtheorem{assumption}{Assumption}
\newtheorem{definition}{Definition}
\newtheorem{lemma}{Lemma}
\newtheorem{corollary}{Corollary}
\newtheorem{proposition}{Proposition}
\newtheorem{remark}{Remark}
\title[Graph based Gaussian process]{Graph based Gaussian processes on restricted domains }
\author{David B Dunson}
\address{David B Dunson\\
Department of Statistical Science\\
Duke University}
\email{dunson@duke.edu}
\author{Hau-Tieng~Wu}
\address{Hau-Tieng Wu\\
Departments of Mathematics and Department of Statistical Science\\
Duke University}
\email{hauwu@math.duke.edu}
\author{Nan~WU}
\address{Nan Wu\\
Department of Mathematics\\
Duke University}
\email{nan.wu@duke.edu}
\begin{document}
\maketitle

\begin{abstract}
In nonparametric regression, it is common for the inputs to fall in a restricted subset of Euclidean space.  Typical kernel-based methods that do not take into account the intrinsic geometry of the domain across which observations are collected may produce sub-optimal results.  In this article, we focus on solving this problem in the context of Gaussian process (GP) models, proposing a new class of Graph Laplacian based GPs (GL-GPs), which learn a covariance that respects the geometry of the input domain.  As the heat kernel is intractable computationally, we approximate the covariance using finitely-many eigenpairs of the Graph Laplacian (GL).  The GL is constructed from a kernel which depends only on the Euclidean coordinates of the inputs. Hence, we can benefit from the full knowledge about the kernel to extend the covariance structure to newly arriving samples by a Nystr\"{o}m type extension. We provide substantial theoretical support for the GL-GP methodology, and illustrate performance gains in various applications.
\end{abstract}
\keywords{Bayesian; Graph Laplacian; Heat kernel; Manifold; Nonparametric regression; Restricted domain; Semi-supervised}

\section{Introduction}

We are interested in problems in which data are collected on `inputs' $x = (x_{1},\ldots,x_{D}) \in S \subset\mathbb{R}^D$ and `outputs' $y \in \mathbb{R}$, with 
$S$ a subset of $\mathbb{R}^D$.  Labeled training data $\mathcal{D} = \{ x_i,y_i\}_{i=1}^m$ are available for samples $i=1,\ldots,m$, along with (possibly) unlabelled data $\mathcal{U} = \{x_i\}_{i=m+1}^{m+n}$.  There are many settings in which this problem arises, including nonparametric regression focused on using features $x$ to predict outcome $y$ and computer model emulation in which $x$ corresponds to an input into a computer model that is expensive to run and $y$ corresponds to the output.  Gaussian process (GP) models are routinely used in these settings, but without explicitly taking into account the geometry of $S$ or using the unlabelled data to improve estimation of the unknown input-output function.  It is common for $S$ to be highly restricted and non-linear. For example,  in computer model emulation, the outputs commonly satisfy some physical laws or differential equations constrained over the domain of the inputs with complicted geometry.  

For concreteness, we focus on the following model throughout the paper, while noting that many elaborations are straightforward within the framework we will propose, 
\begin{eqnarray}
y_i = f(x_i) + \eta_i,\quad \eta_i \sim N(0,\sigma_{noise}^2), \label{eq:base}
\end{eqnarray}
where $f: S \to \mathbb{R}$ is an unknown regression function that is assigned a GP prior centred at zero with covariance function $C(x,x')$ and $\sigma_{noise}^2$ is the measurement error variance (this can be set to zero for deterministic computer models).  The choice of $C(x,x')$ has a fundamental impact on the results; the most common choices of covariance function are the squared exponential and its generalization, the Mat\'ern.  Both choices depend critically on the distance between inputs $x$ and $x'$; by far the most common choice in practice is the Euclidean distance.  

We use the following example to illustrate the problems with ignoring the geometry of $S$ and simply using a Euclidean distance-based kernel.  Raynaud's disease is a disorder of the blood vessels in the fingers and toes. When a person is cold or feels emotional stress, it causes the blood vessels to narrow so that the blood can not get to the skin. As a result, the affected parts on the fingers and toes turn white and then blue and there is a significant difference between the temperature over the affected part and the unaffected part.  In figure \ref{introhand}, we plot a simulation of the skin temperature (degree Celsius) of a patient with Raynaud's disease from a 3D scan of a right hand, a surface in $\mathbb{R}^3$. We hold out the temperature values at all but a relatively small number of sensor locations; the top left panel shows the labelled data and the top right all the data.
\begin{figure}[htb!]
\centering
\includegraphics[width=1 \columnwidth]{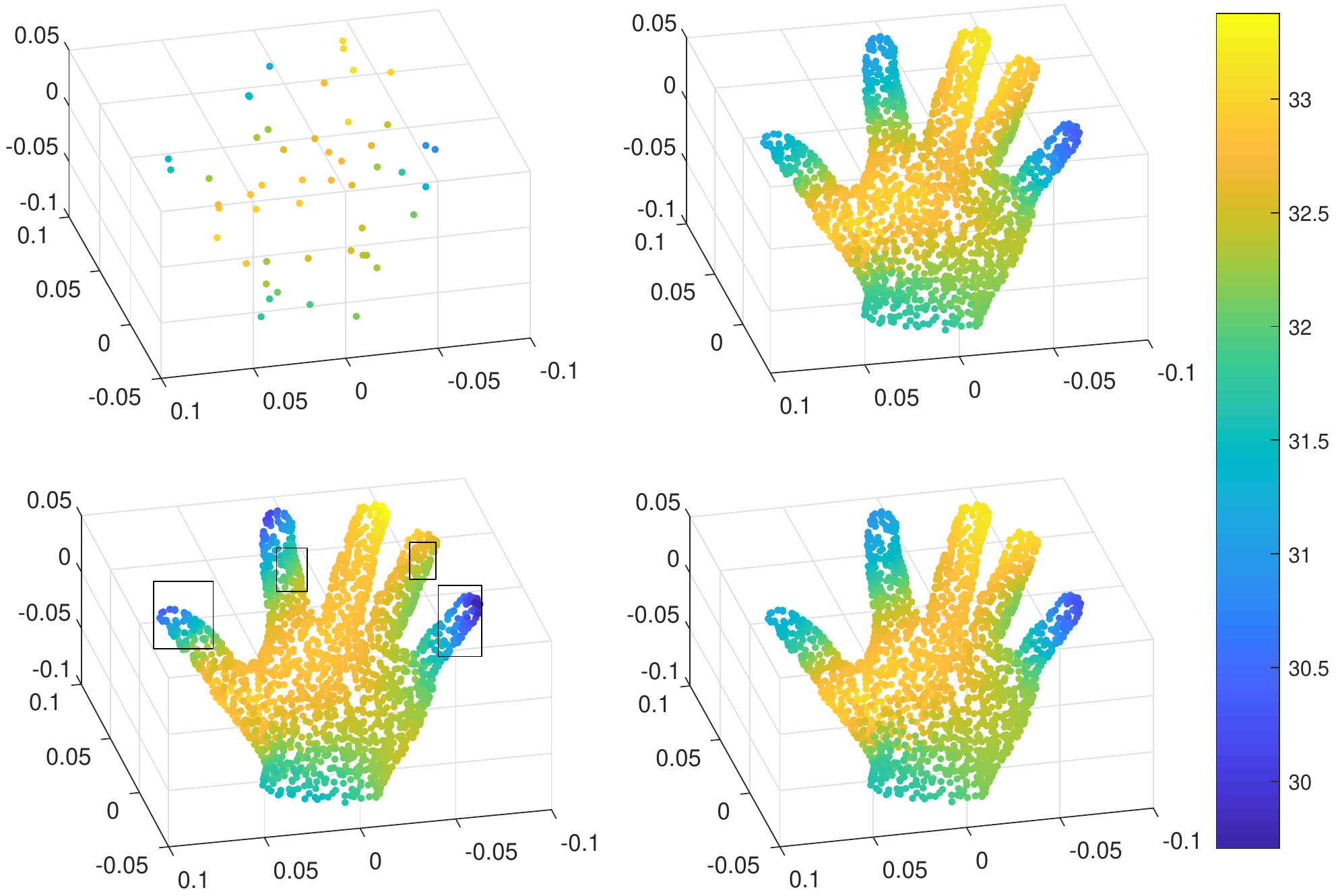}\label{introhand}
\caption{Points are sampled from a 3D scan of a right hand. Each point corresponds to the skin temperature (degree Celsius).  Top left panel shows a subset of labelled data and the top right panel shows all the data. We fit GP models to predict held-out temperature data.  The bottom left panel shows the results for a GP with squared exponential covariance, while the bottom right panel shows the results for our GL-GP.}
\end{figure}
We fit a GP with the square exponential of the Euclidean distance in the covariance, and show the predicted values in the bottom left panel of  Figure \ref{introhand}. There is a substantial change in temperature  between the index finger and the middle finger and between the ring finger and the little finger. For the GP with squared exponential covariance, 
 there is inappropriate smoothing between different fingers, leading to poor predictions over the regions indicated in the boxes.  

There have been attempts to solve related problems in the literature. \cite{cheng2013local} assume $S$ is an unknown submanifold in a Euclidean space and develop a locally linear regression method on the manifold; see also \cite{zhu2003semi,zhu2005semi,belkin2006manifold,nadler2009semi,dunlop2019large,wang2015nonparametric} for  semi-supervised approaches. An alternative is to choose a covariance in a GP prior that respects the geometry of $S$, but it is not clear how to specify such a covariance.  When the subset $S$ is a submanifold in a Euclidean space, the heat kernel of $S$ that characterizes the diffusion of heat flowing out of a point in $S$ is a potentially appealing choice.  \cite{castillo2014thomas} studies theoretical properties of the posterior distribution, such as rates of contraction, for related models.  Unfortunately, the heat kernel is typically intractable to calculate.  For known manifolds, \cite{lin2018extrinsic} propose an extrinsic GP that embeds the manifold in a higher-dimensional Euclidean space, while \cite{niu2019intrinsic} propose an intrinsic GP relying on a Monte Carlo (MC) approximation to the heat kernel.  The intrinsic GP is only applicable to known manifolds and their MC approximation is very expensive computationally, relying on simulating Brownian motion many times and calculating proportions of paths from a starting point ending up close to a target point. {There are also valid covariance structures defined on a submanifold in a Euclidean space other than the heat kernel. For example, the generalized M\'atern covariance is discussed in \cite{li2021fixed, borovitskiy2020mat}. When the manifold is known, such covariances can be approximated by numerically solving stochastic partial differential equations on the manifold.}

In this article, we develop a novel Graph Laplacian-based GP (GL-GP) to solve the above problem with predictors on an unknown subset $S$ of $\mathbb{R}^D$, which is not necessarily a manifold.  The key idea is to construct a covariance that incorporates the intrinsic geometry of $S$.  This is accomplished through taking {\em finitely} many eigenpairs of the graph Laplacian (GL) using the labelled and unlabelled predictor values.  The GL is widely studied in spectral graph theory \cite{Fan:1996}, and corresponds to the infinitesimal generator of a random walk on the sampled data points.  The covariance in the GL-GP approximates a diffusion process on $S$ with respect to intrinsic distances between data points. Tuning parameters in the covariance functions can be estimated by maximizing marginal likelihoods \cite{rasmussen2003gaussian}.
We propose a Nystr\"{o}m extension method to extend the covariance structure determined from an existing dataset to a newly arriving dataset. To provide a teaser illustrating practical advantages of the GL-GP, we fit our GL-GP to the skin temperature example.  The predicted values are shown in the bottom right panel of Figure \ref{introhand}. 

The proposed GL-GP can be used broadly in place of existing GP models. The method is designed to adapt to the support of the sample points, and one does not need to know {\em a priori} that the data have constrained support.  We find in practice that the GL-GP often outperforms GPs with standard off the shelf covariance functions in general applications, particularly when the predictors have a nontrivial geometric or topological structure.  In addition to the novel GL-GP methodology, a major contribution of this paper is the theory we have developed in support of the GL-GP.  We first  associate the GL with an integral operator for any subset $S$ of the Euclidean space. We discuss theoretical properties of the integral operator and define the covariance function of the GL-GP by using finitely many eigenpairs of the integral operator.  When $S$ is an embedded submanifold of $\mathbb{R}^D$, we provide the convergence rate of the GL-GP covariance matrix and its Nystr\"{o}m extension to the GL-GP covariance function. We show the stability of the GL (hence the GL-GP algorithm) when there are measurement errors so that predictors do not fall exactly within $S$.  Finally, when $S$ is an embedded submanifold of $\mathbb{R}^D$, we provide theory on contraction of the posterior for $f$ around the true function $f_0$ under some regularity assumptions.

\section{Graph based Gaussian processes on subsets of Euclidean space}

Focusing on equation (\ref{eq:base}), we propose a new approach for choosing the covariance function $\mathsf{C}$ in the Gaussian process; our proposed approach differs from current standard methods in not treating $\mathsf{C}$ as a pre-specified function having a small number of unknown parameters (e.g, squared exponential or M\'atern) but instead estimates the covariance in a manner that takes into account the structure of the support $S$ as well as information in both the labeled and unlabeled data.  Before introducing the proposed covariance, we provide a review of prediction based on GPs.  

\subsection{Gaussian process review}

Denote $\textbf{f}\in\mathbb{R}^m$ to be the discretization of a continuous function $f$ over $x_1, x_2, \cdots , x_m$ so that $\textbf{f}(i)=f(x_i)$.  A GP prior for $f$ 
implies 
$p(\textbf{f}|x_1,x_2, \cdots, x_m)=\mathcal{N}(0, \Sigma_{\textbf{f}\textbf{f}}),$ 
where $\Sigma_{\textbf{f}\textbf{f}} \in \mathbb{R}^{m\times m}$ is the covariance matrix induced from $\mathsf{C}$, with the $(i,j)$ element of $\Sigma_{\textbf{f}\textbf{f}}$ corresponding to $\mbox{cov}\{f(x_i),f(x_j)\}=\mathsf{C}(x_i,x_j)$, for $1 \leq i,j \leq m$.  The prior distribution $\mathcal{N}(0, \Sigma_{\textbf{f}\textbf{f}})$ can be combined with the information in the likelihood function under model (\ref{eq:base}) to obtain the posterior distribution, which will be used as a basis for inference.

We want to predict $f(x_i)$, where $x_i\in \mathcal{U}$.  Denote $\textbf{f}_*\in\mathbb{R}^n$ with $\textbf{f}_*(i)=f(x_{m+i})$ for $i=1,\ldots,n$. Under a GP prior for $f$, the joint distribution of $\textbf{f}$ and $\textbf{f}_*$ is 
$p(\textbf{f},\textbf{f}_*)=\mathcal{N}(0, \Sigma),$ 
where $\Sigma$ is an $(m+n) \times (m+n)$ covariance matrix that can be expressed as 
$
\Sigma= \begin{bmatrix}
\Sigma_{\textbf{f}\textbf{f}} & \Sigma_{\textbf{f}\textbf{f}_*}  \\
\Sigma_{\textbf{f}_*\textbf{f}} & \Sigma_{\textbf{f}_*\textbf{f}_*}
\end{bmatrix}\,, \nonumber 
$
where $\Sigma_{\textbf{f}\textbf{f}_*} \in \mathbb{R}^{m\times n}$, $\Sigma_{\textbf{f}_*\textbf{f}}\in \mathbb{R}^{n\times m}$, and $\Sigma_{\textbf{f}_*\textbf{f}_*}\in \mathbb{R}^{n\times n}$ are induced from the covariance function $\mathsf{C}$ respectively.
Denote $\textbf{y}\in\mathbb{R}^m$ with $\textbf{y}(i)=y_i$ for $i=1, \cdots, m$ to be the observations over $\{x_1, x_2, \cdots, x_m\}$. Under model \eqref{eq:base} and a GP prior, we have 
$p(\textbf{y}, \textbf{f}_*)=\mathcal{N}(0, \tilde{\Sigma}), $
where
\begin{align}
\tilde{\Sigma}=\Sigma+\begin{bmatrix}
\sigma^2_{noise} I_{m\times m} & 0  \\
0 & 0 \\
\end{bmatrix}= \begin{bmatrix}
\Sigma_{\textbf{f}\textbf{f}} +\sigma^2_{noise} I_{m\times m} & \Sigma_{\textbf{f}\textbf{f}_*}  \\
\Sigma_{\textbf{f}_*\textbf{f}} & \Sigma_{\textbf{f}_*\textbf{f}_*} \\
\end{bmatrix}. \nonumber 
\end{align}
By a direct calculation, the predictive distribution is 
\begin{align}
p(\textbf{f}_*|\textbf{y})=\mathcal{N}(\Sigma_{\textbf{f}_*\textbf{f}}(\Sigma_{\textbf{f}\textbf{f}}+\sigma^2_{noise} I_{m \times m})^{-1}\textbf{y}, \Sigma_{\textbf{f}_*\textbf{f}_*}-\Sigma_{\textbf{f}_*\textbf{f}} (\Sigma_{\textbf{f}\textbf{f}}+\sigma^2_{noise} I_{m \times m})^{-1}\Sigma_{\textbf{f}\textbf{f}_*})\,. \nonumber 
\end{align}
We refer the readers to \cite{rasmussen2003gaussian} for more background on Gaussian processes.

\subsection{Graph Laplacian and graph based Gaussian process}

The GL is a fundamental tool in spectral graph theory \cite{Fan:1996}.  In this section, we first summarize the GL and then introduce the GL-GP, which uses the spectral structure of GL to define a covariance matrix $\Sigma$ to be used as described in the previous subsection.

Given a dataset $\mathcal{X}=\{x_1, \cdots, x_m, x_{m+1} \cdots, x_{m+n}\} \subset \mathbb{R}^D$, we first define a kernel 
\begin{equation}\label{KK matrix}
k_{\epsilon}(x,x')=\exp\Big(-\frac{\|x-x'\|^2_{\mathbb{R}^D}}{4\epsilon^2}\Big),
\end{equation}
where $\epsilon>0$ is a bandwidth parameter.  Although we can choose a more general kernel within our proposed methodology, we focus on this Gaussian-like choice for simplicity.  This kernel is used to define an $(m+n) \times (m+n)$  affinity matrix $W$ as
\begin{align}\label{W matrix}
W_{ij}&\,:=\frac{k_{\epsilon}(x_i,x_j)}{q_{\epsilon}(x_i) q_{\epsilon}(x_j)}\,,
\end{align}
where $i,j=1,\ldots,m+n$ and $q_{\epsilon}(x):=\sum_{i=1}^{m+n} k_{\epsilon}(x,x_i)$. We construct an $(m+n) \times (m+n)$ diagonal matrix $D$ so that its $i$-th diagonal entry is 
\begin{align}\label{D matrix}
D_{ii}=\sum_{j=1}^{m+n} W_{ij},
\end{align}
and define the row stochastic transition matrix as $A=D^{-1}W.$
Our main quantity of interest is the GL matrix, which is defined as 
\begin{equation}\label{L matrix}
L:=\frac{A-I}{\epsilon^2}\,.
\end{equation}

The affinity matrix $W$ in \eqref{W matrix} is symmetric. Hence $W$ can be considered as the affinity matrix of an undirected complete graph $G=(V, E, W)$, where $V=\{x_i\}_{i=1}^{m+n}$, $E$ consists of edges connecting any pair of points in $V$, and $L$ is the GL over the affinity graph $G$. Moreover, the GL can be viewed as an infinitesimal generator of a random walk on $G$  \cite{Fan:1996}. 

\begin{remark}In the graph theory literature,  $D-W$ is typically called the {\em unnormalized} GL and the matrix $I-A=I-D^{-1}W$ is called the {\em normalized} (or random walk) GL. We call the matrix $L$ the {\em kernel normalized} GL since the affinity is normalized by $q_{\epsilon}(x_i) q_{\epsilon}(x_j)$ in \eqref{W matrix}. 
\end{remark}

\begin{remark}\label{remark diffusion map}
The affinity matrix $W$ in \eqref{W matrix} is also considered in many kernel based-machine learning algorithms, e.g., diffusion maps \cite{coifman2006diffusion}. In \cite{coifman2006diffusion}, the term \eqref{W matrix} is called the {\em $\alpha$-normalization} with $\alpha=1$. The kernel $k_{\epsilon}(x_i,x_j)$ is normalized by $q_{\epsilon}$ to adjust for the non-uniform sampling density. 
\end{remark}

A basic spectral property of $L$ is summarized in the following proposition. 
\begin{proposition}\label{positivity of eigenvalue of -L}
Let $\mu$ be an eigenvalue of $-L$. Then, $\mu$ is real and $0 \leq \mu \leq \frac{1}{\epsilon^2}$. In particular, $0$ is the smallest eigenvalue of $-L$.
\end{proposition}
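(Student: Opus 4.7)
The plan is to reduce the spectral question for $L$ to a spectral question for the row stochastic matrix $A = D^{-1}W$, and then exploit two structural features of $A$: (i) it is similar to a symmetric positive semidefinite matrix, and (ii) it is row stochastic. Concretely, $\mu$ is an eigenvalue of $-L$ iff $\lambda = 1 - \epsilon^2\mu$ is an eigenvalue of $A$, so the target inequality $0 \leq \mu \leq 1/\epsilon^2$ is equivalent to $0 \leq \lambda \leq 1$, with the eigenvalue $\mu=0$ corresponding to $\lambda = 1$.

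First I would show reality of the spectrum. Since $W$ is symmetric, $A = D^{-1}W = D^{-1/2}\bigl(D^{-1/2}WD^{-1/2}\bigr)D^{1/2}$, so $A$ is similar to the symmetric matrix $S := D^{-1/2}WD^{-1/2}$. Hence $A$ has real eigenvalues, and so does $L$.

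Next I would establish $\lambda \geq 0$ (equivalently $\mu \leq 1/\epsilon^2$) by showing $S$ is positive semidefinite. Writing $Q = \mathrm{diag}(q_\epsilon(x_i))$ and $K_{ij} = k_\epsilon(x_i,x_j)$, we have $W = Q^{-1} K Q^{-1}$, so $S = D^{-1/2}Q^{-1}KQ^{-1}D^{-1/2} = B^\top K B$ with $B := Q^{-1}D^{-1/2}$ (a diagonal, hence symmetric, matrix). Since the Gaussian kernel $k_\epsilon$ is positive definite as a reproducing kernel, $K$ is positive semidefinite, hence so is $S$, giving $\lambda \geq 0$.

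For the complementary bound $\lambda \leq 1$ (equivalently $\mu \geq 0$), I would invoke the row stochasticity of $A$: by construction $\sum_j A_{ij} = 1$ for every $i$, so the all-ones vector $\mathbf{1}$ satisfies $A\mathbf{1} = \mathbf{1}$, showing $1$ is an eigenvalue; moreover, if $Av = \lambda v$ with $v \neq 0$ and $i_0$ an index maximizing $|v_{i_0}|$, then $|\lambda||v_{i_0}| = \bigl|\sum_j A_{i_0 j} v_j\bigr| \leq \sum_j A_{i_0 j} |v_{i_0}| = |v_{i_0}|$, giving $|\lambda| \leq 1$. Combined with reality and non-negativity, this forces $\lambda \in [0,1]$, so $\mu \in [0,1/\epsilon^2]$, and $\mu = 0$ (corresponding to $A\mathbf{1}=\mathbf{1}$) is attained and is the smallest.

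The only step that requires genuine thought is the positive semidefiniteness of $S$; everything else is a direct consequence of symmetry of $W$ and row stochasticity of $A$. I would expect the cleanest writeup to isolate the identity $S = B^\top K B$ and appeal to positive definiteness of the Gaussian kernel, which is standard.
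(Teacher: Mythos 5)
Your proof is correct and follows essentially the same route as the paper: reduce to the spectrum of $A$, use similarity to $D^{-1/2}WD^{-1/2}$ together with positive semidefiniteness of the Gaussian Gram matrix $K$ to get nonnegativity and reality, and use row stochasticity to bound the top eigenvalue by $1$. Your only cosmetic difference is spelling out the max-index argument for $|\lambda|\leq 1$ rather than citing that the spectral norm is bounded by the $\ell^\infty$ operator norm, which is the same fact.
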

\begin{proof}
It is sufficient to show that the eigenvalues of $A$ are real and between $0$ and $1$ with $1$ the largest eigenvalue of $A$. Let $K_{ij}=k_{\epsilon}(x_i,x_j)$. Then, $K$ is a symmetric matrix generated by the Gaussian. By Bochner's theorem, $K$ is positive semidefinite. If $Q$ is the diagonal matrix with $Q_{ii}=\sum_{j=1}^{m+n} k_{\epsilon}(x_i,x_j)$, then $W=Q^{-1}WQ^{-1}$ is positive semidefinite.  Let $\tilde{A} = D^{-1/2}W D^{-1/2}$, which is a symmetric matrix. Hence, the eigenvalues of $\tilde{A}$ are real and $\tilde{A}$ is positive semidefinite. Since $A$ is similar to $\tilde{A}$ through $A=D^{-1/2}\tilde{A} D^{1/2}$, the eigenvalues of $A$ and $\tilde{A}$ are the same. Finally, since $A$ is row stochastic, the largest eigenvalue of $A$ is $1$ by the fact that the spectral norm is bounded by the $L^\infty$ norm of $A$. 
\qed
\end{proof}

Suppose the dataset  $\mathcal{X}=\{x_1, \cdots, x_m, x_{m+1} \cdots, x_{m+n}\}$ lies within a subset $S$ of $\mathbb{R}^D$ and we construct the GL matrix $L$ based on $\mathcal{X}$ as in \eqref{L matrix}. 
Suppose the eigenpairs of the GL over the affinity graph $G$ are 
$(-\mu_{i,m+n,\epsilon},v_{i,m+n,\epsilon})$.
Denote $\tilde{v}_{i,m+n,\epsilon}$ to be the eigenvector associated with the eigenvalue $\mu_{i,m+n,\epsilon}$ of $-L$ normalized in the $\ell^2$ norm, where $i=0,\ldots, m+n-1$. 
By Proposition \ref{positivity of eigenvalue of -L}, we order $\mu_{i,m+n,\epsilon}$ so that $0 = \mu_{0 ,m+n,\epsilon} < \mu_{1,m+n,\epsilon}\leq \ldots\leq \mu_{m+n-1 ,m+n,\epsilon}$, where $\mu_{0 ,m+n,\epsilon}<\mu_{1 ,m+n,\epsilon}$ since the graph is connected. 
Then, we define 
\begin{align}\label{GL-GP matrix}
\mathsf H_{\epsilon,K,t}=(m+n) \sum_{i=0}^{K-1} e^{-\mu_{i,m+n,\epsilon} t} \tilde{v}_{i,m+n,\epsilon} \tilde{v}^\top_{i,m+n,\epsilon}\in \mathbb{R}^{(m+n)\times (m+n)}\,,
\end{align}
to be the covariance matrix for  GP regression over $\mathcal{X}$.  

Since $\mathsf H_{\epsilon,K,t}$ is constructed from the GL, we refer to a GP with the covariance (\ref{GL-GP matrix}) as the GL-GP. We will show later that the GL-GP covariance matrix in \eqref{GL-GP matrix} is an approximation to the GL-GP covariance function on $S \times S$ and it is associated with the kernel of a compact self-adjoint integral operator over $L^2(S)$. The total dimension of the eigenspaces corresponding to the non-zero eigenvalues of the intergal operator is $K$. We will also show that the eigenvalue $e^{-\mu_{i,m+n,\epsilon} t}$ is an approximation to the $i$-th largest eigenvalue of the integral operator and $\sqrt{m+n}\tilde{v}_{i,m+n,\epsilon}$  is an approximation to the corresponding eigenfunctions normalized in the $L^2$ norm. We will discuss more details of formulation \eqref{GL-GP matrix} after we define the GL-GP covariance function on $S \times S$. 

For illustration, we consider a two balloons example in $\mathbb{R}^3$.  In this case, $S$ is a set with singularities consisting of $2$ dimensional spheres and $1$ dimensional line segments. We sample $2530$ points on $S$ as the dataset $\mathcal{X}$. Through this toy example, we discuss how the parameters $\epsilon$, $K$ and $t$ in the GL-GP covariance matrix $\mathsf H_{\epsilon,K,t}$ are related to the geometric structure of  set $S$. We also compare the GL-GP covariance and the GP with squared exponential covariance:
\begin{align}\label{normal GP}
\mathsf{C}(x, x')=A \exp\Big(-\frac{\|x-x'\|^2_{\mathbb{R}^D}}{\rho^2}\Big).
\end{align}

\begin{figure}[htb!]
\centering
\includegraphics[width=1 \textwidth]{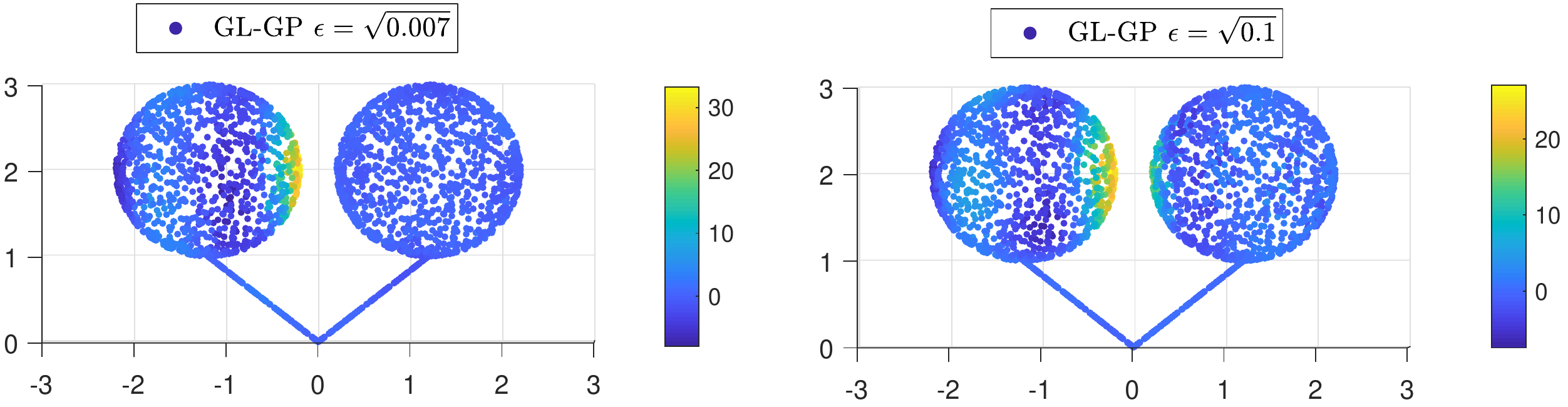}
\caption{When $K=38$, $t=0.01$ in $\mathsf H_{\epsilon,K,t}$,  we plot the row of the GL-GP covariance matrix corresponding to a point in the dataset over the $2530$ sampled points on $S \subset \mathbb{R}^3$ for $\epsilon=\sqrt{0.007}$ and $\epsilon=\sqrt{0.1}$ respectively.}\label{DBGP_differenteps}
\end{figure}

The parameter $\epsilon$ in the GL should be large enough so that there are sufficient numbers of points in an $\epsilon$-sized Euclidean neighborhood around $x$ to obtain information about the local geometry but not so large as to include points that are not close to $x$ in intrinsic distance within $S$.  
In the two balloons example, we focus on a point $x_i$ located on the right edge of the left balloon; this point is close in Euclidean distance to points on the left edge of the right balloon.  We plot the covariance $\mathsf H_{\epsilon,K,t}$ between this point and the other points in Figure \ref{DBGP_differenteps} letting $K=38$ and $t=0.01$.  In the left panel, $\epsilon= \sqrt{0.007}$, which is an appropriate size to define an intrinsic neighborhood on the set, while in the right panel $\epsilon=\sqrt{0.1}$, which bridges the gap and leads to inappropriate covariance across balloons.
The parameter $K$ controls fluctuations in the GL-GP covariance. When $K$ is larger, higher frequency oscillations are considered in constructing the GL-GP covariance. However, due to the factor $e^{-\mu_{i,m+n,\epsilon} t}$ in the covariance matrix, the amplitudes of those high frequency oscillations are relatively small. In Figure \ref{DBGP_differentK}, we plot the covariance relative to the same $x$ as in the previous figure but for different choices of $K$.  When $K=8$ the covariance decreases monotonically with increasing intrinsic distance from $x_i$, while for $K=38$ there are oscillations with small amplitudes contributing to non-monotonicity.  Finally, the diffusion time $t$ controls the rate of decrease in the covariance as the intrinsic distance increases.  Figure \ref{comparisonDBGPnGP} shows the impact of varying $t$ on the covariance relative to the impact of varying the bandwidth in the squared exponential covariance.


\begin{figure}[htb!]
\centering
\includegraphics[width=1 \textwidth]{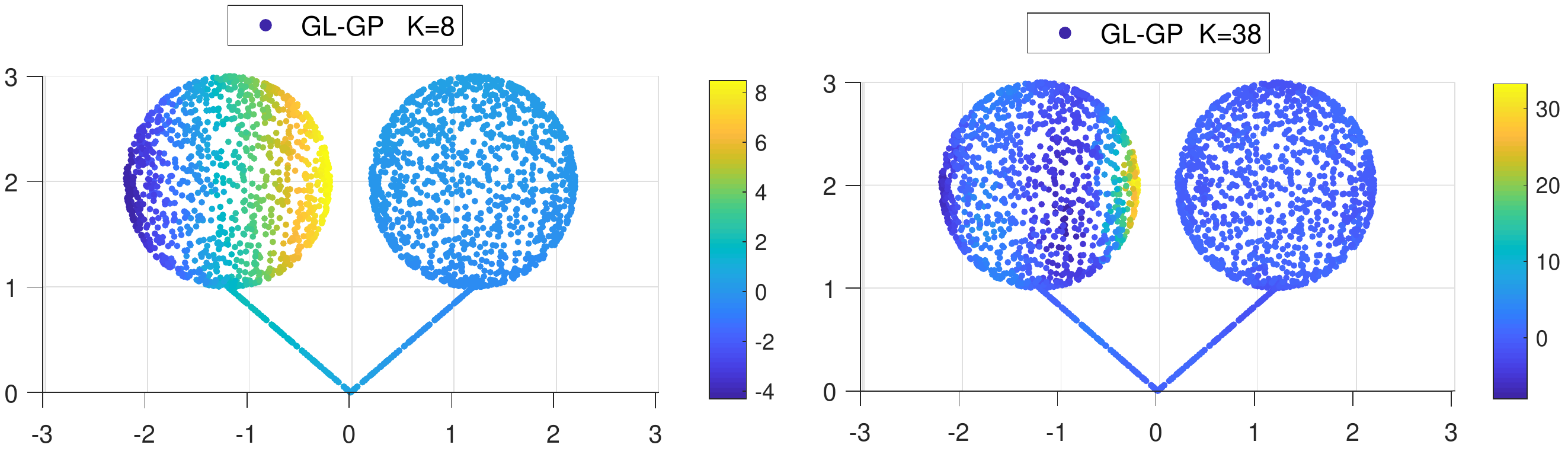}
\caption{When $\epsilon=\sqrt{0.007}$, $t=0.01$ in $\mathsf H_{\epsilon,K,t}$,  we plot the row of GL-GP covariance matrix corresponding to a point in the dataset over the $2530$ sampled points on $S \subset \mathbb{R}^3$ for $K=8$ and $K=38$ respectively.}\label{DBGP_differentK}
\end{figure}

\begin{figure}[htb!]
\centering
\includegraphics[width=1 \textwidth]{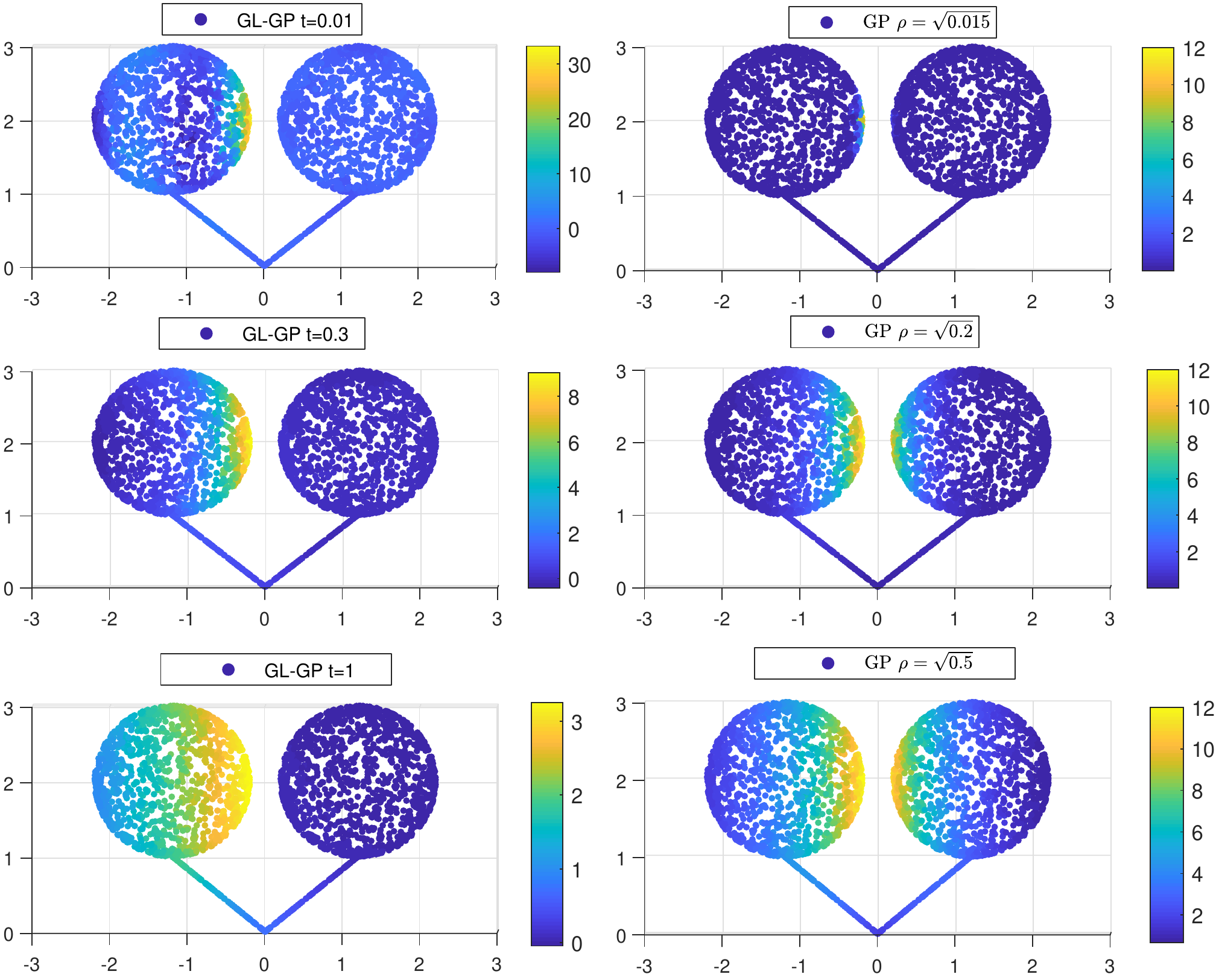}
\caption{When $\epsilon=\sqrt{0.007}$, $K=38$ in $\mathsf H_{\epsilon,K,t}$,  we plot a row of GL-GP covariance matrix corresponding to a point in the dataset over the $2530$ sampled points on $S \subset \mathbb{R}^3$ for $t=0.01$, $t=0.3$ and $t=1$ respectively in the three panels on the left. In comparison, we plot the same row of the sqaure exponential covariance matrix over the $2530$ sampled points on $S \subset \mathbb{R}^3$ induced by \eqref{normal GP} for $A=12$ and $\rho=\sqrt{0.015}$, $\rho=\sqrt{0.2}$ and $\rho=\sqrt{0.5}$ respectively in the three panels on the right. }\label{comparisonDBGPnGP}
\end{figure}

From the above discussion, the parameters $t$, $\epsilon$ and $K$ in the GL-GP covariance have an important impact on the performance.  Let 
$
\mathsf H_{\epsilon,K,t}=\begin{bmatrix}
\mathsf A & \mathsf B \\
\mathsf C & \mathsf D \\
\end{bmatrix},
$
where $\mathsf A$ is an $m \times m$ matrix. We propose to estimate these parameters by maximizing the log of the marginal likelihood, 
\begin{align}\label{marginal likelihood}
\log p(\textbf{y}|\epsilon, K, t, \sigma_{noise}) = -\textbf{y}^\top (\mathsf A+\sigma^2_{noise} I_{m \times m})^{-1}\textbf{y}-\log(\det(\mathsf A+\sigma^2_{noise} I_{m \times m}))-\frac{m}{2}\log(2\pi).
\end{align}
An identical strategy is used routinely in the literature for estimating GP covariance parameters \cite{rasmussen2003gaussian}. The parameters $\epsilon$ and $K$ are related to the GL over the dataset, while the parameters $t$ and $\sigma_{noise}$ are not.  To maximize \eqref{marginal likelihood}, we propose to alternate between a grid search for $\epsilon$ and $K$ and gradient descent for $t$ and $\sigma_{noise}$.

\subsection{Nystr\"{o}m extension of the GL-GP covariance matrix}\label{Nystrom subsection}
Suppose the dataset  $\mathcal{X}=\{x_1, \cdots, x_m, x_{m+1} \cdots, x_{m+n}\}\subset S\subset\mathbb{R}^D$, $L$ is the GL constructed from $\mathcal{X}$ following \eqref{L matrix}, and $(\mu_{i,m+n,\epsilon}, \tilde{v}_{i,m+n,\epsilon})$ is the $i$-th eigenpair of $-L$ with $\tilde{v}_{i,m+n,\epsilon}$ normalized in $\ell^2$.   Then, construct the GL-GP covariance matrix  $\mathsf H_{\epsilon,K,t}$ as in \eqref{GL-GP matrix}. Now, suppose we have $\ell$ additional samples $\{x_{m+n+1}, \cdots, x_{m+n+\ell}\}$ on $S$ and set $\mathcal{X}^*=\{x_1, \cdots, x_{m+n}, x_{m+n+1}, \cdots, x_{m+n+\ell}\}$. In this subsection, we propose a computationally efficient Nystr\"{o}m extension algorithm to find an extension of $\mathsf H_{\epsilon,K,t}$ over $\mathcal{X}^*$ without needing to rerun the whole GL-GP algorithm.

The idea is using the full knowledge of the kernel to construct an extension of the eigenvector $\tilde{v}_{i,m+n,\epsilon}$ to $\mathcal{X}^*$. Recall that $q_{\epsilon}(x):=\sum_{i=1}^{m+n} k_{\epsilon}(x,x_i)$. Define the extension matrix $E \in \mathbb{R}^{(m+n+\ell) \times (m+n)}$ by
\begin{align}\label{extension matrix}
E_{ij}:=\frac{ \frac{k_{\epsilon}(x_i,x_j)}{q_{\epsilon}(x_i) q_{\epsilon}(x_j)}}{\sum_{j=1}^{m+n} \frac{k_{\epsilon}(x_i,x_j)}{q_{\epsilon}(x_i)q_{\epsilon}(x_j)}}, 
\end{align}
where $x_i \in \mathcal{X}^*$ and $x_j \in \mathcal{X}$. We have the following immediate proposition.

\begin{proposition}\label{extension of the eigenvector}
For $x_j \in \mathcal{X}$, we have $\frac{1}{1-\epsilon^2\mu_{i,m+n,\epsilon}}E \tilde{v}_{i,m+n,\epsilon}(j)=\tilde{v}_{i,m+n,\epsilon}(j)$.
\end{proposition}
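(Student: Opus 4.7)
The plan is to recognize that when the extension matrix $E$ is restricted to the rows indexed by the original dataset $\mathcal{X}$, it coincides exactly with the transition matrix $A = D^{-1}W$ used in the construction of the graph Laplacian $L$. Once that identification is made, the claimed identity is just a rewriting of the eigenvector equation for $-L$.

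Concretely, I would first inspect the definition \eqref{extension matrix} of $E_{ij}$. For any $x_i \in \mathcal{X}$ (viewed as a row of $E$ inside $\mathcal{X}^*$) and $x_j \in \mathcal{X}$, the numerator of $E_{ij}$ is $W_{ij}$ as defined in \eqref{W matrix}, while the denominator is
\[
\sum_{j=1}^{m+n} \frac{k_{\epsilon}(x_i,x_j)}{q_{\epsilon}(x_i)q_{\epsilon}(x_j)} \;=\; \sum_{j=1}^{m+n} W_{ij} \;=\; D_{ii},
\]
by \eqref{D matrix}. Hence the top $(m+n)$ rows of $E$ form exactly the matrix $A = D^{-1}W$.

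Next I would use the defining relation $L = (A-I)/\epsilon^2$ from \eqref{L matrix}. Since $\tilde{v}_{i,m+n,\epsilon}$ is a unit-norm eigenvector of $-L$ with eigenvalue $\mu_{i,m+n,\epsilon}$, we have
\[
(I-A)\,\tilde{v}_{i,m+n,\epsilon} \;=\; \epsilon^2\mu_{i,m+n,\epsilon}\,\tilde{v}_{i,m+n,\epsilon},
\]
and therefore $A\tilde{v}_{i,m+n,\epsilon} = (1-\epsilon^2\mu_{i,m+n,\epsilon})\tilde{v}_{i,m+n,\epsilon}$. Restricting to a coordinate $j$ with $x_j \in \mathcal{X}$ and using the identification of $E$ with $A$ on those rows gives
\[
(E\tilde{v}_{i,m+n,\epsilon})(j) \;=\; (A\tilde{v}_{i,m+n,\epsilon})(j) \;=\; (1-\epsilon^2\mu_{i,m+n,\epsilon})\,\tilde{v}_{i,m+n,\epsilon}(j),
\]
so dividing by $1-\epsilon^2\mu_{i,m+n,\epsilon}$ yields the claim.

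There is essentially no obstacle here; the only subtlety worth flagging is the implicit assumption that $1-\epsilon^2\mu_{i,m+n,\epsilon} \neq 0$, which by Proposition \ref{positivity of eigenvalue of -L} holds unless $\mu_{i,m+n,\epsilon}$ attains the upper extreme $1/\epsilon^2$ — a degenerate situation that in practice is avoided by the choice of $\epsilon$ and $K$, and can be excluded from the statement. The proof is a direct book-keeping verification rather than a deep argument, which makes it a suitable justification for the Nystr\"{o}m extension recipe to be introduced next.
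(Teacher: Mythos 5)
Your proof is correct and matches the paper's argument essentially verbatim: both identify the top $(m+n)$ rows of $E$ with $A=D^{-1}W$ and then invoke the eigenvector equation $A\tilde{v}_{i,m+n,\epsilon}=(1-\epsilon^2\mu_{i,m+n,\epsilon})\tilde{v}_{i,m+n,\epsilon}$. Your additional remark on the non-vanishing of $1-\epsilon^2\mu_{i,m+n,\epsilon}$ is a reasonable extra caution that the paper leaves implicit.
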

\begin{proof}
Observe that $A_{ij}=E_{ij}$ for $1 \leq i,j \leq m+n$. Since $\tilde{v}_{i,m+n,\epsilon}$ is the eigenvector of $A$ corresponding to the eigenvalue $1-\epsilon^2\mu_{i,m+n,\epsilon}$, for $1 \leq j \leq m+n$, we have 
$\frac{1}{1-\epsilon^2\mu_{i,m+n,\epsilon}}E \tilde{v}_{i,m+n,\epsilon}(j)=\frac{1}{1-\epsilon^2\mu_{i,m+n,\epsilon}}A \tilde{v}_{i,m+n,\epsilon}(j)=\tilde{v}_{i,m+n,\epsilon}(j)$. \qed
\end{proof}

This proposition says that $\frac{1}{1-\epsilon^2\mu_{i,m+n,\epsilon}}E \tilde{v}_{i,m+n,\epsilon}$ is an extension of the eigenvector $\tilde{v}_{i,m+n,\epsilon}$ from $\mathcal{X}$ to $\mathcal{X}^*$. With the parameters $\epsilon$, $K$, and $t$, the Nystrom extension gives
\begin{align}\label{DBGP covariance matrix extension}
\mathsf H^*_{\epsilon,K,t}:=E \left(\sum_{i=0}^{K-1} \frac{e^{-\mu_{i,m+n,\epsilon} t}}{(1-\epsilon^2\mu_{i,m+n,\epsilon})^2} \tilde{v}_{i,m+n,\epsilon} \tilde{v}^\top_{i,m+n,\epsilon}\right) E^\top\in \mathbb{R}^{(m+n+\ell)\times (m+n+\ell)},
\end{align}
which is an extension of the GL-GP covariance matrix $\mathsf H_{\epsilon,K,t}$ over $\mathcal{X}^*$. Based on the definition of the extension matrix $E$, for $1 \leq i,j \leq l$, $\mathsf H^*_{\epsilon,K,t}(m+n+i, m+n+j)$ only depends on  $\mathcal{X}$, $x_{m+n+i}$ and $x_{m+n+j}$ and not on the remaining samples.  Hence, it will not change when further samples are added.

We will justify the error in the Nystr\"{o}m extension of the GL-GP covariance structure and the error in the prediction after we introduce the GL-GP covariance function in Theorem \ref{Nystrom rate main theorem}. A simulation result of the Nystr\"{o}m extension is provided in Section \ref{simulation nystrom} of the appendices.

\begin{remark}
The Nystr\"{o}m extension can be used to induce a covariance function for the GL-GP.
 For any $x,y \in S$, let  $\mathcal{X}=\{x_1, \cdots, x_{m+n}\}$ and $\mathcal{X}^*=\{x_1, \cdots, x_{m+n}, x,y\}$. Construct the Nystr\"{o}m extension matrix $\mathsf H^*_{\epsilon,K,t}$ based on $\mathcal{X}^*$ and define $\mathsf{C}_{\epsilon,K}(x, y, t)$ to be the corresponding entry in the matrix. Such $\mathsf{C}_{\epsilon,K}(x, y, t)$ shares the property that the restriction over  $\mathcal{X}$ is equal to the covariance matrix.  However, this definition relies on the samples $\mathcal{X}$. Later, we will introduce a more natural way to define the covariance function that is independent of samples.
\end{remark}

\section{Applications}\label{simulations}
In this section, we apply our GL-GP approach in three different examples.  In all cases, we compare the GL-GP with the GP with the covariance  \eqref{normal GP}. The first case is the two balloons example. The second case is a spiral, which is a compact manifold with boundary. The third case is a complicated $2$ dimensional compact manifold with boundary coming from the 3D scan of a human's right hand.

\subsection{GL-GP on two balloons}
We have two unit spheres $S_1$ and $S_2$ centered at $(1.2, 0, 2)$ and $(-1.2, 0,2)$ respectively. We connect the south poles $(1.2, 0, 1)$ and $(-1.2, 0,1)$ to the origin by two line segments $L_1$ and $L_2$; $S=S_1 \cup S_2 \cup L_1 \cup L_2$. Globally, $S$ does not have a manifold structure, and the Hausdorff dimension of $S$ is $2$. 

We sample $30$ points from a uniform density on one of the spheres and sample $3$ points from a uniform on the attached line segment.
We then find the $33$ points symmetric to the sampled points on the other sphere and line segment. Those $66$ points are the labeled data points $\{x_{1}, \ldots, x_{66}\}$.  $2200$ unlabeled points $\{x_{67}, \ldots, x_{2266}\}$ are sampled in the same way with $2000$ points on the spheres and $200$ points on the line segment. The labels are sampled via \eqref{eq:base} with $\sigma_{noise}=1$ and $f(x) = 5 \times d_S(x,a)$, for $x \in S$, with $a=(-1.2,0,3)$ the north pole of $S_2$ and $d_S$ denoting the distance metric for the space $S$. The distance on $S$ between two points on the same sphere is the length of the shorter part of the great circle connecting those points. The distance between a point $x$ on a sphere and a point $x'$ on the segment attached to the sphere is the sum of the distance on $S$ between $x$ to the attaching point and the Euclidean distance between the attaching point to $x'$.  Hence, the formula for $d_S(x,a)$ is
\begin{align}
d_S(x,a)= \begin{cases}
\arccos \big([x-(-1.2, 0, 2)] \cdot (0,0,1)\big) &\text{if $x \in S_2$,}\\
\pi+\|x-(-1.2, 0, 1)\|_{\mathbb{R}^3} &\text{if $x \in L_2$,} \\
\pi+ \|(1.2, 0, 1)\|_{\mathbb{R}^3}+\|x\|_{\mathbb{R}^3}  &\text{if $x \in L_1$,} \\
\pi+2 \|(1.2, 0, 1)\|_{\mathbb{R}^3} + \arccos \big([x-(1.2, 0, 2)] \cdot (0,0,-1)\big)  &\text{if $x \in S_1$.}
\end{cases}
\end{align}
We plot the labels over $x_i$ for $i=1,\ldots,66$ and the true values in the top two panels in Figure \ref{two spheres comparison}.

We use both the GL-GP and the GP with square exponential covariance to predict the response values for the unlabeled data.  In both cases, covariance parameters are chosen by maximizing the marginal likelihood.  We calculate the root mean square error (RMSE) relative to the true value of the regression function at the unlabeled data points. Maximizing \eqref{marginal likelihood}, the parameters are $K=30$, $\epsilon =\sqrt{0.012}$, $t=0.33$ and $\sigma_{noise}=\sqrt{0.9}$, leading to an RMSE $0.721$. For the GP, we obtain $A=250$, $\rho=\sqrt{0.54}$, $\sigma_{noise}=\sqrt{0.9}$ and an RMSE $1.812$. Figure \ref{two spheres comparison} compares the prediction by the GL-GP and the GP over $x_i$, for $i=67, \cdots, 2266$. For better visualization, in Figure \ref{differencetwospheres}, we compare the difference between the prediction and the ground truth over $x_i$, for $i=67, \cdots, 2266$.  The GL-GP performs better over the regions indicated in the boxes and their symmetric parts on the other sphere. Since the square exponential covariance in the GP tries to smooth the predictive values between region $1$ and its symmetric part, the prediction in region $1$ is lower than the true value. 

\begin{figure}[htb!]
\centering
\includegraphics[width=.7 \columnwidth]{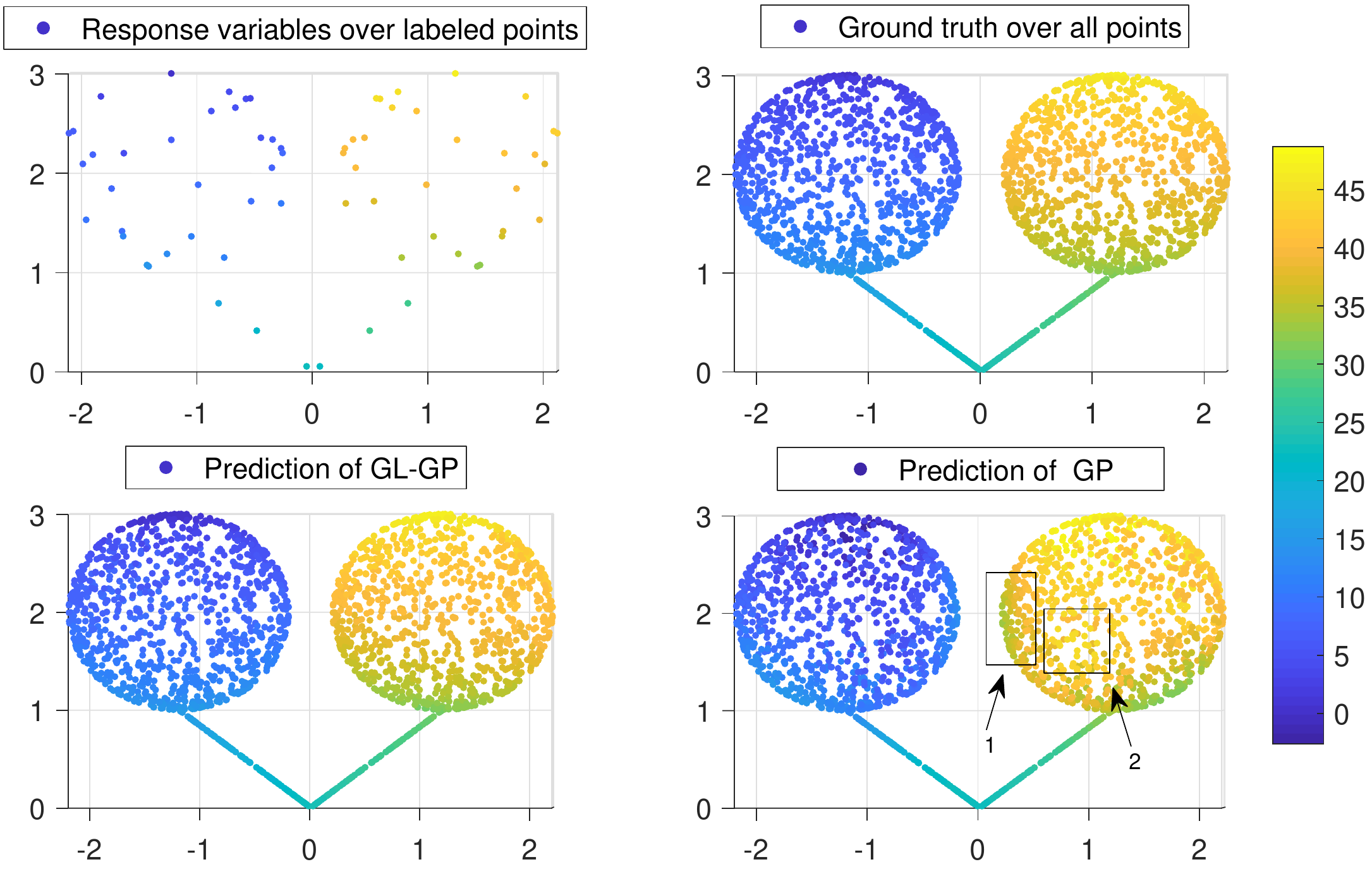}
\caption{The true regression function is $5$ times the distance of a point to the north pole of the sphere on the left. Let $\{x_1, \dots, x_{66}\}=$ labeled points and $\{x_{67}, \dots, x_{2266}\}=$ unlabeled points. Top left:  response variables over labeled points. Top right: ground truth over all points. Bottom left: prediction of GL-GP over unlabeled points with $RMSE=0.721$.  Bottom right:  prediction of (squared exponential Euclidean) GP over unlabeled points with $RMSE=1.812$. }\label{two spheres comparison}
\end{figure}

\begin{figure}[htb!]
\centering
\includegraphics[width=0.7 \columnwidth]{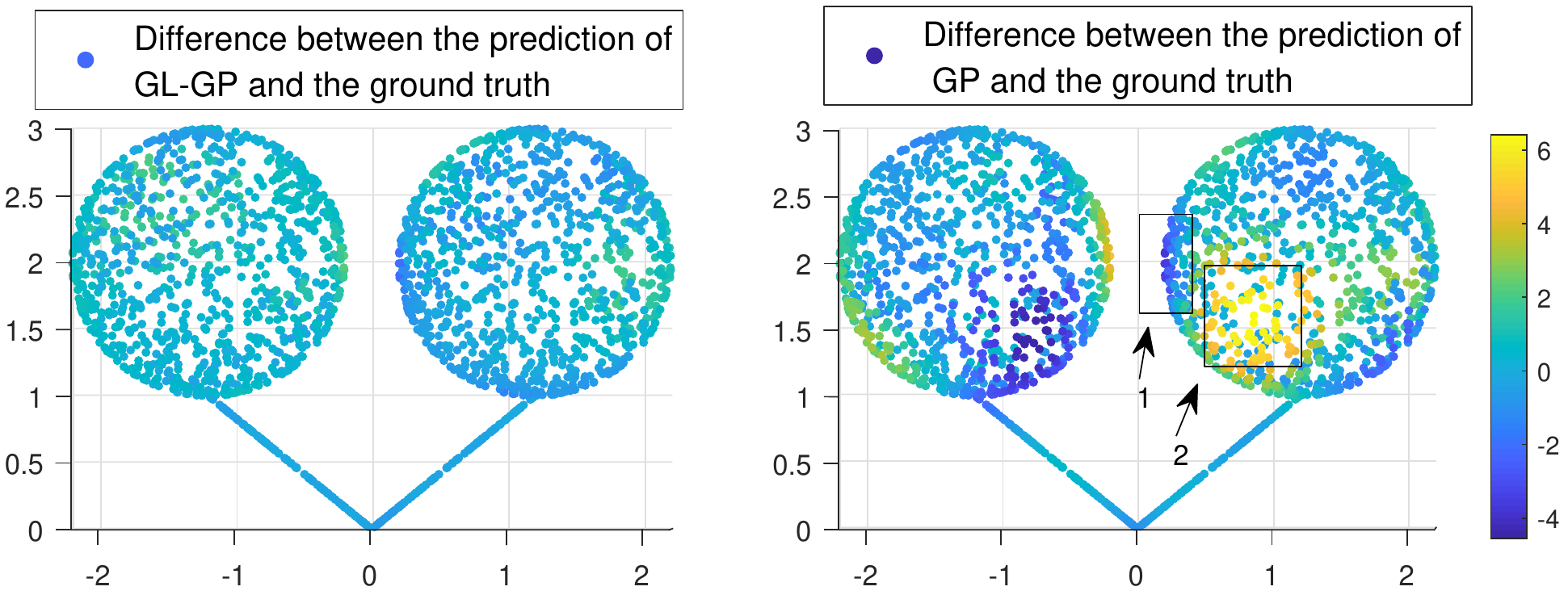}
\caption{Left: difference between prediction of GL-GP and ground truth over $\{x_i\}$, $i=67, \cdots, 2266$. Right: difference between prediction of GP and ground truth over $\{x_i\}$, for $i=67, \cdots, 2266$. }\label{differencetwospheres}
\end{figure}

\subsection{Spiral case}
We consider a spiral $M$ embedded in $\mathbb{R}^{2}$ parametrized by 
$$\gamma(\theta)=((\theta+4)^{0.7}\cos(\theta), (\theta+4)^{0.7}\sin(\theta)) \in \mathbb{R}^{2},\mbox{ where }\theta \in [0, 8\pi).$$ 
We sample $60$ labeled points $\{\theta_1, \ldots,\theta_{60}\}$ and $1500$ unlabeled points $\{\theta_{61}, \ldots, \theta_{1560}\}$ on $[0 ,8\pi)$ from the uniform density.  Let $x_i=\gamma(\theta_i)$ for $i=1,\ldots,1560$. The labels are sampled under \eqref{eq:base} with $\sigma_{noise}=1$ and 
$f(\gamma(\theta)) =3\sin\big(\frac{\theta}{10}\big) +3\cos\big(\frac{\theta}{2}\big)+4\sin\big(\frac{4\theta}{5}\big).$
We plot the labels over $x_i$ for $i=1, \ldots, 60$,  and the ground truth in the top two panels in Figure \ref{spiral comparison}.
Maximizing the marginal likelihoods, we obtain  $K=9$, $\epsilon =\sqrt{0.1}$, $t=0.02$ and $\sigma_{noise}=\sqrt{1.3}$ for the GL-GP, 
leading to an RMSE of $0.853$. For the GP with covariance  \eqref{normal GP}, we obtain $A=17$, $\rho=\sqrt{2.2}$, $\sigma_{noise}=\sqrt{0.7}$ and an RMSE of $1.967$. The bottom two panels in Figure \ref{spiral comparison} show the predictions of the two different approaches. For better visualization, in Figure \ref{spiral comparison theta}, we plot the the predictions over the parameter $\theta_i$. The GL-GP greatly improves predictive performance. As an example, we provide an analysis over region 1.  When we ignore the intrinsic geometry of the spiral, based on the response variables over the labeled data, there is a potential increase along the direction outward over region 1. Moreover, for the points that are close to region 1 in  Euclidean distance on the inner neighbor arc, the labels are negative and large in magnitude. Hence, the prediction by the GP over region 1 is negative.

\begin{figure}[htb!]
\centering
\includegraphics[width=.7\columnwidth]{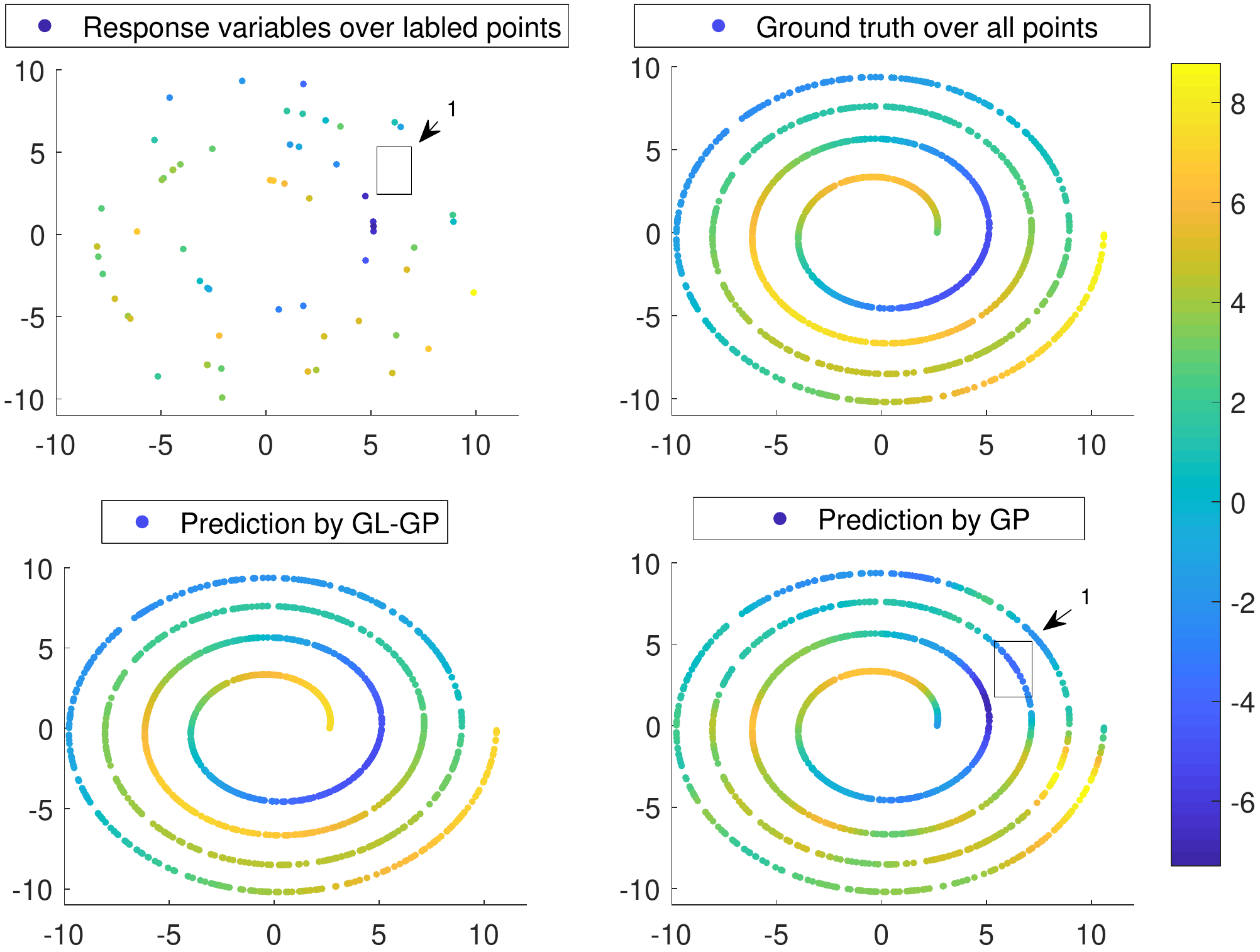}
\caption{ Let $\{x_1, \dots, x_{60}\}=$ labeled points and $\{x_{61}, \dots, x_{1560}\}=$ unlabeled points. Top left: response variables over labeled points. Top right: ground truth over all points. Bottom left: prediction of GL-GP over unlabeled points with $RMSE=0.853$.  Bottom right: prediction of (squared exponential Euclidean) GP over the unlabeled points with $RMSE=1.967$.}. \label{spiral comparison}
\end{figure}

\begin{figure}[htb!]
\centering
\includegraphics[width=1 \columnwidth]{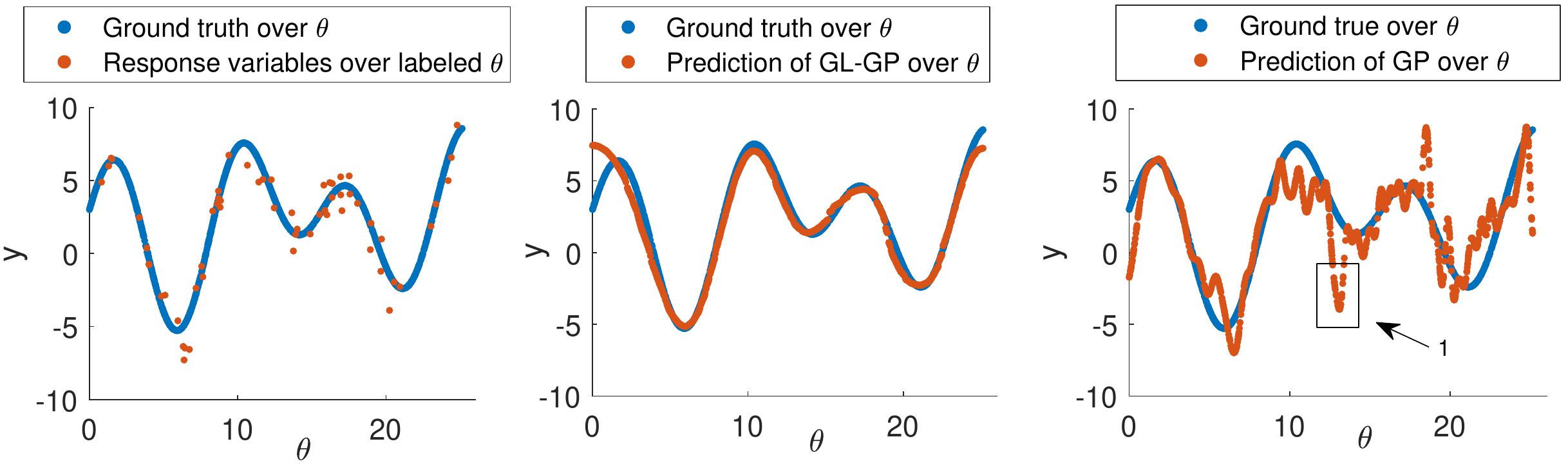}
\caption{Let $\{\theta_1, \dots, \theta_{60}\}=$ labeled inputs and $\{\theta_{61}, \dots, \theta_{1560}\}=$ unlabeled inputs. Left: red points are response variables for labeled inputs, and 
blue points are ground truth for all $\{\theta_i\}$, $i=1, \cdots, 1560$.  Middle: red points are prediction of GL-GP over $\{\theta_i\}$ for $i=61, \cdots, 1560$. Right: red points are prediction of GP over $\{\theta_i\}$ for $i=61, \cdots, 1560$.}. \label{spiral comparison theta}
\end{figure}

\subsection{Temperature distribution on the right hand of a Raynauld's disease patient}
The 3D surface scan of a human hand is a 2 dimensional compact manifold with boundary isometrically embedded in $\mathbb{R}^3$.  
The dataset in this example is a discrete version of a scan of a right hand in \cite{romero2017embodied} and consists of $1950$  points. There are $50$ labeled points $\{x_i\}_{i=1}^{50}$ among which $25$ points are on the fingers and $25$ points on the palm. The remaining points $\{x_i\}_{i=51}^{1950}$ are unlabeled. Let $\mathcal{X}=\{x_i\}_{i=1}^{1950}$. For $x_i,x_j \in \mathcal{X}$, $W_{ij}=\exp\Big(-\frac{\|x_i-x_j\|^2_{\mathbb{R}^3}}{0.012}\Big) \in \mathbb{R}^{1950 \times 1950}$. Define a diagonal matrix $D_{ii}=\sum_{j=1}^{1950} W_{ij}$. Let $L_{un}=D-W$ denote the unnormalized GL, which is different from the GL defined in the previous section.  Let $v_3$, $v_{6}$, $v_{13}$ be the eigenvectors normalized in $\ell^2$ corresponding to the $3$rd, $6$th and $13$th smallest eigenvalues of $L_{un}$.  The labels are sampled under equation~\eqref{eq:base} with $\sigma_{noise}=0.02$ and 
\begin{align}\label{equation of ground truth example 3}
f(x_i) =8.118 v_3(i)+2.46 v_6(i)+0.738 v^+_{13}(i) +32.2\,,
\end{align}  where $v^+_{13}(i)=v_{13}(i)$ if $v_{13}(i)>0$ and $v^+_{13}(i)=0$ if $v_{13}(i) \leq 0$.
The function $f$ is used to simulate the temperature distribution over the samples. We plot the labels over $x_i$ for $i=1, \ldots, 50$,  and the ground truth viewed from the back of the right hand in Figure \ref{3Dhandgroundtruth}. 

Maximizing the marginal likelihood, we obtain  $K=11$, $\epsilon =\sqrt{0.0015}$, $t=0.002$ and $\sigma_{noise}=\sqrt{0.004}$ for the GL-GP, leading to an RMSE $0.062$. For the GP with squared exponential covariance, the optimal parameters are $A=405$, $\rho=\sqrt{0.0145}$, $\sigma_{noise}=\sqrt{0.012}$, which leads to an RMSE $0.169$. Figure \ref{3Dhandcomparison} compares the prediction by GL-GP and GP over $\{x_i\}$, for $i=51, \cdots, 1950$. For better visualization, in Figure \ref{3Dhanddifference}, we compare the difference between the prediction and ground truth over $\{x_i\}$, for $i=51, \cdots, 1950$.  The GL-GP performs better over the regions indicated in the boxes. In particular, in region 3, since the squared exponential covariance in the GP tries to smooth the predictive values between the ring finger and the little finger, the prediction in region $3$ is lower than the true value. Based on the labeled information, there is a potential decrease along the index finger from the bottom part to the tip.  However, there is a faster decrease from the middle finger to the index finger. The  Euclidean distance term in the covariance of the GP reflects such a fast decrease; as a result, the predictive values of the GP are higher over region 2 and lower on the tip of the index finger. Similar explanation can be applied to regions 1 and 4.

\begin{figure}[htb!]
\centering
\includegraphics[width=1 \columnwidth]{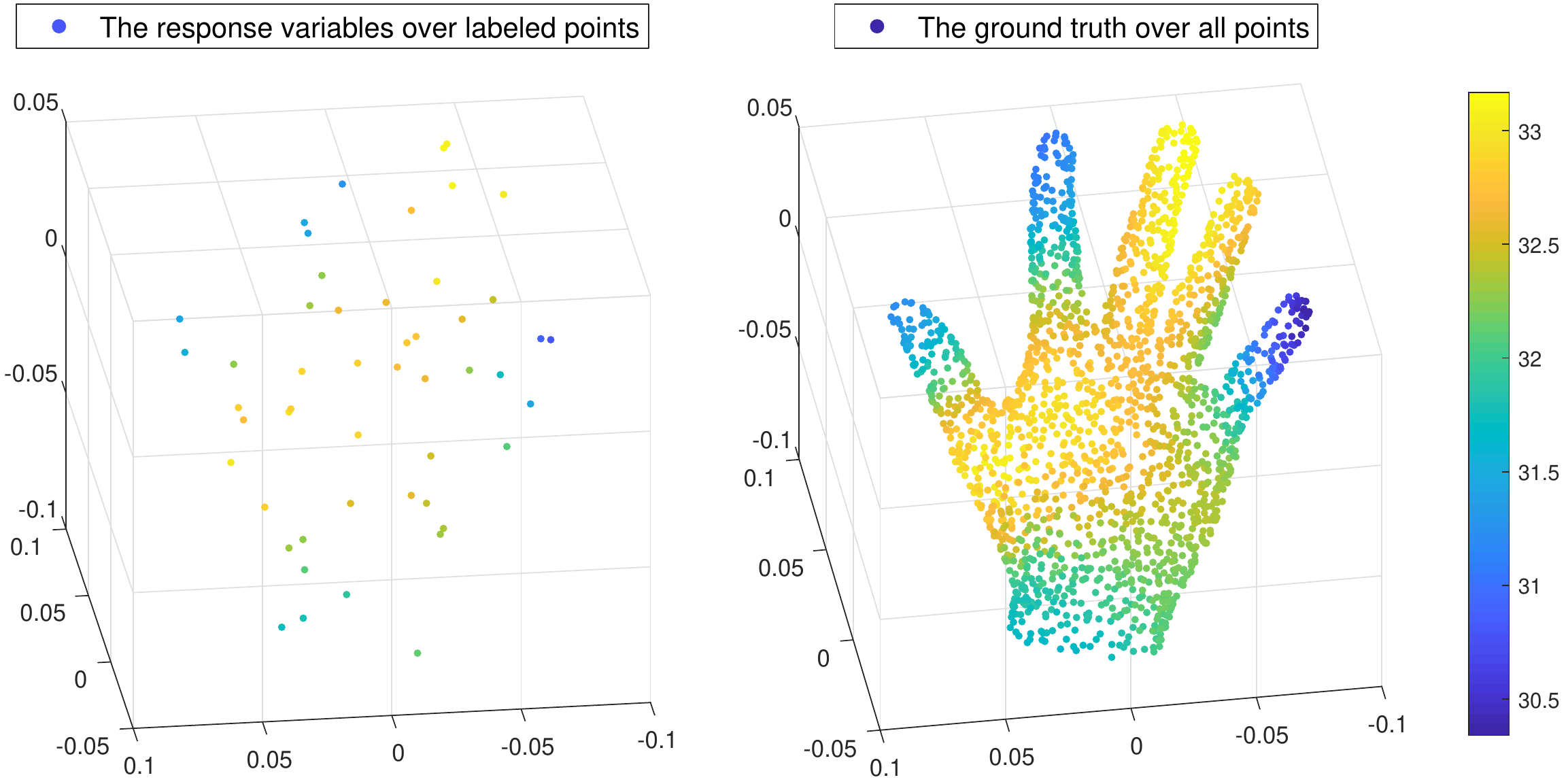}
\caption{Let $\{x_1, \dots, x_{50}\}$ be the labels points and $\{x_{51}, \dots, x_{1950}\}$ be the unlabeled points. Left: The plot of the response variables over the labeled points $\{x_i\},i=1,\cdots, 50$. Right: The plot of the ground truth  over all points $\{x_i\},i=1,\cdots, 1950$. }\label{3Dhandgroundtruth}
\end{figure}

\begin{figure}[htb!]
\centering
\includegraphics[width=1 \columnwidth]{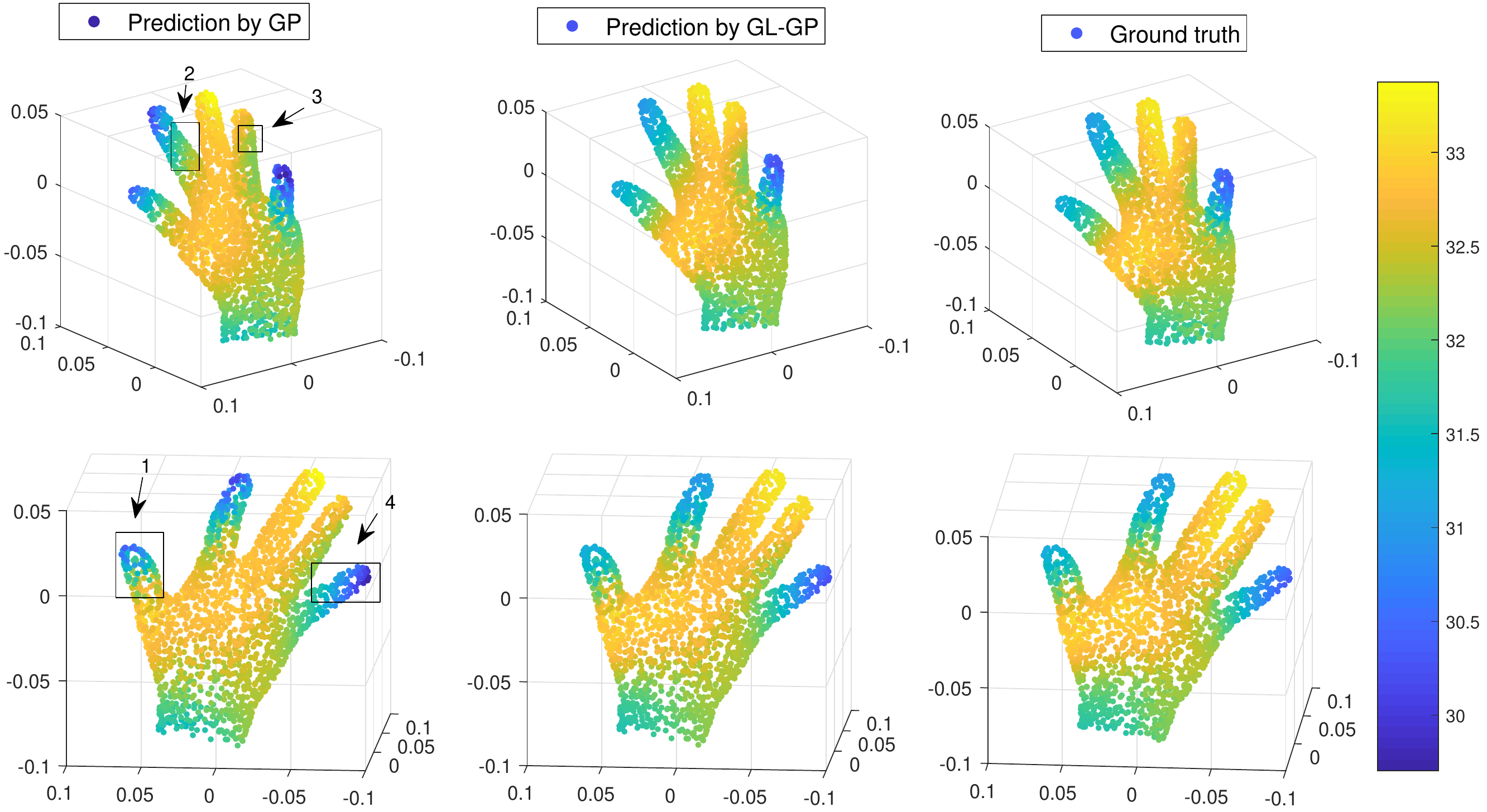}
\caption{Left 2 panels : 2 different views of the prediction by the GP with square exponential covariance over over $\{x_i\}$, for $i=51, \cdots, 1950$ with $RSME=0.169$. Middle 2 panels: 2 different views of the prediction by the GL-GP over $\{x_i\}$, for $i=51, \cdots, 1950$ with $RSME=0.062$.  Right 2 panels: 2 different views of the ground truth over $\{x_i\}$, for $i=51, \cdots, 1950$.}\label{3Dhandcomparison}
\end{figure}

\begin{figure}[htb!]
\centering
\includegraphics[width=1 \columnwidth]{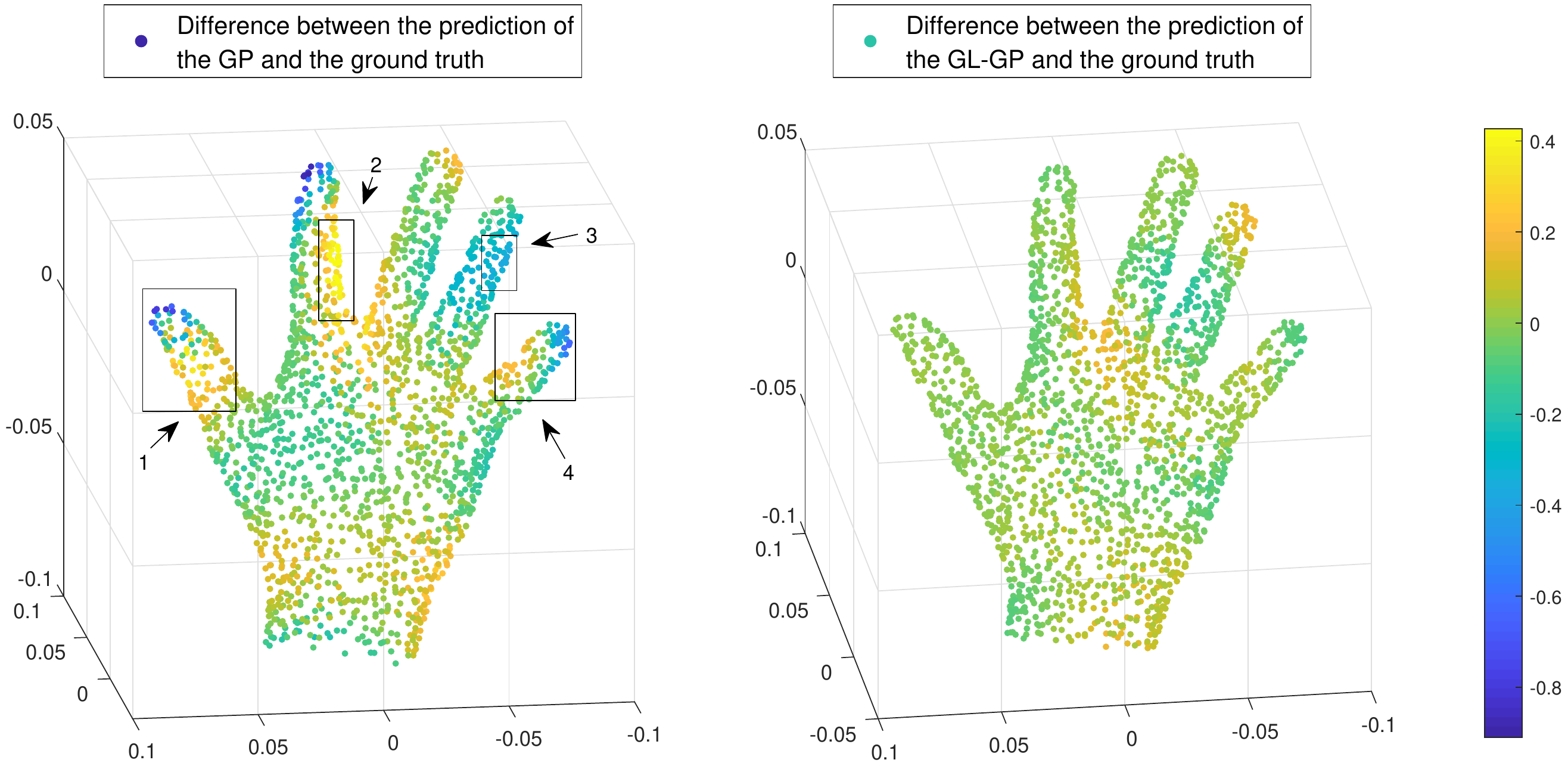}
\caption{Left: The difference between the prediction of the GP and the ground truth over $\{x_i\}$, for $i=51, \cdots, 1950$. Right: The difference between the prediction of the GL-GP and the ground truth over $\{x_i\}$, for $i=51, \cdots, 1950.$}\label{3Dhanddifference}
\end{figure}

\section{Theory of Graph based Gaussian Process}

\subsection{Graph based Gaussian process on subsets of $\mathbb{R}^D$}
In this section, we introduce theory of GLs when the dataset is sampled from a subset $S$ of $\mathbb{R}^D$. We are going to associate the GL with an integral operator on the set $S$. Then, we apply spectral theory to the integral operator to define the corresponding GL-GP covariance function. We first make the following assumption on the subset $S$ and the probability measure on $S$. 
\begin{assumption}\label{assumption on S}
Suppose $S$ is a compact connected subset of $\mathbb{R}^D$. Suppose $(S, \mathcal{F}, \mathsf{P})$ is a probability space, where $\mathsf{P}$ is a  probability measure defined over the Borel sigma algebra $\mathcal{F}$ on $S$. Suppose $\mathcal{X}=\{x_1, \cdots, x_m, x_{m+1} \cdots, x_{m+n}\}$ are $m+n$ i.i.d. samples based on the measure $\mathsf{P}$.
\end{assumption}

\begin{remark}
There are natural measures on $S$ induced by metrics in the ambient Euclidean space $\mathbb{R}^D$, for example, the $d$ dimensional Hausdorff measure. However, we do not require $\mathsf{P}$ to be absolutely continuous with respect to any of these measures.  
\end{remark}

Based on Assumption \ref{assumption on S}, we define an integral operator associated with the GL. 

\begin{definition}
For any function $f(x,x')$ on $S \times S$, we define the operator $P$  with respect to the probability measure $\mathsf{P}$  as follows,
$Pf (x):=\int_{S }f(x, x')d\mathsf P(x').$ 
For the kernel $ k_{\epsilon}(x,x')$, we define 
$d_{\epsilon}(x):= Pk_{\epsilon}(x)$ and 
$Q_{\epsilon}(x,x'):= \frac{k_{\epsilon}(x,x')}{d_{\epsilon}(x)d_{\epsilon}(x')}.$ 
For $f(x)$ on $S$, we define the operator:
\begin{align}
T_{\epsilon}f(x):= \frac{PQ_{\epsilon}f (x)}{PQ_{\epsilon}(x)}=  \frac{\int_{S }Q_{\epsilon}(x,x')f(x')d\mathsf P(x')}{\int_{S }Q_{\epsilon}(x,x')d\mathsf P(x')}. \nonumber
\end{align}
At last, we define $\mathsf{I}_\epsilon:=\frac{I-T_{\epsilon}}{\epsilon^2}.$
\end{definition}

The operator $T_{\epsilon}$ has the following important property. The property motivates our covariance function for the Gaussian process.  
 Hence, we state it as a theorem here. The proof is postponed to Section \ref{Proof of theorem Property of T epsilon on S} of the appendices.

\begin{theorem}\label{Property of T epsilon on S}
Under Assumption \ref{assumption on S}, for any fixed $\epsilon$,  $T_{\epsilon}$ is a linear, compact, self-adjoint operator from $L^2(S, \mathsf{P})$ to $L^2(S, \mathsf{P})$. Moreover, suppose $A$ is the matrix  in \eqref{L matrix} constructed from $\mathcal{X}$, then for any $f \in L^2(S, \mathsf{P})$, we have the following convergence result: for any  $x_k \in \mathcal{X}$, 
\begin{align}
\lim_{m+n \rightarrow \infty}\sum_{i=1}^{m+n} A_{k, i} f(x_i) =T_{\epsilon} f(x_k), \quad  \mbox{a.s.} \nonumber
\end{align}
\end{theorem}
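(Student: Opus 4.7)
The plan is to prove the theorem in two essentially independent stages: first establish the operator-theoretic claims (linearity, compactness, and self-adjointness) directly from the closed-form expression for $T_\epsilon$, then prove the almost sure pointwise convergence via a law of large numbers (LLN) argument applied twice, once to the degree functional $q_\epsilon$ and once to the numerator and denominator of the row average.

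For the operator-theoretic part, I first note that because $S$ is compact and $k_\epsilon$ is strictly positive and continuous on $S \times S$, the quantities $d_\epsilon$ and $PQ_\epsilon$ are continuous on $S$ and bounded above and below by positive constants (the lower bound using $k_\epsilon(x,x') \ge \exp(-\mathrm{diam}(S)^2/(4\epsilon^2))$). Hence the kernel
\[
K(x,x') \;:=\; \frac{Q_\epsilon(x,x')}{PQ_\epsilon(x)} \;=\; \frac{k_\epsilon(x,x')}{d_\epsilon(x)\,d_\epsilon(x')\,PQ_\epsilon(x)}
\]
is bounded and continuous on $S\times S$, so $K \in L^2(S\times S,\mathsf{P}\otimes\mathsf{P})$. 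Linearity is immediate and the Hilbert–Schmidt criterion yields compactness. The kernel $K$ is not symmetric in $(x,x')$, so self-adjointness cannot be read off directly; instead I would use the symmetry of $Q_\epsilon$ together with Fubini to obtain
\[
\int_S g(x)\,T_\epsilon f(x)\,PQ_\epsilon(x)\,d\mathsf{P}(x) \;=\; \int_S\!\int_S Q_\epsilon(x,x')\,f(x')\,g(x)\,d\mathsf{P}(x')\,d\mathsf{P}(x),
\]
which is symmetric in $f$ and $g$ by symmetry of $Q_\epsilon$, exhibiting $T_\epsilon$ as self-adjoint with respect to the weighted inner product $\langle\cdot,\cdot\rangle_{PQ_\epsilon\,d\mathsf{P}}$. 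Since $PQ_\epsilon$ is bounded above and below away from zero this weighted inner product is equivalent in norm to the standard one on $L^2(S,\mathsf{P})$, and the conjugation $S_\epsilon := M_{1/\sqrt{PQ_\epsilon}}\,\mathcal{Q}\,M_{1/\sqrt{PQ_\epsilon}}$ (where $\mathcal{Q}$ is the symmetric integral operator with kernel $Q_\epsilon$) is self-adjoint on $L^2(S,\mathsf{P})$ in the usual sense and similar to $T_\epsilon$; this is the form in which the self-adjointness conclusion would be delivered.

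For the convergence part, I would write
\[
\sum_{i=1}^{m+n} A_{k,i}\,f(x_i) \;=\; \frac{(m+n)^{-1}\sum_i k_\epsilon(x_k,x_i)\,f(x_i)/\tilde q(x_i)}{(m+n)^{-1}\sum_j k_\epsilon(x_k,x_j)/\tilde q(x_j)},
\]
where $\tilde q(x):= q_\epsilon(x)/(m+n)$. For each fixed $x$ the strong LLN gives $\tilde q(x)\to d_\epsilon(x)$ almost surely, but $\tilde q$ is evaluated at the random $x_i$'s inside a sum, so I would upgrade this to a uniform statement. Because $\{k_\epsilon(\,\cdot\,,y): y\in S\}$ is uniformly bounded and equicontinuous on compact $S$, a Glivenko–Cantelli argument on a finite net yields $\sup_{x\in S}|\tilde q(x)-d_\epsilon(x)|\to 0$ almost surely, and the positive lower bound on $d_\epsilon$ makes the reciprocals well controlled. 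Replacing $\tilde q(x_i)$ by $d_\epsilon(x_i)$ then introduces a vanishing error, and a second application of the strong LLN to the iid averages $(m+n)^{-1}\sum_i k_\epsilon(x_k,x_i)f(x_i)/d_\epsilon(x_i)$ and $(m+n)^{-1}\sum_j k_\epsilon(x_k,x_j)/d_\epsilon(x_j)$ yields limits $d_\epsilon(x_k)\,PQ_\epsilon f(x_k)$ and $d_\epsilon(x_k)\,PQ_\epsilon(x_k)$ respectively; the common factor $d_\epsilon(x_k)$ cancels, producing $T_\epsilon f(x_k)$. The main obstacle is precisely this uniform control of $\tilde q$ on $S$, since without it the second SLLN cannot be applied termwise because the summands depend jointly on the whole sample; a minor additional point is to intersect probability-one events over the countable collection of $x_k$'s so that the convergence holds simultaneously for every $x_k \in \mathcal{X}$ with probability one.
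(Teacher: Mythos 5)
Your proposal follows the same overall route as the paper --- establish compactness via the Hilbert--Schmidt criterion using the uniform positive bounds on $k_\epsilon$ and $d_\epsilon$ over the compact domain, and derive the convergence by rewriting the row sum as a ratio of normalized averages and invoking the strong law of large numbers. However, you tighten the argument at two places where the paper's proof is terse. First, on self-adjointness: the paper asserts it ``follows from the fact that $Q_\epsilon(x,y)$ is symmetric,'' but you correctly observe that the actual integral kernel of $T_\epsilon$ is $Q_\epsilon(x,x')/PQ_\epsilon(x)$, which is \emph{not} symmetric in $(x,x')$ because of the $x$-dependent normalization in the denominator. Your remedy --- exhibiting $T_\epsilon$ as self-adjoint with respect to the weighted inner product $\langle f,g\rangle_{PQ_\epsilon\,d\mathsf{P}}$, noting that this inner product is norm-equivalent to the standard one since $PQ_\epsilon$ is bounded above and below, and conjugating by $M_{1/\sqrt{PQ_\epsilon}}$ to obtain a genuinely symmetric kernel --- is exactly the reasoning the paper leaves implicit, and it is what is needed to make the ``self-adjoint'' claim precise. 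Second, on the convergence: the paper moves directly from the pointwise almost sure limit $\tfrac{1}{m+n}q_\epsilon(x_i)\to d_\epsilon(x_i)$ to convergence of the ratio of sums, but the summands $k_\epsilon(x_k,x_i)f(x_i)/\bigl(\tfrac{1}{m+n}q_\epsilon(x_i)\bigr)$ depend on the entire sample through the empirical degree, so a termwise SLLN does not literally apply. Your Glivenko--Cantelli upgrade to $\sup_{x\in S}\bigl|\tfrac{1}{m+n}q_\epsilon(x)-d_\epsilon(x)\bigr|\to 0$ a.s., using equicontinuity and uniform boundedness of $\{k_\epsilon(\cdot,y)\}$ on compact $S$ together with the positive lower bound on $d_\epsilon$, lets you swap $\tilde q(x_i)$ for $d_\epsilon(x_i)$ at uniformly small cost and then apply the SLLN to honest i.i.d. averages; this closes a real gap. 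Your observation that $f\in L^2(S,\mathsf P)\subset L^1(S,\mathsf P)$ suffices for integrability of the summands, and the remark about intersecting the probability-one events over the countably many $x_k$, are both correct and worth keeping. In short: same approach as the paper, but your version is the one I would actually trust line by line.
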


We discuss the theoretical motivation for constructing a covariance function incorporating geometric information about the set $S$.
By Theorem \ref{Property of T epsilon on S}, for a fixed $\epsilon$, since $T_{\epsilon}$ is compact and self-adjoint, the eigenvalues of $\mathsf{I}_\epsilon$ satisfy $\lambda_{0,\epsilon} \leq \lambda_{1,\epsilon} \leq \cdots$. Suppose $(\lambda_{i,\epsilon}, \phi_{i,\epsilon})$ is the $i$-th eigenpair of $\mathsf{I}_\epsilon$ and each $\phi_{i,\epsilon}$ is normalized in $L^2(S, \mathsf{P})$. Then $\{\phi_{i,\epsilon}\}$ form an orthonormal basis of $L^2(S, \mathsf{P})$. For any function $f \in L^2(S, \mathsf{P})$, $f$ can be expressed as a unique linear combination of $\{\phi_{i,\epsilon}\}$, i.e. $f= \sum_{i=0}^\infty a_i \phi_{i,\epsilon}$, a.e. Note that $\sum_{i=1}^\infty a_i^2 = \|f\|^2_{L^2(S, \mathsf{P})} < \infty$. Hence, $a_i \rightarrow 0$ as $i \rightarrow \infty$ and a finite linear combination $\sum_{i=0}^l a_i \phi_{i,\epsilon}$ is an   
approximation of $f$ in an $ L^2(S, \mathsf{P})$ sense. Based on the above observation, if our unknown regression function is $f \in L^2(S, \mathsf{P})$, it is reasonable to propose a GP covariance function on $S$ by using finitely many eigenpairs of $T_{\epsilon}$. More precisely, fixing $K  \in \mathbb{N} $ and $t>0$, we define 
\begin{align}\label{DBGP covariance function}
\mathsf{C}_{\epsilon,K}(x, x', t)=\sum_{i=0}^{K-1}e^{-\lambda_{i,\epsilon}t} \phi_{i,\epsilon}(x)\phi_{i,\epsilon}(x'),
\end{align}
for $x, y \in S$. This covariance function involves three parameters, $t$, $\epsilon$ and $K$. The parameter $t$ can be regarded as controlling the bandwidth, and hence the choice of $t$ should depend on the regularity of $f$.  The parameter $\epsilon$ is used in the construction of the operator $T_{\epsilon}$ over $S$; $\epsilon$ should depend on the regularity of $S$. In contrast,  $K$ should depend on both $S$ and the regularity of $f$.

We give an intuitive justification to show that the GL-GP covariance matrix in \eqref{GL-GP matrix} approximates the GL-GP covariance function over $\mathcal{X}$. The pointwise convergence of the matrix $A$ to $T_{\epsilon}$ in Theorem \ref{Property of T epsilon on S} suggests that when $m+n$ is large enough, the $i$th smallest eigenvalue of $-L$ approximates the $i$th smallest eigenvalue of $\mathsf{I}_\epsilon$. The corresponding eigenvector approximates $\phi_{i,\epsilon}$ over $\mathcal{X}$ if the eigenvector is properly normalized.  For any continuous $f \in L^2(S, \mathsf{P})$, by the law of large numbers,
$$\lim_{m+n \rightarrow  \infty}\frac{1}{m+n}\sum_{i=1}^{m+n}f^2(x_i)=\|f\|^2_{L^2(S, \mathsf{P})}  \quad  \mbox{a.s.}$$
This suggests that if $v_i$ is an eigenvector corresponding to the $i$th smallest eigenvalue of $-L$ and $v_i$ normalized in $\ell^2$, then $(\sqrt{m+n}) v_i$ approximates $\phi_{i,\epsilon}$ over $\mathcal{X}$. Hence, $H^{K}_{\epsilon,t}$ approximates $\mathsf{C}_{\epsilon,K}(x, x', t)$ over $\mathcal{X}$. To rigorously justify this intuition, we need to impose more structure on $S$.  We will discuss this in the next section.

\subsection{Graph based Gaussian process on manifolds}
In this section, we consider the special case in which $S$ has a manifold structure and develop additional theory supporting the GL-GP in this case.
Specifically,  we make the following assumption.

\begin{assumption}\label{assumption DM}
Let $M$ be a $d$-dimensional smooth, closed and connected Riemannian manifold isometrically embedded in $\mathbb{R}^D$ through $\iota:M\to \mathbb{R}^D$. Let $S=\iota(M)$. Suppose $(M, \mathcal{F}, \mathsf{P})$ is a probability space, where $\mathsf{P}$ is a  probability measure defined over the Borel sigma algebra $\mathcal{F}$ on $M$. We assume $\mathsf{P}$ is absolutely continuous with respect to the volume measure on $M$, i.e. $d\mathsf{P}=\mathsf{p} dV$ by the Radon-Nikodym theorem, where $\mathsf{p}$ is the probability density function on $M$ and $dV$ is the volume form. We further assume $\mathsf{p}$ is smooth and is bounded from below by $\mathsf{p}_{m}>0$. $\mathcal{X}=\{x_1 \cdots, x_m, x_{m+1}, \cdots, x_{m+n}\}$ are i.i.d. sampled from $\mathsf{P}$. 
\end{assumption}

Since $\iota $ is an isometry, the function $f$ in model \eqref{eq:base} can be equivalently viewed as $f: M \rightarrow \mathbb{R}$ instead of $f:S \rightarrow \mathbb{R}$.  We start by modifying the definition of GL and the operator $T_{\epsilon}$ for datasets and functions on $M$.  
Based on Assumption \ref{assumption DM}, we define the kernel 
$$k_{\epsilon}(x,x')=\exp\Big(-\frac{\|\iota(x)-\iota(x')\|^2_{\mathbb{R}^D}}{4\epsilon^2}\Big),$$ for $x, x' \in M$. We use the kernel $k_{\epsilon}(x,x')$ to construct an $(m+n) \times (m+n)$ affinity matrix over $\mathcal{X}$ as in \eqref{W matrix}. Let $L$ be the graph Laplacian of the affinty graph $G$ with vertices $V=\mathcal{X}$. 

In the manifold case, quantities in Definition 1 have the following expansions. For any function $f(x,x')$ on $M \times M$, we have  
\begin{align}
Pf (x):=\int_{M}f(x, x') d\mathsf P(x')= \int_{M}f(x, x') \mathsf{p}(x') dV(x') \,. \nonumber
\end{align}
For $f(x)$ on $M$, we have 
\begin{align}
T_{\epsilon}f(x):= \frac{PQ_{\epsilon}f (x)}{PQ_{\epsilon}(x)}=  \frac{\int_{M}Q_{\epsilon}(x,x')f(x')\mathsf{p}(x') dV(x')}{\int_{M}Q_{\epsilon}(x,x')\mathsf{p}(x') dV(x')}. \nonumber
\end{align}
By the same argument as in Theorem \ref{Property of T epsilon on S}, $T_{\epsilon}$ is a linear, compact, self-adjoint operator from $L^2(M)$ to $L^2(M)$. Since the manifold is compact and smooth, $Q_{\epsilon}(x,x')$ is smooth. Hence, $T_{\epsilon}f \in C^\infty (M)$. We have the following proposition about the eigenvalues of $\mathsf{I}_\epsilon=\frac{I-T_{\epsilon}}{\epsilon^2}$.

\begin{proposition}\label{eigenvalue lower bound of I epsilon}
Under Assumption \ref{assumption DM}, the  eigenvalues of $\mathsf{I}_\epsilon=\frac{I-T_{\epsilon}}{\epsilon^2}$ satisfy $0 \leq \lambda_{0,\epsilon} \leq \lambda_{1,\epsilon} \leq \cdots$.
\end{proposition}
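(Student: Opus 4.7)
The plan is to combine the spectral theorem, which is made available by Theorem \ref{Property of T epsilon on S}, with a direct Dirichlet-form computation showing $\mathsf{I}_\epsilon$ is positive semi-definite. Since $T_\epsilon$ is compact and self-adjoint on $L^2(M,\mathsf{P})$, the operator $\mathsf{I}_\epsilon=(I-T_\epsilon)/\epsilon^2$ has a discrete real spectrum whose only possible accumulation point is $1/\epsilon^2$, and each eigenvalue has finite multiplicity. This already produces the non-decreasing enumeration $\lambda_{0,\epsilon}\leq\lambda_{1,\epsilon}\leq\cdots$ claimed in the proposition, so the one remaining inequality to establish is $\lambda_{0,\epsilon}\geq 0$, i.e.\ the non-negativity of $\mathsf{I}_\epsilon$.

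To prove this I would introduce the weighted inner product
\[
\langle u,v\rangle_\star := \int_M u(x)\, v(x)\, PQ_\epsilon(x)\, \mathsf{p}(x)\, dV(x),
\]
which is norm-equivalent to the standard $L^2(M,\mathsf{P})$ inner product because $PQ_\epsilon$ is smooth, strictly positive, and bounded above on the compact manifold $M$ (using $\mathsf{p}\geq\mathsf{p}_m>0$). The symmetry of $Q_\epsilon$ makes $T_\epsilon$, and hence $\mathsf{I}_\epsilon$, manifestly self-adjoint with respect to $\langle\cdot,\cdot\rangle_\star$, and the eigenvalues are unchanged because the eigenvalue equation makes no reference to the inner product. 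Using that $T_\epsilon$ preserves constants one rewrites
\[
\epsilon^2 \mathsf{I}_\epsilon f(x) = \frac{1}{PQ_\epsilon(x)}\int_M Q_\epsilon(x,x')\bigl[f(x)-f(x')\bigr]\mathsf{p}(x')\,dV(x'),
\]
pairs it with $f$ in $\langle\cdot,\cdot\rangle_\star$, and symmetrizes in $x\leftrightarrow x'$ using $Q_\epsilon(x,x')=Q_\epsilon(x',x)$ to obtain the Dirichlet form
\[
\epsilon^2\langle\mathsf{I}_\epsilon f,f\rangle_\star = \frac{1}{2}\int_M\!\int_M Q_\epsilon(x,x')\bigl[f(x)-f(x')\bigr]^2 \mathsf{p}(x)\mathsf{p}(x')\,dV(x)dV(x')\geq 0,
\]
where non-negativity is immediate from $Q_\epsilon\geq 0$. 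The variational characterization $\lambda_{0,\epsilon} = \inf_{f\neq 0} \langle\mathsf{I}_\epsilon f,f\rangle_\star/\|f\|_\star^2$ then yields $\lambda_{0,\epsilon}\geq 0$ and finishes the proof.

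The principal delicacy I anticipate is the bookkeeping between the two inner products: Theorem \ref{Property of T epsilon on S} records self-adjointness on $L^2(M,\mathsf{P})$, whereas the manifestly symmetric and non-negative representation of $\mathsf{I}_\epsilon$ only surfaces after reweighting by $PQ_\epsilon$. Once one checks that the two Hilbert structures are topologically equivalent on compact $M$ with $\mathsf{p}$ bounded away from zero, the eigenvalues of $\mathsf{I}_\epsilon$ agree in either picture and the Dirichlet-form estimate transfers the non-negativity back to the original setting.
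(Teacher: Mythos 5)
Your proof is correct, but it takes a genuinely different route from the paper's. The paper's argument is a two-line maximum-principle bound: for an eigenpair $(\lambda,f)$ of $T_\epsilon$, smoothness of $Q_\epsilon$ on the compact manifold gives $\|f\|_\infty<\infty$, and positivity of $Q_\epsilon$ yields $\|T_\epsilon f\|_\infty\leq\|f\|_\infty$ since $T_\epsilon$ is an averaging operator, hence $|\lambda|\leq 1$ and the eigenvalues of $I-T_\epsilon$ (and so of $\mathsf{I}_\epsilon$) are nonnegative. Your route instead changes the inner product to make $T_\epsilon$ manifestly symmetric and then reads off nonnegativity from a Dirichlet-form identity. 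Both work. The paper's proof is shorter and uses only the sup norm; yours is a bit more machinery but exposes useful structure: the Dirichlet form makes positive semi-definiteness transparent as a quadratic form, exactly parallels how nonnegativity of $-L$ is proved at the discrete level, and would survive without the smoothness/compactness bootstrapping (one only needs $Q_\epsilon\geq 0$ symmetric and $T_\epsilon 1=1$).

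One remark worth flagging: you are right to treat the inner-product bookkeeping as a delicacy. The kernel of $T_\epsilon$, namely $Q_\epsilon(x,x')/PQ_\epsilon(x)$, is normalized only in its first argument and so is not symmetric, which means $T_\epsilon$ is not self-adjoint with respect to the plain $L^2(M,\mathsf{P})$ inner product unless $PQ_\epsilon$ is constant. It is self-adjoint with respect to your weighted product $\langle\cdot,\cdot\rangle_\star$ (equivalently, conjugate to a symmetric operator by $\sqrt{PQ_\epsilon}$). The statement of Theorem~\ref{Property of T epsilon on S} should be read in that sense. Since the two inner products are topologically equivalent (as you argue, using $\mathsf{p}\geq\mathsf{p}_m>0$ and boundedness of $PQ_\epsilon$ on compact $M$), compactness transfers and the spectrum of $T_\epsilon$ is the same in either picture, so your variational characterization of $\lambda_{0,\epsilon}$ is legitimate. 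Your argument is thus self-contained modulo this reading of Theorem~\ref{Property of T epsilon on S}.
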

\begin{proof}
Note that $T_{\epsilon}$ is a linear, compact, and self-adjoint operator from $L^2(M)$ to $L^2(M)$. It is sufficient to prove the eigenvalues of $T_{\epsilon}$ are bounded by $1$. Suppose $(\lambda, f)$ is an eigenpair of $T_{\epsilon}$. Since the manifold is compact and smooth, $Q_{\epsilon}(x,x')$ is smooth. Hence, $f$ is smooth and $\|f\|_\infty < \infty$.  Since $Q_{\epsilon}(x,x')>0$,  $T_{\epsilon}f=\lambda f$ implies that
$$|\lambda| = \frac{\|T_{\epsilon}f\|_\infty}{\|f\|_\infty} \leq \max_{x \in M}   \frac{\int_{M}Q_{\epsilon}(x,x') (f(x')/\|f\|_\infty)\mathsf{p}(x') dV(x')}{\int_{M}Q_{\epsilon}(x,x')\mathsf{p}(x') dV(x')} \leq 1.$$ 
\qed
\end{proof}

Suppose $(\lambda_{i,\epsilon}, \phi_{i,\epsilon})$ is the $i$th eigenpair of $\mathsf{I}_\epsilon$ and each $\phi_{i,\epsilon}$ is normalized in $L^2(M)$. Then $\{\phi_{i,\epsilon}\}$ form an orthonormal basis of $L^2(M)$. The covariance function $\mathsf{C}_{\epsilon,K}(x, x', t)$ under the manifold setup is defined in \eqref{DBGP covariance function} for $x,x' \in M$, but is constructed from the GL in a slightly different manner. When $S$ is a general subset of $\mathbb{R}^D$, we do not impose any other measures on $S$ except the probability measure. However, when $M$ is a Riemannian manifold, there is natural measure, namely the volume measure, induced by the Riemannian metric of the manifold. Hence,  the eigenfunctions of $\mathsf{I}_\epsilon$ should be normalized in the $L^2(M)$ norm rather than $L^2(M, \mathsf{P})$ norm; proper normalization ensures that the eigenvectors of the $GL$ can approximate the eigenfunctions of $\mathsf{I}_\epsilon$. Hence, we introduce the following definition.

\begin{definition}Under Assumption \ref{assumption DM}, suppose $\tilde{v}$ is an eigenvector of $L$ which is normalized in the $\ell^2$ norm. Let 
$\mathbb{N}(i)=|B^{\mathbb{R}^D}_{\epsilon}(\iota(x_i)) \cap \{\iota(x_1), \cdots, \iota(x_{m+n})\}|$, the number of points in an $\epsilon$ ball in $\mathbb{R}^D$ around $x_i$. Then, we define the $\ell^2$ norm of $\tilde{v}$ with respect to the inverse estimated probability density $1/\hat{\mathsf{p}}$ as: 
\begin{align}
\|\tilde{v}\|_{\ell^2(1/\hat{\mathsf{p}})}:=\sqrt{\frac{|S^{d-1}|\epsilon^d}{d} \sum_{i=1}^{m+n} \frac{\tilde{v}^2(i)}{\mathbb{N}(i)}}, \nonumber 
\end{align}
where $|S^{d-1}|$ is the volume of the sphere $S^{d-1}$.
\end{definition}
In the above definition, we approximate $\mathsf{p}$ by using a simple $0-1$ kernel with bandwidth $\epsilon$; refer to \cite{2020spectral} for a detailed discussion. 

Denote $\mu_{i,m+n,\epsilon}$ to be the $i$th eigenvalue of $-L$ with the associated eigenvector $\tilde{v}_{i,m+n,\epsilon}$ normalized in the $\ell^2$ norm, where $i=0,\ldots,m+n-1$. We order $\mu_{i,m+n,\epsilon}$ so that $\mu_{0,m+n,\epsilon}\leq \mu_{1,m+n,\epsilon}\leq \ldots\leq \mu_{m+n-1, m+n,\epsilon}$. If we define
\begin{align} \label{normalized V i n epsilon}
v_{i,m+n,\epsilon}=\frac{\tilde{v}_{i,m+n,\epsilon}}{\|\tilde{v}_{i,m+n,\epsilon}\|_{\ell^2(1/\hat{\mathsf{p}})}},
\end{align}
then $v_{i,n,\epsilon}$ can be regarded as a discretization of some function that is normalized in the $L^2(M)$ norm.  Fixing $K \in \mathbb{N}$ and $t>0$, we define
\begin{align}\label{DBGP covariance matrix on manifold}
\tilde{\mathsf H}_{\epsilon,K,t}=\sum_{i=0}^{K-1} e^{-\mu_{i,m+n,\epsilon} t} v_{i,m+n,\epsilon} v^\top_{i,m+n,\epsilon}\in \mathbb{R}^{(m+n)\times (m+n)}
\end{align}
to be the covariance matrix for Gaussian process regression over $\mathcal{X}$ on the manifold $M$. When the probability density is uniform, \eqref{GL-GP matrix} and  \eqref{DBGP covariance matrix on manifold} are equivalent. Hence, the general GL-GP algorithm in \eqref{GL-GP matrix} can be viewed as a simplification of  \eqref{DBGP covariance matrix on manifold}. The GL-GP algorithm on manifolds is described in Section \ref{DBGP manifold algorithm summary} of the appendices.

\subsection{Convergence of $\tilde{\mathsf H}_{\epsilon,K,t}$ and its  Nystr\"{o}m extension under the manifold setup}

First, we show convergence of the covariance matrix $\tilde{\mathsf H}_{\epsilon,K,t}$ to the covariance function $\mathsf C^K_{\epsilon}(x, x', t)$ over the dataset $\mathcal{X}=\{x_i\}_{i=1}^{m+n}$ under the manifold setup.  More precisely, we provide the convergence rate of entry $i,j$ of the matrix $\tilde{\mathsf H}_{\epsilon,K,t}$ to $\mathsf{C}_{\epsilon,K}(x_i, x_j, t)$ as $m+n \rightarrow \infty$. To state our main theorem, we recall the Laplace Beltrami operator $\Delta$ of the manifold $M$.  Here, $\Delta$ is an intrinsic differential operator on the manifold generalizing the notion of the second order derivative on an interval, with the eigenpairs reflecting the geometric and topological structure of the manifold.  Let $\sigma(-\Delta)=\{\lambda_i\}_{i=0}^\infty$ be the spectrum of $-\Delta$. By the standard elliptic theory, we have $0=\lambda_0 < \lambda_1\leq \lambda_2 \leq  \cdots$ and each eigenvalue has finite multiplicity. Denote by $\phi_i$ the eigenfunction normalized in $L^2(M)$ corresponding to $\lambda_i$; that is, for each $i\in \mathbb{N}$, we have $\Delta \phi_i =-\lambda_i \phi_i$. It is well known that $\{\phi_i\}$ forms an orthonormal basis of $L^2(M)$. The main theorem of this section is as follows. The proof of the theorem is in Section \ref{proof of convergence from H K epsilon t to C  K epsilon t} of the appendices.

\begin{theorem}\label{GL-GP convergence rate main}
Under Assumption \ref{assumption DM}, let $\lambda_{i}$ be the $i$th eigenvalue of $-\Delta$.  Fixing $K\in \mathbb{N}$, we let $\mathsf \Gamma_K:=\min_{1 \leq i \leq K}\textup{dist}(\lambda_i, \sigma(-\Delta)\setminus \{\lambda_i\})$.  If $\epsilon$ is small enough,
\begin{equation}\label{relation epsilon K main theorem}
\epsilon \leq \mathcal{K}_1 \min \left(\left(\frac{\min(\mathsf\Gamma_K,1)}{\mathcal{K}_2+\lambda_K^{d/2+5}}\right)^2,\, \frac{1}{(\mathcal{K}_3+\lambda_K^{(5d+7)/4})^2}\right)\,,
\end{equation} 
and $m+n$ is sufficiently large so that $(\frac{\log(m+n)}{m+n})^{\frac{1}{4d+13}} \leq \epsilon$, then for any $t$ less than the diameter of $M$, with probability greater than $1-(m+n)^{-2}$,  
\begin{align}
\max_{x_i, x_j \in \mathcal{X}}| \tilde{\mathsf H}_{\epsilon,K,t}(i,j)-\mathsf{C}_{\epsilon,K}(x_i, x_j, t)|  \leq \mathcal{K}_4  K^2 \epsilon^{1/2}; \nonumber
\end{align}
$\mathcal{K}_1$ and $\mathcal{K}_2, \mathcal{K}_3>1$  are constants depending on $d$, $\mathsf p_m$, the $C^2$ norm of $\mathsf p$, and the volume, injectivity radius, sectional curvature and second fundamental form of the manifold. $\mathcal{K}_4$ depends on  $d$, $\mathsf p_m$,  the $C^2$ norm of $\mathsf p$, and the diameter,  volume and Ricci curvature of $M$.
\end{theorem}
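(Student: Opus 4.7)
The plan is to decompose the $(i,j)$-entrywise error as a sum over the $K$ eigenpair summands and compare, for each $k$, the discrete term $e^{-\mu_{k,m+n,\epsilon}t}v_{k,m+n,\epsilon}(i)v_{k,m+n,\epsilon}(j)$ to the continuous term $e^{-\lambda_{k,\epsilon}t}\phi_{k,\epsilon}(x_i)\phi_{k,\epsilon}(x_j)$. A triangle inequality together with $|e^{-at}-e^{-bt}|\le t|a-b|$ and pointwise control of $\phi_{k,\epsilon}$ (obtained from Sobolev embedding applied to the eigenfunctions $\phi_k$ of $-\Delta$, for which Weyl-type bounds give $\|\phi_k\|_{C^s}\lesssim\lambda_k^{(s+d/2)/2}$, plus the closeness of $\phi_{k,\epsilon}$ to $\phi_k$) reduces the task to bounding the eigenvalue defect $|\mu_{k,m+n,\epsilon}-\lambda_{k,\epsilon}|$ and the uniform eigenvector defect $\max_j|v_{k,m+n,\epsilon}(j)-\phi_{k,\epsilon}(x_j)|$, up to a choice of sign.

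The spectral convergence of $-L$ to $\mathsf{I}_\epsilon$ is the main probabilistic input. Building on Theorem~\ref{Property of T epsilon on S}, the matrix $A$ is the sample restriction of $T_\epsilon$, and a Bernstein-type concentration inequality applied to the kernel sums defining $A$ shows that for any fixed smooth $g$ on $M$, $\max_k\bigl|\sum_j A_{kj}g(x_j)-T_\epsilon g(x_k)\bigr|$ is controlled by a variance term of order $\epsilon^{-d/2-1}\sqrt{\log(m+n)/(m+n)}$ with the required probability; the assumption $(\log(m+n)/(m+n))^{1/(4d+13)}\le\epsilon$ is calibrated so that this error remains $O(\epsilon^{1/2})$ after the subsequent amplifications. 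Taking $g=\phi_{k,\epsilon}$ and using $T_\epsilon\phi_{k,\epsilon}=(1-\epsilon^2\lambda_{k,\epsilon})\phi_{k,\epsilon}$ shows that the sampled vector $\Phi_k:=(\phi_{k,\epsilon}(x_j))_j$ is an approximate eigenvector of $A$, and a Davis--Kahan argument then yields convergence of $\mu_{k,m+n,\epsilon}\to\lambda_{k,\epsilon}$ and of $\tilde v_{k,m+n,\epsilon}$ to $\Phi_k/\|\Phi_k\|_{\ell^2}$ in direction, provided the spectral gap of $\mathsf{I}_\epsilon$ at $\lambda_{k,\epsilon}$ is non-degenerate. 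Because the hypothesis $\mathsf\Gamma_K$ controls the gap of $-\Delta$ rather than of $\mathsf{I}_\epsilon$, I would interpolate via the classical diffusion-maps expansion $\mathsf{I}_\epsilon\phi=-\tfrac12\Delta\phi+O(\epsilon^2\|\phi\|_{C^4})$ to show that $\mathsf{I}_\epsilon$ inherits a gap comparable to $\mathsf\Gamma_K$; the two factors $\lambda_K^{d/2+5}$ and $\lambda_K^{(5d+7)/4}$ in \eqref{relation epsilon K main theorem} arise from inserting the Weyl-type $C^s$-bounds into the gap-preservation and the higher-order expansion steps, respectively. The density-dependent normalization $\ell^2(1/\hat{\mathsf p})$ defining $v_{k,m+n,\epsilon}$ is handled by a uniform law-of-large-numbers estimate $\|\phi|_{\mathcal X}\|_{\ell^2(1/\hat{\mathsf p})}^2\approx\|\phi\|_{L^2(M)}^2$ for smooth $\phi$, together with uniform consistency of the $0$--$1$ density estimator $\hat{\mathsf p}$ on $\mathcal X$.

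The hard part will be bookkeeping the joint $(K,\epsilon,m+n)$-dependence through the spectral perturbation. The Davis--Kahan bound is inversely proportional to the spectral gap, which shrinks in $K$ by Weyl's law $\lambda_K\sim K^{2/d}$, and the $L^\infty$ amplitudes of the $\phi_{k,\epsilon}$ grow with $k$; each of these effects contributes one factor of $K$ when the $K$ eigenpair summands are assembled, yielding the final $\mathcal{K}_4 K^2\epsilon^{1/2}$ rate. Keeping every auxiliary error uniformly below $\epsilon^{1/2}$ so this assembly survives the summation is precisely what forces the quantitative form of \eqref{relation epsilon K main theorem} on $\epsilon$ as a function of $\mathsf\Gamma_K$ and $\lambda_K$.
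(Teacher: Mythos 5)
Your proposal is correct and follows essentially the same architecture as the paper's proof: a term-by-term triangle inequality over the $K$ summands, reduction to the eigenvalue defect $|\mu_{k,m+n,\epsilon}-\lambda_{k,\epsilon}|$ and the uniform eigenvector defect $\max_j|v_{k,m+n,\epsilon}(j)-\phi_{k,\epsilon}(x_j)|$ (up to sign), a uniform bound on $\|\phi_{k,\epsilon}\|_\infty$, and a final Weyl-law count producing the $K^2$ factor. The only substantive deviations are in sourcing: the paper delegates the spectral convergence of $-L$ to $\mathsf{I}_\epsilon$ and of $\mathsf{I}_\epsilon$ to $-\Delta$ entirely to citations of a companion paper (Lemmas~\ref{T epsilon and Delta} and~\ref{spectral convergence of L to I-T epsilon on closed manifold}, which quote \cite{2020spectral}), whereas you propose to re-derive these from scratch via a Bernstein concentration estimate plus a Davis--Kahan argument and a diffusion-maps expansion to transfer the spectral gap from $-\Delta$ to $\mathsf{I}_\epsilon$ --- this is exactly what the cited results prove, so it is a legitimate self-contained alternative rather than a genuinely different route. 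The other small difference is that for $\|\phi_k\|_\infty$ you invoke a Sobolev-embedding bound of order $\lambda_k^{d/4}$, while the paper uses the sharper H\"ormander estimate $\|\phi_k\|_\infty\lesssim\lambda_k^{(d-1)/4}$ (Lemma~\ref{lemma hormander}); both, after inserting Weyl's law $\lambda_K\sim K^{2/d}$ and squaring, keep the assembled bound at $O(K^2\epsilon^{1/2})$, so this substitution does not change the final rate.
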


Note that $\tilde{\mathsf H}_{\epsilon,K,t}$ is composed of the eigenpairs of $-L$, while $\mathsf{C}_{\epsilon,K}$ is composed of the eigenpairs of $\mathsf{I}_\epsilon$.  Therefore, at first glance, the convergence rate of $\tilde{\mathsf H}_{\epsilon,K,t}$ to  $\mathsf{C}_{\epsilon,K}$ seems to be irrelevant to the Laplace-Beltrami operator. However, to control the convergence rate, we need to bound the eigengaps, the magnitude of the eigenvalues and the $L^\infty$ norm of the eigenfunctions of $\mathsf{I}_\epsilon$, and those terms can be controlled by the eigenvalues of $-\Delta$. The relationship between $\mathsf{I}_\epsilon$ to $-\Delta$, particularly the spectral convergence when $\epsilon\to 0$, is detailed in Lemma \ref{T epsilon and Delta} in Section \ref{proof of convergence from H K epsilon t to C  K epsilon t} of the appendices. On the other hand, the eigenvalues of $-\Delta$ are completely determined by the geometry of the manifold. Hence, bounding the convergence rate by the eigenvalues of
$-\Delta$ shows that the error between $\tilde{\mathsf H}_{\epsilon,K,t}$ and $\mathsf{C}_{\epsilon,K}$ is determined by the geometry of the manifold.

The GL-GP covariance matrix $\tilde{\mathsf H}_{\epsilon,K,t}$ can also be used to recover the heat kernel of the manifold $M$. The heat kernel on a manifold $M$ is the fundamental solution to the heat equation with an appropriate boundary condition. It describes the diffusion process of the heat flow out of a point on the manifold along the intrinsic distance. Hence, the result that the GL-GP matrix approximates the heat kernel implies that the covariance structure of GL-GP also varies with respect to the intrinsic distance of the manifold as the bandwidth $t$ in $\mathsf H_{\epsilon,K,t}$ changes. We refer the readers to Theorem 3 in \cite{2020spectral} for the convergence rate from  $\mathsf H_{\epsilon,K,t}$ to the heat kernel.

\begin{remark}
Suppose we consider a fixed manifold $M$. As the sample size $m+n$ increases, the relation between $\epsilon$ and $m+n$ and the lower bound on the p.d.f implies that the sampling points become increasingly dense in $M$. Hence, Theorem \ref{GL-GP convergence rate main} matches with the fixed-domain asymptotics regime considered in spatial statistics. Let $|\textup{Rm}(M)|$ denote the norm of the curvature tensor of $M$. Consider the set of all manifolds $\mathcal{M}:=\{M:\, \textup{dim}(M)=d,\, \textup{diam}(M)<D,\, \textup{vol}(M)>v,\, |\textup{Rm}(M)|\leq \tau\}$, where $d\in \mathbb{N}$, $D>0$, $v>0$, $\tau\geq 0$. Suppose we have a sequence of manifolds $\{M_i\}_{i=1}^\infty \subset \mathcal{M}$, and sample $i=m+n$ points on each $M_i$. Then, Theorem \ref{GL-GP convergence rate main} still holds for each $M_i$, when $i$ is sufficiently large. A future direction is removing the diameter upper bound on the manifold. If  Theorem \ref{GL-GP convergence rate main} holds for manifolds without this bound, then the theorem holds in the mixed-domain asymptotics regime \cite{chang2017mixed} in spatial statistics. 
\end{remark}

If we treat $K$, $\lambda_K$ and the eigengaps as constants and focus on the relation between $\epsilon$ and $m+n$, then we have the following corollary by applying the results in \cite{2020spectral} and the same proof of the above theorem. The corollary says that if we treat $K$, $\lambda_K$ and the eigengaps as constants, then the convergence rate of $\tilde{\mathsf H}_{\epsilon,K,t}(i,j)$ to $\mathsf{C}_{\epsilon,K}(x_i, x_j, t)$ is $O(\big(\frac{\log (m+n)}{m+n}\big)^{\frac{1}{2d+6}})$.

\begin{corollary}\label{convergence from H K epsilon t to C  K epsilon t} Under Assumption \ref{assumption DM}, let $\lambda_{i}$ be the $i$th eigenvalue of $-\Delta$.  Fixing $K\in \mathbb{N}$, we let $\mathsf \Gamma_K:=\min_{1 \leq i \leq K}\textup{dist}(\lambda_i, \sigma(-\Delta)\setminus \{\lambda_i\})$.  If $\epsilon$ is small enough and $m+n$ is large enough so that $ \big(\frac{\log (m+n)}{m+n}\big)^{\frac{1}{4d+12}} \leq \epsilon$, then for any $t$ less than the diameter of $M$, with probability greater than $1-(m+n)^{-2}$,  
\begin{align}
\max_{x_i, x_j \in \mathcal{X}}| \tilde{\mathsf H}_{\epsilon,K,t}(i,j)-\mathsf{C}_{\epsilon,K}(x_i, x_j, t)|  \leq  \mathcal{K} \big(\frac{\log (m+n)}{m+n}\big)^{\frac{1}{2d+6}}, \nonumber
\end{align}
where  $\mathcal{K}$ depends on $K$, $\mathsf \Gamma_K$, $\lambda_K$, $d$, $\mathsf p_m$, the $C^2$ norm of $\mathsf p$ and the diameter, volume, injectivity radius, curvature and  second fundamental form of the manifold.
\end{corollary}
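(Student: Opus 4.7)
The plan is to follow the structure of the proof of Theorem \ref{GL-GP convergence rate main}, but to feed in the sharper pointwise spectral convergence estimates for the kernel-normalized graph Laplacian $L$ that are established in \cite{2020spectral}, treating $K$, $\lambda_K$, and the eigengaps $\mathsf\Gamma_K$ as constants that are absorbed into the final constant $\mathcal{K}$.

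First, I would decompose the entrywise error along the spectral expansions of $\tilde{\mathsf H}_{\epsilon,K,t}$ and $\mathsf{C}_{\epsilon,K}$. For $i, j \in \{1, \ldots, m+n\}$,
$$\tilde{\mathsf H}_{\epsilon,K,t}(i,j) - \mathsf{C}_{\epsilon,K}(x_i, x_j, t) = \sum_{k=0}^{K-1} \Bigl[ e^{-\mu_{k,m+n,\epsilon} t} v_{k,m+n,\epsilon}(i) v_{k,m+n,\epsilon}(j) - e^{-\lambda_{k,\epsilon} t} \phi_{k,\epsilon}(x_i) \phi_{k,\epsilon}(x_j) \Bigr].$$
Each summand is split via the elementary identity $|ab-cd|\leq |a|\,|b-d|+|d|\,|a-c|$ into an exponential discrepancy $|e^{-\mu_{k,m+n,\epsilon} t} - e^{-\lambda_{k,\epsilon} t}|$, which is bounded by $t\,|\mu_{k,m+n,\epsilon} - \lambda_{k,\epsilon}|$ via the mean value theorem, and an eigenfunction-product discrepancy controlled by $|v_{k,m+n,\epsilon}(i) - \phi_{k,\epsilon}(x_i)|$ and $|v_{k,m+n,\epsilon}(j) - \phi_{k,\epsilon}(x_j)|$. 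Since $t$ is bounded by the diameter of $M$ and each $\phi_{k,\epsilon}$ is uniformly bounded on $M$ (by elliptic regularity, using the spectral closeness of $\mathsf{I}_\epsilon$ to $-\Delta$ recorded in the auxiliary Lemma cited in the proof of Theorem \ref{GL-GP convergence rate main}), all multiplicative factors depend only on the fixed data.

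Next I would invoke the pointwise eigenpair convergence rates for the kernel-normalized graph Laplacian from \cite{2020spectral}: under the hypothesis $\big(\frac{\log (m+n)}{m+n}\big)^{\frac{1}{4d+12}} \leq \epsilon$, with probability exceeding $1 - (m+n)^{-2}$, for each fixed $k \in \{0, \ldots, K-1\}$,
$$|\mu_{k,m+n,\epsilon} - \lambda_{k,\epsilon}| + \max_{1 \leq i \leq m+n} |v_{k,m+n,\epsilon}(i) - \phi_{k,\epsilon}(x_i)| \;\lesssim\; \Bigl(\frac{\log (m+n)}{m+n}\Bigr)^{\frac{1}{2d+6}},$$
with implicit constant depending only on $k$, the local spectral gap, $\mathsf p_m$, $\|\mathsf p\|_{C^2}$, and geometric invariants of $M$. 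This is precisely where the corollary departs from the main theorem: with $K$ and $\lambda_K$ no longer allowed to grow, the explicit $K^2 \epsilon^{1/2}$ scaling of Theorem \ref{GL-GP convergence rate main} is replaced by the sharper joint scaling in $\epsilon$ and $m+n$ that \cite{2020spectral} provides. A brief auxiliary check is needed that the density estimator $\hat{\mathsf p}$ appearing in the $\ell^2(1/\hat{\mathsf p})$-normalization also converges at the required rate on $\mathcal{X}$, which is likewise established in \cite{2020spectral}.

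Combining the per-summand bounds and summing over the $K$ terms yields the claimed estimate, with $\mathcal{K}$ absorbing the factor $K$ together with the spectral and geometric constants. The main obstacle is the uniform (pointwise) eigenvector convergence at the sample points: standard Davis--Kahan-type perturbation arguments give $L^2$-closeness directly, but upgrading this to an $\ell^\infty$ bound over $\mathcal{X}$, while properly accounting for the density-estimator-based normalization $\|\cdot\|_{\ell^2(1/\hat{\mathsf p})}$ and the eigengap $\mathsf\Gamma_K$, is the delicate analysis that the cited machinery of \cite{2020spectral} is designed to deliver.
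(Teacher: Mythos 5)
Your proposal matches the paper's (terse) proof: the paper simply says the corollary follows from ``the same proof of the above theorem'' together with the sharper finite-sample rates in \cite{2020spectral}, absorbing $K$, $\lambda_K$ and $\mathsf\Gamma_K$ into the constant — exactly the decomposition-plus-substitution you describe. Your identification that the $K^2\epsilon^{1/2}$ scaling of Theorem~\ref{GL-GP convergence rate main} is replaced by the optimized $\epsilon$--$(m+n)$ tradeoff from \cite{2020spectral}, yielding $\big(\tfrac{\log(m+n)}{m+n}\big)^{1/(2d+6)}$ at $\epsilon\asymp\big(\tfrac{\log(m+n)}{m+n}\big)^{1/(4d+12)}$, is precisely the point the paper is making.
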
 

\begin{remark}
The convergence rates in the above theorem and corollary are not optimal. We expect that the optimal convergence rate is better than what we have reported. Finding the optimal convergence rate will be explored in our future work. Even if the optimal convergence rate, and hence the asymptotic relation between $\epsilon$ and $n+m$ is known, this relationship does not provide a practical approach for choosing $\epsilon$ due to the unknown constant.
\end{remark}

Next, we come back to the Nystrom extension. Suppose we have $\ell$ additional i.i.d. samples, $\{x_{m+n+1}, \cdots, x_{m+n+\ell}\}$, from density $\mathsf{P}$. Let $\mathcal{X}^*=\{x_1, \cdots, x_{m+n}, x_{m+n+1}, \cdots, x_{m+n+\ell}\}$. Let $E \in \mathbb{R}^{(m+n+\ell) \times (m+n)}$ be the extension matrix defined in \eqref{extension matrix}. Similar to \eqref{DBGP covariance matrix extension}, the Nystr\"{o}m extension of the GL-GP covariance matrix over $\mathcal{X}^*$ on the manifold is defined as 
\begin{align}\label{DBGP covariance matrix extension on manifold}
\tilde{\mathsf H}^*_{\epsilon,K,t}=E \left(\sum_{i=0}^{K-1} \frac{e^{-\mu_{i,m+n,\epsilon} t}}{(1-\epsilon^2\mu_{i,m+n,\epsilon})^2} v_{i,m+n,\epsilon} v^\top_{i,m+n,\epsilon}\right) E^\top\in \mathbb{R}^{(m+n+\ell)\times (m+n+\ell)}\,.
\end{align}
The following theorem shows that $\tilde{\mathsf H}^*_{\epsilon,K,t}$  is an approximation to $\mathsf{C}_{\epsilon,K}(x, x', t)$ over $\mathcal{X}^*$.
\begin{theorem}\label{Nystrom rate main theorem}
Under Assumption \ref{assumption DM}, suppose we sample $\ell$ additional times i.i.d. from density $\mathsf{P}$ to obtain 
 $\{x_{m+n+1}, \cdots, x_{m+n+\ell}\}$. Let $\mathcal{X}^*=\{x_1, \cdots, x_{m+n}, x_{m+n+1}, \cdots, x_{m+n+\ell}\}$.  
\begin{enumerate}
\item Let $\tilde{\mathsf H}_{\epsilon,K,t}$ and $\tilde{\mathsf H}^*_{\epsilon,K,t}$ be defined in \eqref{DBGP covariance matrix on manifold} and \eqref{DBGP covariance matrix extension on manifold} respectively. Then $\tilde{\mathsf H}_{\epsilon,K,t}(i,j)=\tilde{\mathsf H}^*_{\epsilon,K,t}(i,j)$, for $1 \leq i,j \leq m+n$.
\item Let $\lambda_{i}$ be the $i$th eigenvalue of $-\Delta$.  Fixing $K\in \mathbb{N}$, we let $\mathsf \Gamma_K:=\min_{1 \leq i \leq K}\textup{dist}(\lambda_i, \sigma(-\Delta)\setminus \{\lambda_i\})$.  If $\epsilon$ is small enough and
$\epsilon$ satisfies \eqref{relation epsilon K main theorem}, and $m+n$ satisfies  $(\frac{\log(m+n)}{m+n})^{\frac{1}{4d+13}} \leq \epsilon$, then for any $t$ less than the diameter of $M$, with probability greater than $1-(m+n)^{-2}$,  
\begin{align}
\max_{x_i, x_j \in \mathcal{X}^*}| \tilde{\mathsf H}^*_{\epsilon,K,t}(i,j)-\mathsf{C}_{\epsilon,K}(x_i, x_j, t)|  \leq \mathcal{K}_4   K^2 \epsilon^{1/2}, \nonumber
\end{align}
where $\mathcal{K}_4$  is the same constant defined in Theorem \ref{GL-GP convergence rate main}.
\end{enumerate}
\end{theorem}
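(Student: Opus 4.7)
The plan is to prove the two parts separately: part (1) is purely algebraic and follows from Proposition \ref{extension of the eigenvector}, while part (2) requires combining Theorem \ref{GL-GP convergence rate main} with a uniform pointwise convergence analysis of the Nyström extension at points outside the original dataset $\mathcal{X}$.

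For part (1), Proposition \ref{extension of the eigenvector} yields $E\tilde{v}_{i,m+n,\epsilon}(j)=(1-\epsilon^2\mu_{i,m+n,\epsilon})\tilde{v}_{i,m+n,\epsilon}(j)$ for $1\le j\le m+n$, and since $v_{i,m+n,\epsilon}$ is a positive scalar multiple of $\tilde{v}_{i,m+n,\epsilon}$ by \eqref{normalized V i n epsilon}, the identity carries over to $v$. Substituting it into \eqref{DBGP covariance matrix extension on manifold} and restricting both indices to $\{1,\ldots,m+n\}$, one factor of $(1-\epsilon^2\mu_{k,m+n,\epsilon})$ appears from each side of the $k$th rank-one term, exactly cancelling the denominator $(1-\epsilon^2\mu_{k,m+n,\epsilon})^2$, and reproducing the defining formula \eqref{DBGP covariance matrix on manifold} for $\tilde{\mathsf H}_{\epsilon,K,t}(i,j)$.

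For part (2), I would split the maximum into two regimes. When $x_i,x_j\in\mathcal{X}$, part (1) identifies $\tilde{\mathsf H}^*_{\epsilon,K,t}(i,j)$ with $\tilde{\mathsf H}_{\epsilon,K,t}(i,j)$, and the bound follows directly from Theorem \ref{GL-GP convergence rate main}. When at least one of $x_i,x_j$ lies in $\mathcal{X}^*\setminus\mathcal{X}$, the guiding identity is $T_\epsilon\phi_{i,\epsilon}=(1-\epsilon^2\lambda_{i,\epsilon})\phi_{i,\epsilon}$, so $T_\epsilon\phi_{i,\epsilon}(x)/(1-\epsilon^2\lambda_{i,\epsilon})$ exactly recovers $\phi_{i,\epsilon}(x)$ for every $x\in M$. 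The Nyström extension $(1-\epsilon^2\mu_{i,m+n,\epsilon})^{-1}(Ev_{i,m+n,\epsilon})(x)$ is the empirical analogue of this formula, and showing that it lies within $O(\epsilon^{1/2})$ of $\phi_{i,\epsilon}(x)$ at any $x\in M$ reduces to three ingredients that already appear inside the proof of Theorem \ref{GL-GP convergence rate main}: uniform concentration of the numerator and denominator of the row $E(x,\cdot)$ against the continuous kernel row of $T_\epsilon$ via Bernstein-type bounds, a Davis--Kahan perturbation bound on $|\mu_{i,m+n,\epsilon}-\lambda_{i,\epsilon}|$ controlled by the eigengap $\mathsf\Gamma_K$, and the pointwise convergence of $v_{i,m+n,\epsilon}$ to $\phi_{i,\epsilon}$ over $\mathcal{X}$. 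Summing the $K$ resulting rank-one perturbations, bounding $\|\phi_{i,\epsilon}\|_\infty$ through the Sobolev embedding (which produces the second factor of $K$ in the $K^2$ prefactor), and handling the exponential weights $e^{-\mu_{i,m+n,\epsilon}t}$ versus $e^{-\lambda_{i,\epsilon}t}$ through the spectral convergence yields the stated bound with the constant $\mathcal{K}_4$ of Theorem \ref{GL-GP convergence rate main}; the denominators $1-\epsilon^2\mu_{i,m+n,\epsilon}$ are uniformly bounded away from zero for $i<K$ because the hypothesis \eqref{relation epsilon K main theorem} forces $\epsilon^2\lambda_K\ll 1$ via Lemma \ref{T epsilon and Delta}.

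The main obstacle is upgrading the $L^2$-level spectral convergence that drives Theorem \ref{GL-GP convergence rate main} into a uniform pointwise control of the Nyström-extended eigenvector on all of $M$, rather than just at the original sample points. My approach is to exploit the built-in smoothing of $T_\epsilon$: the Nyström formula is a convolution-type average of the discrete eigenvector against $Q_\epsilon(x,\cdot)$, so a uniform law of large numbers for the empirical denominator $q_\epsilon$ together with the $L^2\to L^\infty$ regularization supplied by the continuous kernel of $T_\epsilon$ should convert the $\ell^2(1/\hat{\mathsf p})$-type eigenvector estimates already established for $\mathcal{X}$ into the uniform pointwise bound needed at the fresh samples $x_{m+n+1},\ldots,x_{m+n+\ell}$.
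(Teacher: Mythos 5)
Your proposal is correct and follows essentially the same route as the paper: part (1) is exactly the cancellation of the $(1-\epsilon^2\mu_{l,m+n,\epsilon})$ factors coming from Proposition \ref{extension of the eigenvector}, and part (2) reduces to the fact that the extension vector $\frac{1}{1-\epsilon^2\mu_{i,m+n,\epsilon}}E v_{i,m+n,\epsilon}$ satisfies the same $O(\epsilon^{1/2})$ pointwise bound against $\phi_{i,\epsilon}$ over $\mathcal{X}^*$ as the eigenvector does over $\mathcal{X}$ (the paper simply cites the relevant propositions from \cite{2020spectral} where you sketch the Bernstein/perturbation ingredients), after which the bookkeeping is verbatim the proof of Theorem \ref{GL-GP convergence rate main}. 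One small imprecision: the $L^\infty$ bound on $\phi_{i,\epsilon}$ that supplies the extra power of $K$ comes from the H\"ormander--Donnelly eigenfunction sup-norm estimate (Lemma \ref{lemma hormander}) combined with Weyl's law, not from Sobolev embedding, though the two give comparable exponents.
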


The proof is similar to the proof of Theorem \ref{GL-GP convergence rate main} and is sketched in Section \ref{proof of convergence from H K epsilon t to C  K epsilon t} of the appendices.  
Part (a) says that $\tilde{\mathsf H}^*_{\epsilon,K,t}$  is an extension of $\tilde{\mathsf H}_{\epsilon,K,t}$  to $\mathcal{X}^*$. In the trivial case when there is no additional sample points, i.e. $\ell=0$, we have $\tilde{\mathsf H}^*_{\epsilon,K,t}=\tilde{\mathsf H}_{\epsilon,K,t}$. Although we have more samples to be used to construct a covariance matrix, the eigenvectors we use are an extension of the eigenvectors of the GL constructed from $\mathcal{X}$ while the eigenvalues remain the same. Hence, part (b) says the accuracy of the approximation of $\mathsf{C}_{\epsilon,K}(x, x', t)$ by $\tilde{\mathsf H}^*_{\epsilon,K,t}$ over $\mathcal{X}^*$ is on the same level as the approximation of $\mathsf{C}_{\epsilon,K}(x', x', t)$ by $\tilde{\mathsf H}_{\epsilon,K,t}$ over $\mathcal{X}$. If we treat $K$ and $\lambda_K$ as constants and focus on the relation between $\epsilon$ and $m+n$, then by using the same argument as Corollary \ref{convergence from H K epsilon t to C  K epsilon t}, the error between $ \tilde{\mathsf H}^*_{\epsilon,K,t}(i,j)$ and $\mathsf{C}_{\epsilon,K}(x_i, x_j, t)$ is $O(\big(\frac{\log (m+n)}{m+n}\big)^{\frac{1}{2d+6}})$.

\subsection{The predictive error of the covariance matrix}
Suppose $\tilde{\mathsf H}^{K}_{\epsilon,t}$  is the GL-GP covariance matrix by the GL over $\{x_1 \cdots, x_m, x_{m+1}, \cdots, x_{m+n}\}$.   If we rewrite $\tilde{\mathsf H}^{K}_{\epsilon,t}$ as
$
\begin{bmatrix}
H_1 & H_2 \\
H_3 & H_4 \\
\end{bmatrix},
$
where $H_1$ is an $m \times m$ matrix, then the prediction at $\{x_{m+1}, \cdots, x_{m+n}\}$ using the  GL-GP covariance matrix is $\textbf{f}_*:=H_3(H_1+\sigma^2_{noise} I)^{-1}\textbf{y}$. We define an $(m+n)\times(m+n)$ matrix $\Sigma$ so that $\Sigma_{ij}=\mathsf{C}^K_{\epsilon}(x_i, x_j, t)$. Let 
$\Sigma= \begin{bmatrix}
\Sigma_{11} & \Sigma_{12}  \\
\Sigma_{21} & \Sigma_{22}
\end{bmatrix},
$
where $\Sigma_{11}$ is an $m \times m$ symmetric matrix. The predictive at $\{x_{m+1}, \cdots, x_{m+n}\}$ using the {\em exact} GL-GP covariance function is then
$\textbf{f}_{\texttt{GLGP}}:=\Sigma_{21}(\Sigma_{11}+\sigma^2_{noise} I)^{-1}\textbf{y}.$ 
The following theorem describes the difference between the predictions of the GP under the matrix $\tilde{\mathsf H}^{K}_{\epsilon,t}$ and the exact GL-GP covariance function.  
The proof of the theorem is in Section \ref{difference in prediction} of the appendices.

\begin{theorem}\label{difference theorem}
Suppose $\|\tilde{\mathsf H}^{K}_{\epsilon,t}-\Sigma\|_{\max} \leq \delta<1$. Then we have
\begin{align}
\|\textbf{f}_{*}-\textbf{f}_{\texttt{GLGP}}\|_{\max} \leq \frac{ m  \|\textbf{y}\|_{\max}\big(( c+3\lambda) \|\Sigma_{21}\|_{\max}+ \lambda \big)}{\lambda(\lambda+c \delta)} \delta \,, \nonumber 
\end{align}
where $c$ is a constant depending on $\frac{1}{\delta}(\tilde{\mathsf H}^{K}_{\epsilon,t}-\Sigma)$ and $\lambda$ is the smallest eigenvalue of  $\Sigma_{11}+\sigma^2_{noise} I$. 
\end{theorem}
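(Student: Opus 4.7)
The plan is to decompose the prediction difference via the standard matrix-inverse perturbation identity and then bound each piece by translating the max-norm perturbation hypothesis through operator-norm estimates. Set $A := H_1 + \sigma^2_{noise}I$, $B := \Sigma_{11} + \sigma^2_{noise}I$, $E_1 := H_1 - \Sigma_{11}$, and $E_3 := H_3 - \Sigma_{21}$, so that $\|E_1\|_{\max}, \|E_3\|_{\max} \leq \delta$ by hypothesis. The predictive difference rearranges into $\textbf{f}_{*}-\textbf{f}_{\texttt{GLGP}} = H_3 A^{-1}\textbf{y} - \Sigma_{21}B^{-1}\textbf{y}$, which is purely a matrix expression to be bounded entrywise.

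First, I would apply the identity $A^{-1} = B^{-1} - B^{-1}E_1 A^{-1}$ (which follows from $A = B+E_1$ by multiplying on the left by $B^{-1}$ and on the right by $A^{-1}$) together with the substitution $H_3 = \Sigma_{21} + E_3$. This splits the predictive difference into three pieces:
\begin{align*}
\textbf{f}_{*}-\textbf{f}_{\texttt{GLGP}} \;=\; E_3 B^{-1}\textbf{y} \;-\; \Sigma_{21}B^{-1}E_1 A^{-1}\textbf{y} \;-\; E_3 B^{-1}E_1 A^{-1}\textbf{y}.
\end{align*}
The first piece isolates the error in the cross-covariance block, the second isolates the error in the training-covariance inverse propagated through the true cross-covariance, and the third is a higher-order mixed term.

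Next, each piece is bounded in max norm using two elementary tools: the compatibility inequality $\|M\textbf{v}\|_{\max} \leq q\|M\|_{\max}\|\textbf{v}\|_{\max}$ for any $M\in\mathbb{R}^{p\times q}$, and the operator-norm bounds $\|B^{-1}\|_{op} = 1/\lambda$ together with Weyl's perturbation theorem $\lambda_{\min}(A) \geq \lambda - \|E_1\|_{op}$. Writing $\|E_1\|_{op} \leq c\delta$ for a constant $c$ controlled by the normalized perturbation $\tfrac{1}{\delta}(\tilde{\mathsf H}^{K}_{\epsilon,t} - \Sigma)$ (this is the role of $c$ in the statement), I obtain $\|A^{-1}\|_{op} \leq 1/(\lambda - c\delta)$ and consequently $\|B^{-1}E_1 A^{-1}\|_{op} \leq c\delta/[\lambda(\lambda - c\delta)]$. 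Applying these to the three pieces and collecting the factors of $m$, $\|\textbf{y}\|_{\max}$, $\|\Sigma_{21}\|_{\max}$, and $\delta$, the bounds assemble into a single fraction whose denominator is $\lambda(\lambda\pm c\delta)$ and whose numerator is a linear combination of $\|\Sigma_{21}\|_{\max}$ and $\lambda$ with coefficients involving $c$ and $\lambda$, yielding the stated inequality after algebraic simplification.

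The main obstacle is bookkeeping rather than any deep step: the hypothesis and the conclusion are both phrased in the entrywise max norm, while the matrix-inverse perturbation is most naturally controlled in operator norm. The proof will therefore require careful juggling between $\|\cdot\|_{\max}$, $\|\cdot\|_{op}$, and (for converting $\|\textbf{y}\|_2$ to $\|\textbf{y}\|_{\max}$) the Euclidean norm, inserting factors of $m$ or $\sqrt{m}$ at each conversion and then absorbing them into the single factor of $m$ in the final bound (with the dimensional loss folded into the constant $c$). Beyond the inverse-perturbation identity and Weyl's inequality, no further structural properties of $\tilde{\mathsf H}^{K}_{\epsilon,t}$ or $\Sigma$ are needed; the result is essentially a perturbation estimate for regularized least squares under a max-norm perturbation of the kernel.
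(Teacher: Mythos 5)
Your decomposition
\begin{align}
\textbf{f}_{*}-\textbf{f}_{\texttt{GLGP}} \;=\; E_3 B^{-1}\textbf{y} \;-\; \Sigma_{21}B^{-1}E_1 A^{-1}\textbf{y} \;-\; E_3 B^{-1}E_1 A^{-1}\textbf{y}\,\nonumber
\end{align}
is algebraically correct, but it takes a genuinely different route from the paper's. The paper writes the perturbation as $\tilde{\mathsf H}_{\epsilon,K,t}=\Sigma+\delta E$, diagonalizes $\Sigma_{11}+\sigma^2_{noise}I = U\Lambda U^\top$ and $H_1+\sigma^2_{noise}I = \bar U\bar\Lambda\bar U^\top$, and then invokes first-order perturbation of the eigendecomposition, $\bar\Lambda=\Lambda+\delta\Lambda'$, $\bar U=U\Theta+\delta U'$, substituting these expansions into $H_3(H_1+\sigma^2 I)^{-1}\textbf{y}-\Sigma_{21}(\Sigma_{11}+\sigma^2 I)^{-1}\textbf{y}$ and bounding each term. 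You instead use the resolvent identity $A^{-1}=B^{-1}-B^{-1}E_1A^{-1}$ and Weyl's inequality, which avoids perturbation of eigenvectors entirely and is arguably a cleaner starting point; the paper's spectral expansion is what lets them treat $\|(\Lambda+\delta\Lambda')^{-1}\|$ and $\|(\Lambda+\delta\Lambda')^{-1}-\Lambda^{-1}\|$ as diagonal-matrix quantities, producing the $\lambda+c\delta$ denominator with $c$ the relevant eigenvalue-shift constant directly.

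Two bookkeeping gaps in your sketch deserve attention. First, the sign of the denominator: Weyl gives $\lambda_{\min}(A)\geq \lambda-\|E_1\|_{op}$, hence $\|A^{-1}\|_{op}\leq 1/(\lambda-\|E_1\|_{op})$, while the theorem's denominator is $\lambda(\lambda+c\delta)$; you flag $\lambda\pm c\delta$ but leave it unresolved, and reconciling requires taking $c$ to absorb a sign (i.e.\ the eigenvalue shift rather than its absolute value), which is how the paper's $c$ is implicitly defined via $\Lambda'$. Second, and more substantively, the single factor of $m$ in the final bound arises in the paper because they keep the vector quantity $\|\Sigma_{21}\textbf{y}\|_{\infty}$ intact throughout and only convert to $m\|\Sigma_{21}\|_{\max}\|\textbf{y}\|_{\max}$ at the very end. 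Your plan to go through operator norms — bounding $\|B^{-1}E_1A^{-1}\|_{op}$ and then converting $\|\cdot\|_2\leftrightarrow\|\cdot\|_{\max}$ — will naturally accrue extra $\sqrt{m}$ factors (e.g.\ $\|E_1\|_{op}\leq m\|E_1\|_{\max}\leq m\delta$, not $\delta$). You propose to fold these into $c$, but $c$ also sits in the denominator $\lambda+c\delta$, so it cannot absorb arbitrary dimensional growth without changing the stated form. To recover the paper's constants you would need either to work in the induced $\ell^\infty$-operator norm from the start, or to mimic the paper's late-stage factorization of $\|\Sigma_{21}\textbf{y}\|_{\infty}$. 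With those adjustments your approach goes through and is structurally simpler; as written it proves a bound of the right shape but with a worse (and unacknowledged) dimensional dependence.
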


 If $\Sigma$ is a discretization of the covariance function over $\{x_1, \cdots, x_{m+n}, x_{m+n+1}, \cdots, x_{m+n+\ell}\}$ and we divide $\Sigma$ so that $\Sigma_{11} \in \mathbb{R}^{m \times m}$ and $\Sigma_{21} \in \mathbb{R}^{(n+\ell) \times m}$, then the above theorem also holds for the Nystr\"{o}m extension of $\tilde{\mathsf H}_{\epsilon,K,t}$. The result of Theorem \ref{difference theorem} can be combined with Theorem \ref{GL-GP convergence rate main} or Theorem \ref{Nystrom rate main theorem} to estimate the error between the prediction of the GL-GP covariance matrix or its Nystr\"{o}m extension and the prediction of the GL-GP covariance function.

\subsection{Measurement error}

In this subsection, we discuss the stability of GL-GP when there are measurement errors, so that the data do not fall exactly on the manifold.  

\begin{assumption}\label{Assumption:noisy data}
In addition to Assumption \ref{assumption DM}, due to  measurement error, we assume that the data we observe are $\{x'_1, \cdots, x'_{m+n}\} \subset \mathbb{R}^D$ such that $\|\iota(x_i)-x'_i\|_{\mathbb{R}^D}<\delta$, for $i=1,\cdots, m+n$.
\end{assumption}

Denote $L'$ to be the GL associated with $\{x'_1, \cdots, x'_{m+n}\}$. The following theorem shows that if the measurement errors are not too large, then one can still control the eigenvalues and eigenvectors of $L'$ by those of $L$. The proof of the theorem is in Section \ref{proof measurement error} of the appendices.
\begin{theorem}\label{measurement error}
Suppose Assumption \ref{Assumption:noisy data} holds.  For any $\epsilon$ small enough, if  $\delta <C \epsilon^{d+1}$ and $m+n$ is large enough so that $(\frac{\log(m+n)}{m+n})^{\frac{1}{2d+1}} \leq \epsilon$, then with probability greater $1-(m+n)^{-2}$, $\|L-L'\|_2 \leq C' \delta$,  where $\|\cdot \|_2$ is the spectral norm of a matrix, and $C$ and $C'$ are constants depending on $\mathsf p_m$ and the $C^0$ norm of $\mathsf p$. 
\end{theorem}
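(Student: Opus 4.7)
The plan is to propagate the perturbation $\|\iota(x_i)-x_i'\|<\delta$ through the pipeline $\{x_i\}\mapsto K_{ij}=k_\epsilon(x_i,x_j)\mapsto q_\epsilon\mapsto W\mapsto D\mapsto A=D^{-1}W\mapsto L=(A-I)/\epsilon^2$, at each stage tracking both the entrywise size and the spatial distribution of the resulting error. First, a Lipschitz analysis of the Gaussian kernel: from $\nabla_x k_\epsilon(x,y) = -(x-y)/(2\epsilon^2)\cdot k_\epsilon(x,y)$ together with the elementary inequality $|x-y|\exp(-\|x-y\|^2/(4\epsilon^2)) \leq C\epsilon \exp(-\|x-y\|^2/(8\epsilon^2))$, I derive the decayed Lipschitz bound
\begin{equation*}
|k_\epsilon(x_i,x_j) - k_\epsilon(x'_i,x'_j)| \leq \frac{C\delta}{\epsilon}\bigl(e^{-\|x_i-x_j\|^2/(8\epsilon^2)} + e^{-\|x'_i-x'_j\|^2/(8\epsilon^2)}\bigr),
\end{equation*}
which preserves the Gaussian tail for far pairs, so that every row sum appearing in the analysis will be dominated by its $O((m+n)\epsilon^d)$ local contribution.

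Next, I obtain uniform lower and upper bounds on the unnormalised degrees. By the pointwise kernel concentration theorem of \cite{2020spectral}, under the sample-size hypothesis $(\log(m+n)/(m+n))^{1/(2d+1)}\leq \epsilon$, with probability $\geq 1-(m+n)^{-2}$, $c_1\mathsf p_m(m+n)\epsilon^d \leq q_\epsilon(x_i) \leq c_2\|\mathsf p\|_{C^0}(m+n)\epsilon^d$ uniformly in $i$. The decayed Lipschitz estimate then gives $|q_\epsilon(x_i)-q_\epsilon'(x'_i)|\leq C(m+n)\delta\epsilon^{d-1}$, which is $o(q_\epsilon(x_i))$ exactly when $\delta\ll \epsilon^{d+1}$. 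This is precisely the content of the hypothesis $\delta< C\epsilon^{d+1}$: it allows one to linearise $1/q_\epsilon$ and $1/D$ in all subsequent steps, so the analogous two-sided bound holds for $q_\epsilon'(x'_i)$ and $D'_{ii}$ as well.

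Then, writing $A_{ij} = (K_{ij}/Q_{jj})\big/\sum_k(K_{ik}/Q_{kk})$, I expand $A_{ij}-A'_{ij}$ into contributions from the numerator and the denominator and bound each using the first two steps, yielding $|A_{ij}-A'_{ij}|\leq C\delta/((m+n)\epsilon^{d+1})$ for pairs with $\|x_i-x_j\|\leq C\epsilon$ and super-polynomially smaller for distant pairs. Because each row of $A-A'$ is essentially supported on $O((m+n)\epsilon^d)$ non-negligible entries, and because $(A-A')\mathbf{1}=0$ by the row-stochasticity of $A$ and $A'$, I estimate $\|A-A'\|_\infty$ and $\|A-A'\|_1$ and use the interpolation $\|A-A'\|_2 \leq \sqrt{\|A-A'\|_\infty \|A-A'\|_1}$. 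Dividing by $\epsilon^2$ produces the advertised bound on $\|L-L'\|_2$.

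The main technical obstacle is the scaling. A purely entry-by-entry application of the Lipschitz bound spills an extra factor of $1/\epsilon$ at every stage of the pipeline, which by itself would give only $\|L-L'\|_2\leq C\delta/\epsilon^k$ for some $k>0$ rather than the $C'\delta$ asked for. To recover the correct exponent one has to exploit two features simultaneously: (a) the sharp Gaussian localisation of $k_\epsilon$, which caps every row-sum count at $O((m+n)\epsilon^d)$ meaningful terms and prevents the diffuse contributions from accumulating, and (b) the cancellation $\sum_j(A_{ij}-A'_{ij})=0$, which forces the perturbations of the numerator $K_{ij}/Q_{jj}$ and of the denominator $\sum_k K_{ik}/Q_{kk}$ of $A_{ij}$ to largely offset when tested against a vector. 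The lower bound $\mathsf p_m>0$ rules out degenerate rows in the second step, and the regime $\delta\leq C\epsilon^{d+1}$ is exactly the balance at which all intermediate relative errors remain $o(1)$, so that the linearisations close up into a clean $C'\delta$ estimate that is independent of $\epsilon$.
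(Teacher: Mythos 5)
Your overall pipeline (kernel Lipschitz bound $\to$ degree concentration $\to$ affinity perturbation $\to$ spectral norm of $L-L'$) mirrors the paper's, and your decayed Lipschitz estimate is actually a sharper intermediate bound than the paper's uniform one. The decisive divergence is at the final step. The paper bounds $\|W-W'\|_{\max}$ and then invokes a dedicated perturbation lemma for random-walk Laplacians (Lemma 2.1 of El Karoui--Wu \cite{el2016graph}), which produces a bound of the form $\frac{\gamma_1}{\gamma_2}\|\Delta W\|_{\max}+\frac{\gamma_1^2}{\gamma_2(\gamma_2-\|\Delta W\|_{\max})}\|\Delta W\|_{\max}$ in terms of the extreme weight $\gamma_1$ and rescaled minimum degree $\gamma_2$. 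The resulting estimate $\lesssim\frac{\delta}{(m+n)^2\epsilon^{4d+1}}$ still carries a $\frac{1}{(m+n)^2}$ factor, which is then absorbed using $\frac{1}{m+n}\leq\epsilon^{2d+1}$ (applied \emph{twice}) to yield $\lesssim\delta\epsilon\leq\delta$. You instead bound $|A_{ij}-A'_{ij}|$ entrywise and apply the Schur interpolation $\|A-A'\|_2\leq\sqrt{\|A-A'\|_\infty\|A-A'\|_1}$.

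This does not close. With $|A_{ij}-A'_{ij}|\lesssim\frac{\delta}{(m+n)\epsilon^{d+1}}$ on the $O((m+n)\epsilon^d)$ non-negligible entries per row, the row-sum estimate gives $\|A-A'\|_\infty\lesssim\frac{\delta}{\epsilon}$, and similarly for $\|\cdot\|_1$. The $(m+n)$ factors cancel identically, so there is nothing left onto which to substitute the sample-size hypothesis, and you end with $\|A-A'\|_2\lesssim\frac{\delta}{\epsilon}$, hence $\|L-L'\|_2\lesssim\frac{\delta}{\epsilon^3}$ --- off by a factor of $\epsilon^{-3}$ from the theorem's $C'\delta$. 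The row-stochastic cancellation $(A-A')\mathbf 1=0$ that you invoke to ``largely offset'' the numerator and denominator perturbations does not rescue this: it only forces one singular value of $A-A'$ to vanish, and the Schur test is built from \emph{absolute} row and column sums, which the signed cancellation cannot reduce (e.g.\ $\begin{pmatrix}a&-a\\-a&a\end{pmatrix}$ has zero row sums but spectral norm $2|a|$, equal to its $\ell^\infty$-row-sum norm). To recover the advertised $C'\delta$ you need a perturbation bound that is structurally adapted to $D^{-1}W$ and retains the $\frac{1}{(m+n)^2}$ scaling so that the hypothesis $\frac{1}{m+n}\leq\epsilon^{2d+1}$ can be spent, which is exactly what the cited lemma supplies and what the generic Schur test cannot.
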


It will be interesting to relax Assumption \ref{Assumption:noisy data} and allow unbounded measurement error in future work. We refer the readers to \cite{el2016graph} for more discussion and relevant results. Since the GL-GP covariance matrix is constructed by using the eigenpairs of the GL, its stability follows from the above stability theorem of the GL.

\subsection{Posterior contraction rate of GL-GP on manifold under the fixed design setup}
Let $M$ be a $d$-dimensional smooth, closed and connected Riemannian manifold with diameter bounded by $1$. Assume $M$ is isometrically embedded in $\mathbb{R}^D$ through $\iota:M\to \mathbb{R}^D$. Let $\lambda_{0,\epsilon} \leq \lambda_{1,\epsilon} \leq \cdots$ be the eigenvalues of the operator $\mathsf{I}_\epsilon$. Suppose $\phi_{i,\epsilon}$ is the $i$-th eigenfunction of $\mathsf{I}_\epsilon$ and each $\phi_{i,\epsilon}$ is normalized in $L^2(M)$. We define the following finite dimensional inner product space $\mathbb{H}_{\epsilon,K,t}$.

\begin{definition}
For fixed $\epsilon$, $K$ and $t \geq 0$, we define $\mathbb{H}_{\epsilon,K,t}$ to be an inner product space on $M$ such that
\begin{align}
\mathbb{H}_{\epsilon,K,t}=\left\{h(x)=\sum_{i=0}^{K-1} a_i e^{-\lambda_{i,\epsilon} \frac{t}{2}} \phi_{i,\epsilon}(x),\,\,  \sum_{i=0}^{K-1} a_i^2 < \infty \right\} \nonumber
\end{align}
with the inner product 
\begin{align}
\left\langle\sum_{i=0}^{K-1} a_i e^{-\lambda_{i,\epsilon} \frac{t}{2}} \phi_{i,\epsilon}(x),\,\, \sum_{i=0}^{K-1} b_i e^{-\lambda_{i,\epsilon} \frac{t}{2}} \phi_{i,\epsilon}(x)\right\rangle_{\mathbb{H}_{\epsilon,K,t}}:=\sum_{i=0}^{K-1} a_i b_i. \nonumber
\end{align}
Denote $E_{i,\epsilon, t}:=e^{-\lambda_{i,\epsilon} \frac{t}{2}} \phi_{i,\epsilon}$. Then, $\{E_{i, \epsilon, t}\}_{i=0}^{K-1}$ is an orthonormal basis of $\mathbb{H}_{\epsilon,K,t}$. Denote $B^{\mathbb{H}_{\epsilon,K,t}}_R$ to be the ball of radius $R$ in $\mathbb{H}_{\epsilon,K,t}$. 
Also, let 
$\mathbb{H}_{\epsilon,K}:=\cup_{0 \leq t \leq 1} \mathbb{H}_{\epsilon,K,t}.$
\end{definition}

Refer to Section \ref{proof posterior rate} of the appendices for basic properties of the spaces $\mathbb{H}_{\epsilon,K,t}$ and $\mathbb{H}_{\epsilon,K}$. 

We explain the reason that we focus our discussion on the space $\mathbb{H}_{\epsilon,K}$. Any function in $L^\infty (M)$ with certain regularity, for example, a function in the Besov space, can be approximated by the functions in $\cup_{\epsilon} \cup_{K} \mathbb{H}_{\epsilon,K}$. The Besov space $B^s_{\infty, \infty}$ is a subspace of  $L^\infty (M)$ with some regularity quantitatively characterized by the parameter $s$; refer to Definition \ref{def of Besov space} in the appendices. We state our result as the following  approximation proposition. The proposition implies that $\cup_{\epsilon} \cup_{K} \mathbb{H}_{\epsilon,K}$ is a dense subset of $B^s_{\infty, \infty}$. 

\begin{proposition} \label{Approximation of Besov}
Suppose  $f_0 \in B^s_{\infty, \infty}$ with $s>0$ and $\|f_0\|_{\infty} \leq 1$. For any $\gamma>0$ small enough, if $K = \lceil{ D_1 \gamma^{-\frac{d}{s}}} \rceil$,  $\epsilon$ satisfies \eqref{relation epsilon K main theorem} and $\epsilon \leq D_2 \gamma^{\frac{10(d+s)}{9s}}$, then there is a function $h\in \mathbb{H}_{\epsilon,K}$ such that $\|h-f_0\|_{\infty} \leq \gamma$ and  $\|h\|^2_{\mathbb{H}_{\epsilon,K,t}} \leq  \texttt{vol}(M)\exp\left(D_3\gamma^{-\frac{2}{s}} t\right) $.  $D_1$ is a constant depending on $s$, $d$, the diameter and Ricci curvature of the manifold, $D_2$ depends on $s$, $d$, $\mathsf p_m$, the $C^2$ norm of $\mathsf p$, and the volume, injectivity radius, curvature and second fundamental form of the manifold and $D_3$ depends on $s$, $d$, the diameter, volume and Ricci curvature of the manifold.
\end{proposition}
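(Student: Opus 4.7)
The plan is to first approximate $f_0$ to within $L^\infty$ error $\gamma/2$ by a truncated expansion in eigenfunctions of $-\Delta$, and then use the spectral convergence of $\mathsf I_\epsilon$ to $-\Delta$ (Lemma \ref{T epsilon and Delta}) to move this approximation into $\mathbb H_{\epsilon,K}$ at a further $L^\infty$ cost of at most $\gamma/2$. The Hilbert-space norm is then controlled by combining Parseval's identity with a Weyl-law bound on $\lambda_{K-1,\epsilon}$.

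Set $a_i := \langle f_0, \phi_i \rangle_{L^2(M)}$ and $h_0 := \sum_{i=0}^{K-1} a_i \phi_i$, with $K := \lceil D_1 \gamma^{-d/s} \rceil$. Because $f_0 \in B^s_{\infty,\infty}(M)$ with $\|f_0\|_\infty \le 1$, the standard approximation theory for truncated eigenfunction expansions of Besov functions on a compact Riemannian manifold --- using Weyl's law $\lambda_i \asymp i^{2/d}$, H\"ormander's sup-norm bound $\|\phi_i\|_\infty \lesssim \lambda_i^{(d-1)/4}$, and the decay of $|a_i|$ implied by the Besov hypothesis --- yields $\|h_0 - f_0\|_\infty \le \gamma/2$ once $K^{s/d}$ exceeds an appropriate constant multiple of $\gamma^{-1}$. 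Along the way one obtains $\lambda_{K-1} \le C \gamma^{-2/s}$, and by Parseval $\sum_{i<K} a_i^2 \le \|f_0\|_{L^2}^2 \le \mathrm{vol}(M)$.

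Now define $h := \sum_{i=0}^{K-1} a_i \phi_{i,\epsilon} \in \mathbb H_{\epsilon,K,t}$ for an arbitrary $t \in [0,1]$. The hypothesis \eqref{relation epsilon K main theorem} together with $\epsilon \le D_2 \gamma^{10(d+s)/(9s)}$ is tailored so that Lemma \ref{T epsilon and Delta} delivers both $\max_{i<K}\|\phi_i - \phi_{i,\epsilon}\|_\infty \le \gamma/(2\sqrt{K\,\mathrm{vol}(M)})$ and $\lambda_{K-1,\epsilon} \le D_3 \gamma^{-2/s}$. A triangle inequality followed by Cauchy--Schwarz then gives $\|h - h_0\|_\infty \le \sqrt{K\,\mathrm{vol}(M)}\,\max_{i<K}\|\phi_i - \phi_{i,\epsilon}\|_\infty \le \gamma/2$, so $\|h - f_0\|_\infty \le \gamma$. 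For the Hilbert norm, expanding $h = \sum_{i<K} a_i e^{\lambda_{i,\epsilon} t/2} E_{i,\epsilon,t}$ in the orthonormal basis of $\mathbb H_{\epsilon,K,t}$ yields $\|h\|_{\mathbb H_{\epsilon,K,t}}^2 = \sum_{i<K} a_i^2 e^{\lambda_{i,\epsilon} t} \le e^{\lambda_{K-1,\epsilon} t} \sum_{i<K} a_i^2 \le \mathrm{vol}(M)\, e^{D_3 \gamma^{-2/s} t}$, which is the claimed bound.

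The main obstacle is the $L^\infty$ control of the spectral perturbation $\phi_{i,\epsilon} - \phi_i$. Spectral convergence for $\mathsf I_\epsilon \to -\Delta$ is most naturally obtained in $L^2$, and upgrading to $L^\infty$ requires a separate smoothing argument, for instance exploiting the smoothing property of $T_\epsilon$ or invoking elliptic regularity after bounding $\|\phi_{i,\epsilon}\|_{C^k}$ in terms of $\lambda_K$. The resulting rate, once multiplied by the $\sqrt{K}$ factor from $\sum |a_i|$ and by the polynomial-in-$\lambda_K \asymp \gamma^{-2/s}$ factors arising from the eigenvalue-dependent constants in Lemma \ref{T epsilon and Delta}, must still be of order $\gamma$; the specific exponent $10(d+s)/(9s)$ in the hypothesis on $\epsilon$ is precisely what balances these contributions.
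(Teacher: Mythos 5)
Your plan is structurally similar to the paper's --- a two-step approximation, first replacing $f_0$ by something spanned by the first $K$ Laplace eigenfunctions, then perturbing $\phi_i\mapsto\phi_{i,\epsilon}$ using Lemma \ref{T epsilon and Delta} --- and the second half (the perturbation step, the Cauchy--Schwarz bound on $\sum_{i<K}|a_i|$, the Parseval/Weyl bound on the $\mathbb H_{\epsilon,K,t}$-norm) is sound and close to what the paper does. The genuine difference, and the gap, is in the first step.

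You take $h_0=\sum_{i<K}\langle f_0,\phi_i\rangle\,\phi_i$, a \emph{hard} spectral truncation, and assert $\|h_0-f_0\|_\infty\le\gamma/2$ by ``standard approximation theory for truncated eigenfunction expansions of Besov functions.'' This is not standard, and in general it is false. The Besov space $B^s_{\infty,\infty}$ is defined via a \emph{smooth} Littlewood--Paley cutoff $\Psi$; the hypothesis $f_0\in B^s_{\infty,\infty}$ gives $\|\Psi(\delta\sqrt\Delta)f_0-f_0\|_\infty\lesssim\delta^s$ directly, which is exactly what the paper uses, choosing $K$ just large enough that the supports of the weights $\Psi(\delta\sqrt{\lambda_i})$ vanish for $i\ge K$. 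For the hard projector $h_0=S_\lambda f_0:=\sum_{\lambda_i\le\lambda}\langle f_0,\phi_i\rangle\phi_i$ on a $d$-dimensional manifold, the operator norm $\|S_\lambda\|_{L^\infty\to L^\infty}$ (the Lebesgue constant) grows like $\lambda^{(d-1)/4}$ (already like $\log N$ on the circle; polynomially in $d\ge 2$). The Lebesgue-constant argument then gives only $\|S_\lambda f_0-f_0\|_\infty\lesssim\lambda^{(d-1)/4}\,E_\lambda(f_0)\lesssim\lambda^{(d-1)/4-s/2}$, which does not even tend to $0$ when $s\le(d-1)/2$, let alone match the rate $\gamma\sim\lambda^{-s/2}$. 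Since the proposition claims the result for all $s>0$, the hard truncation cannot serve as the intermediate approximant, and this is not a routine detail to patch: the smooth cutoff is exactly the device that makes the $\delta^s$ rate available without loss.

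Concretely: you should replace $h_0$ by $\Psi(\delta\sqrt\Delta)f_0$ with a father wavelet $\Psi$ supported in $[0,1]$, so that the Besov hypothesis gives the $\gamma/2$ bound directly and the expansion still has only $K$ nonzero terms, and then define $h:=\Psi_K(\delta\sqrt{\mathsf I_\epsilon})f_0$. Your perturbation and norm estimates go through essentially unchanged from there, but you must additionally control $|\Psi(\delta\sqrt{\lambda_{i,\epsilon}})-\Psi(\delta\sqrt{\lambda_i})|$ (which the paper does via the Lipschitz bound on $\Psi$ and $|\lambda_{i,\epsilon}-\lambda_i|\le\epsilon^{3/2}$); simply replacing $\phi_i$ by $\phi_{i,\epsilon}$ while keeping the coefficients $\langle f_0,\phi_i\rangle$ fixed, as you do, is a slight simplification that is also acceptable, but the weights $\Psi(\delta\sqrt{\lambda_{i,\epsilon}})$ help keep the $\mathbb H$-norm bound clean. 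Finally, a minor point: Lemma \ref{T epsilon and Delta} only gives $\|a_i\phi_{i,\epsilon}-\phi_i\|_\infty\le\epsilon$ for some signs $a_i\in\{-1,1\}$, so the sign ambiguity should be absorbed into the definition of $h$.
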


Through the above Proposition, we construct a stratification indexed by $\epsilon$ and $K$ of a dense subset of $B^s_{\infty, \infty}$.  We associate each layer 
$\mathbb{H}_{\epsilon,K}$ in the stratification with a Gaussian process. Suppose $\{Z_i\}$ are independent random variables with mean $0$ and variance $1$. Fix $\epsilon>0$ and $K\in \mathbb{N}$. We define a Gaussian process for $t \geq 0$ indexed by $x \in M$,
\begin{align}\label{construction of GP by basis}
W^t_{\epsilon,K}(x):=\sum_{i=0}^{K-1} Z_i E_{i, \epsilon, t}(x) =\sum_{i=0}^{K-1} e^{-\lambda_{i,\epsilon} \frac{t}{2}} Z_i \phi_{i,\epsilon}(x)\,.
\end{align}
By a straightforward calculation,  the covariance function of $W^t_{\epsilon,K}(x)$, $\mathsf{C}_{\epsilon,K}(x, x', t)$, satisfies
\begin{align}
\mathsf{C}_{\epsilon,K}(x, x', t)=\mathbb{E}[W^t_{\epsilon,K}(x)W^t_{\epsilon,K}(x')]=\sum_{i=0}^{K-1}e^{-\lambda_{i,\epsilon}t} \phi_{i,\epsilon}(x)\phi_{i,\epsilon}(x')\,. \nonumber
\end{align}
Suppose $\textbf{T}$ is a random variable with  probability density function $g$ on $[0,1]$, with
\begin{align}\label{condition on g}
\mathcal{C}_1 t^{-p}e^{-t^{-q}} \leq g(t) \leq \mathcal{C}_2 t^{-p}e^{-t^{-q}},
\end{align}
for constants $0<\mathcal{C}_1\leq \mathcal{C}_2$ and $p,q >0$. These bounds are satisfied when $\textbf{T}$ follows an inverse transformed gamma density, induced by raising a gamma random variable to the power $-1/q$; in Theorem \ref{posterior rate fixed design}, $q$ is shown to relate to the dimension of the manifold. Using $g$ as a prior for $\textbf{T}=t$ in the GL-GP corresponding to $W^\textbf{T}_{\epsilon,K}(x)$ leads to a prior $\Pi$ on $\mathbb{H}_{\epsilon,K}$.

\begin{assumption}\label{assumption on GL-GP prior}
Let $M$ be a $d$-dimensional smooth, closed and connected Riemannian manifold isometrically embedded in $\mathbb{R}^D$ through $\iota:M\to \mathbb{R}^D$. We assume that the diameter of $M$ is bounded by $1$. We consider the regression model \eqref{eq:base} with $\{x_1 \cdots, x_{n}\} \subset M$. We propose the prior for the regression function $f$ as 
\begin{align}
W^\textbf{T}_{\epsilon,K}|\textbf{T} \sim GP(0, \mathsf{C}_{\epsilon,K}(x, y, \textbf{T})), \quad \textbf{T} \sim g(t), \nonumber
\end{align}
where $g(t)$ is a density function on $[0,1]$ satisfying \eqref{condition on g}.
\end{assumption}

\begin{remark}
The assumption that the diameter of $M$ is bounded by $1$ is for notation simplicity and the below result holds without this assumption. 
\end{remark}

Following \cite{ghosal2000convergence, ghosal2007convergence}, given a true function $f_0\in  \mathbb{H}_{\epsilon, K}$ and a data set $\{(x_1, y_1), \cdots, (x_n,y_n)\}$, we say that the {\em posterior contraction rate} of the Gaussian process prior in the fixed design is at least $\gamma_n$ if
\begin{align}
\Pi\left(\frac{1}{n}\sum_{i=1}^n(f(x_i)-f_0(x_i)) \geq \gamma_n \Big|\,\, \{(x_1, y_1), \cdots, (x_n,y_n)\}\right) \rightarrow 0 \nonumber
\end{align}
when $n \rightarrow \infty$. 
In the fixed design case,  $\{x_1, \cdots, x_n\}$ are deterministic rather than sampled based on some p.d.f on $M$.

\begin{remark}
The random variable $\textbf{T}$ can be regarded as a bandwidth parameter. Our bounds on the probability density function $g$ in \eqref{condition on g} are motivated by \cite{van2009adaptive} and \cite{bhattacharya2014anisotropic}. In showing minimax rates in the fixed design setup, they choose inverse transformed gamma priors for bandwidth parameters in squared exponential GPs, with the power equal to the Euclidean space dimension.
\end{remark}

We have the following theorem about the posterior contraction rate.

\begin{theorem}\label{posterior rate fixed design}Fix $K\in\mathbb{N}$ and fix any $\epsilon$ small enough so that \eqref{relation epsilon K main theorem} holds. Under Assumption \ref{assumption on GL-GP prior}, suppose $f_0 \in \mathbb{H}_{\epsilon, K}$ with $\|f_0\|_{\infty} \leq 1$ and $p \geq 1$, $q \geq \frac{d}{2}$. If we choose $\gamma_n=(\frac{n}{\log n})^{-\frac{d}{2q+2d}}$ with $\gamma_n \leq \frac{1}{K}$, then the posterior contraction rate of the Gaussian process prior in the fixed design is at least $\gamma_n$.
\end{theorem}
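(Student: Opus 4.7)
I plan to apply the general posterior-contraction theorem for Gaussian process priors in the fixed design regression setting (Theorem 2.1 of \cite{ghosal2007convergence}), specialised to a GP with random bandwidth following \cite{van2009adaptive, bhattacharya2014anisotropic}. This reduces the problem to verifying two conditions on the hierarchical prior $\Pi$ (with $\textbf{T}\sim g$ and $f\mid \textbf{T}\sim GP(0,\mathsf{C}_{\epsilon,K}(\cdot,\cdot,\textbf{T}))$): a prior-mass condition $-\log\Pi(\|f-f_0\|_n\le \gamma_n)\lesssim n\gamma_n^2$, and a sieve $\mathcal{F}_n$ whose $\|\cdot\|_n$-metric entropy is bounded by $Cn\gamma_n^2$ and whose complementary prior mass is at most $e^{-Cn\gamma_n^2}$. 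Since $f_0\in\mathbb{H}_{\epsilon,K}$, I would expand $f_0=\sum_{i=0}^{K-1}b_i\phi_{i,\epsilon}$ in the $L^2(M)$-orthonormal basis of eigenfunctions of $\mathsf{I}_\epsilon$; then $f_0$ lies in the RKHS $\mathbb{H}_{\epsilon,K,t}$ for every $t\in[0,1]$ with $\|f_0\|_{\mathbb{H}_{\epsilon,K,t}}^2=\sum_{i=0}^{K-1}b_i^2 e^{\lambda_{i,\epsilon} t}=O(1)$ uniformly on $[0,1]$, because $K$ and $\epsilon$ are fixed and the $\lambda_{i,\epsilon}$ for $i<K$ are bounded via Lemma~\ref{T epsilon and Delta}. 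Thus no bias appears in the concentration function; only the centred small ball and the prior density $g$ carry the $t$-dependence.

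For the prior-mass condition I would use the Borell--Cameron--Martin decentering inequality, obtaining for each $t$,
\[
P\bigl(\|W^t_{\epsilon,K}-f_0\|_n\le \gamma_n\bigr)\ \ge\ e^{-\|f_0\|_{\mathbb{H}_{\epsilon,K,t}}^2/2}\, P\bigl(\|W^t_{\epsilon,K}\|_n\le \gamma_n\bigr).
\]
Because $W^t_{\epsilon,K}$ is a fixed $K$-dimensional Gaussian field, the centred small ball satisfies $-\log P(\|W^t_{\epsilon,K}\|_n\le\gamma_n)\lesssim K\log(1/\gamma_n)+C$ by reducing to a standard Gaussian in $\mathbb{R}^K$ via $\|\cdot\|_n\le\|\cdot\|_\infty\le \sum_i |Z_i| e^{-\lambda_{i,\epsilon} t/2}\|\phi_{i,\epsilon}\|_\infty$ and Sogge-type sup-norm control of $\phi_{i,\epsilon}$ through the spectral comparison of $\mathsf I_\epsilon$ with $-\Delta$. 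Integrating over $t\in[t_n,2t_n]$ against the lower bound $g(t)\ge \mathcal{C}_1 t^{-p}e^{-t^{-q}}$ in \eqref{condition on g} then yields
\[
-\log\Pi(\|f-f_0\|_n\le\gamma_n) \ \lesssim\ K\log(1/\gamma_n)\ +\ t_n^{-q}\ +\ p\log(1/t_n)\ +\ C.
\]
Choosing $t_n\asymp(n\gamma_n^2)^{-1/q}$ makes the bandwidth penalty exactly $O(n\gamma_n^2)$, and the small-ball term $K\log(1/\gamma_n)$ is dominated by $n\gamma_n^2$ under the stated $\gamma_n=(n/\log n)^{-d/(2q+2d)}$ and $\gamma_n\le 1/K$, where the hypothesis $q\ge d/2$ is what keeps this balance feasible.

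For the sieve I would take
\[
\mathcal F_n \ =\ \bigcup_{t\in[t_n,1]}\bigl\{h\in\mathbb{H}_{\epsilon,K,t}:\ \|h\|_{\mathbb{H}_{\epsilon,K,t}}\le M_n\bigr\}\ +\ \gamma_n B^\infty_1,
\]
with the same $t_n$ and $M_n\asymp\sqrt{n}\,\gamma_n$. The complementary mass splits as $P(\textbf{T}<t_n)\lesssim e^{-t_n^{-q}}\le e^{-Cn\gamma_n^2}$ (from the upper bound in \eqref{condition on g}) and $P(\|W^\textbf{T}\|_{\mathbb{H}_{\epsilon,K,\textbf{T}}}>M_n)=P(\sum_{i=0}^{K-1} Z_i^2>M_n^2)\lesssim e^{-M_n^2/2}$, both of the required order. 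Each $t$-slice of $\mathcal F_n$ is contained in the fixed $K$-dimensional span of $\{\phi_{i,\epsilon}\}_{i<K}$, and varying $t$ only rescales this basis, so the $\|\cdot\|_n$-entropy of $\mathcal F_n$ is $O(K\log(M_n/\gamma_n))\lesssim K\log n$, well within $n\gamma_n^2$.

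The main technical obstacle I anticipate is the bookkeeping of constants: controlling the $L^\infty$ norms of $\phi_{i,\epsilon}$ and the scaling factors $e^{\lambda_{i,\epsilon} t/2}$ uniformly in $i<K$ and $t\in[0,1]$, and reconciling the single choice of $t_n$ between the prior-mass bound (where $t_n$ must not be too small) and the sieve tail bound (where $t_n$ must not be too large). Once these three inequalities are established with matching constants, invoking Theorem 2.1 of \cite{ghosal2007convergence} delivers posterior contraction at the asserted rate $\gamma_n$.
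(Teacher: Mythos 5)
Your proposal follows essentially the same route as the paper: both verify the van der Vaart--van Zanten-type prior-mass, sieve-complement, and entropy conditions for the hierarchical prior, split the concentration function as $\|f_0\|^2_{\mathbb{H}_{\epsilon,K,t}}$ plus the centred small-ball exponent, localise the prior density of $\textbf{T}$ near a suitable $t_n$ so that $t_n^{-q}\asymp n\gamma_n^2$, and bound the entropy of a $K$-dimensional ball. Your observation that $\|f_0\|^2_{\mathbb{H}_{\epsilon,K,t}}=\sum_i b_i^2 e^{\lambda_{i,\epsilon}t}=O(1)$ uniformly on $[0,1]$ because $K,\epsilon$ are fixed is a correct shortcut; the paper reaches the same conclusion through the cruder bound $\|f_0\|^2_{\mathbb{H}_{\epsilon,K,t}}\leq\texttt{vol}(M)\exp(2C_3K^{2/d}t)$ combined with $K\leq 1/\gamma_n$ and the choice $t^*\asymp\gamma_n^{2/d}\log(1/\gamma_n)$, which tracks $K$-dependence more explicitly. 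The one technical divergence worth noting is the sieve: you take $\mathcal F_n$ to be a union over $t\in[t_n,1]$ and must separately bound $\mathbb P(\textbf{T}<t_n)$, whereas the paper sets $\mathsf B_n=(M\,B^{\mathbb{H}_{\epsilon,K,0}}_1+\gamma_n B_1)\cap\mathbb{H}_{\epsilon,K}$ and uses Lemma~\ref{nested unit balls} ($B^{\mathbb{H}_{\epsilon,K,t}}_1\subset B^{\mathbb{H}_{\epsilon,K,0}}_1$ for all $t\geq0$) to obtain the complementary-mass bound uniformly in $t$ from Borell's inequality alone, with no $t$-truncation needed; this avoids the tension you flag between the prior-mass and sieve-tail constraints on $t_n$. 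Both constructions are valid, so the proposal is correct in spirit and only needs the constant bookkeeping you yourself identify.
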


\begin{remark}
For notation simplicity, we assume $\|f_0\|_{\infty} \leq 1$. A similar result holds with the coefficients in the relation between $\gamma$ and $n$ depending on $\|f_0\|_{\infty}$. 
\end{remark}

\section{Discussion}

In this article, we were motivated by the problem of nonparametric regression with predictors on an unknown subset $S$ of $\mathbb{R}^D$ that might have a complicated geometric and topological structure.  The proposed GL-GP approach is appealing in allowing the GP covariance to reflect the intrinsic geometry of $S$.  Although we have taken substantial first steps theoretically, while showing promising empirical results in illustrative examples, there are multiple areas for future research.  

A first direction relates to developing scalable implementations of the GL-GP.  There is a rich literature developing scalable GP algorithms in other contexts, such as for huge spatial and/or temporal datasets modelled via GPs with traditional squared exponential or Mat\'ern covariance functions; for example, refer to \cite{datta2016hierarchical, datta2016nonseparable, peruzzi2020highly,ambikasaran2015fast}.  It is not straightforward to extend such algorithms to our context.  One possibility is to adapt subset-of-regressor \cite{rasmussen2003gaussian} and predictive process \cite{banerjee2008gaussian} algorithms via the Nystr\"om extension idea of subsection \ref{Nystrom subsection}, or the Roseland algorithm \cite{shen2020scalability} that could be viewed as a generalization of Nystr\"om via diffusion.

Another promising direction is to consider broader classes of GL-GPs by using more flexible kernels in constructing the GL; for example, a Mat\'ern kernel could be used in place of the Gaussian or the Gaussian kernel could be modified to have bandwidth parameters for each dimension.  It is interesting to consider the theoretical and practical behavior of such kernels, and the induced smoothness and covariance behavior for a variety of geometric structures.  The Mat\'ern case may be of particular relevance in applications to spatial statistics, in which our approach may provide a competitor to the current literature on spatial barriers and the so-called coastline problem; refer, for example to \cite{bakka2019non}.

Finally, there are several natural next steps theoretically.  One general direction is to build on our rate results, attempting to obtain the optimal rate in the manifold case. Another is to study consistency in estimating the covariance parameters, a problem of particular relevance in spatial statistics; refer, for example, to \cite{li2020bayesian, tang2019identifiability}.We would also like to obtain an improved understanding for much broader classes of $S$, including extensions beyond manifolds to stratified spaces and various metric spaces.

\section*{Acknowledgement}
We sincerely thank the associate editor and the referees for their comments to improve the quality of the paper. David B Dunson and Nan Wu acknowledge the support from the European Research Council (ERC) under the European Union’s Horizon 2020 research and innovation programme (grant agreement No 856506).

\bibliographystyle{plain}
\bibliography{bib}

\clearpage

\appendix

\section{Proof of Theorem \ref{Property of T epsilon on S}}\label{Proof of theorem Property of T epsilon on S}
Clearly $T_{\epsilon}$ is a linear integral operator. The self-adjointness follows from the fact that $Q_{\epsilon}(x,y)$ is symmetric. 
To show that $T_{\epsilon}$ is compact, we first show that $ \frac{Q_{\epsilon}(x,y)}{\int_{S }Q_{\epsilon}(x,y)d\mathsf P(y)}$ is a Hilbert-Schmidt  kernel on $S \times S$. Since $ \mathsf{P}$ is a probability measure, it is $\sigma$-finite. Also, since $S$ is a compact subset of $\mathbb{R}^D$, the Borel $\sigma$-algebra $\mathcal{F}$ is countably generated. By Proposition 3.4.5 in \cite{cohn2013measure}, $L^2(S, \mathsf{P})$ is a separable Hilbert space.  Note that 
\begin{align}\label{normalized Q espilon}
\frac{Q_{\epsilon}(x,x')}{\int_{S }Q_{\epsilon}(x,x')d\mathsf P(x')}=\frac{k_{\epsilon}(x,x')}{d_{\epsilon}(x') \int_{S} \frac{k_{\epsilon}(x,x')}{d_{\epsilon}(x')}d\mathsf P(x') }\,.
\end{align}
Since $S$ is a compact subset of $\mathbb{R}^D$, the diameter of $S$  in the Euclidean distance is bounded above by $\ell$. We thus have $e^{-\frac{\ell^2}{4\epsilon}} \leq k_{\epsilon}(x,x') \leq 1$.  Since $ \mathsf{P}$ is a probability measure, $e^{-\frac{\ell^2}{4\epsilon}} \leq d_{\epsilon}(x) \leq 1$. Substituting the upper and the lower bounds for $k_{\epsilon}(x,x')$ and $d_{\epsilon}(x)$ into \eqref{normalized Q espilon}, we have $\frac{Q_{\epsilon}(x,x')}{\int_{S }Q_{\epsilon}(x,x')d\mathsf P(x')} \leq e^{\frac{\ell^2}{2\epsilon}}$. Thus,
$\int_{S} \int_{S}\Big| \frac{Q_{\epsilon}(x,y)}{\int_{S }Q_{\epsilon}(x,x')d\mathsf P(x')} \Big|^2 d\mathsf P(x) d\mathsf P(x') \leq e^{\frac{\ell^2}{\epsilon}}$.
Since $S$ is compact  and hence locally compact Hausdorff, we conclude that $T_{\epsilon}$ is a compact operator from $L^2(S, \mathsf{P})$ to $L^2(S, \mathsf{P})$.
Finally, we have  
\begin{align}
& \sum_{i=1}^{m+n} A_{ki}f(x_i)=\sum_{i=1}^{m+n} \frac{W_{ki}f(x_i)}{D_{kk}}= \frac{\sum_{i=1}^{m+n} W_{ki}f(x_i)}{\sum_{i=1}^{m+n} W_{ki}} \nonumber  \\
=&  \frac{\frac{1}{m+n}\sum_{i=1}^{m+n} W_{ki}f(x_i)}{\frac{1}{m+n}\sum_{i=1}^{m+n} W_{ki}}= \frac{\frac{1}{m+n}\sum_{i=1}^{m+n} \frac{k_{\epsilon}(x_k,x_i)}{q_{\epsilon}(x_k) q_{\epsilon}(x_i)}f(x_i)}{\frac{1}{m+n}\sum_{i=1}^{m+n} \frac{k_{\epsilon}(x_k,x_i)}{q_{\epsilon}(x_k) q_{\epsilon}(x_i)}}= \frac{\frac{1}{m+n}\sum_{i=1}^{m+n} \frac{k_{\epsilon}(x_k,x_i)}{\frac{1}{m+n}q_{\epsilon}(x_k)\frac{1}{m+n} q_{\epsilon}(x_i)}f(x_i)}{\frac{1}{m+n}\sum_{i=1}^{m+n} \frac{k_{\epsilon}(x_k,x_i)}{\frac{1}{m+n}q_{\epsilon}(x_k) \frac{1}{m+n}q_{\epsilon}(x_i)}} \nonumber \\
=& \frac{\frac{1}{m+n}\sum_{i=1}^{m+n} \frac{k_{\epsilon}(x_k,x_i)}{\frac{1}{m+n} q_{\epsilon}(x_i)}f(x_i)}{\frac{1}{m+n}\sum_{i=1}^{m+n} \frac{k_{\epsilon}(x_k,x_i)}{ \frac{1}{m+n}q_{\epsilon}(x_i)}} . \nonumber
\end{align}
By the law of large numbers, $ \frac{1}{m+n} q_{\epsilon}(x_i) \rightarrow d_{\epsilon}(x_i)$ as $m+n \rightarrow \infty$ for all $x_i$ a.s. Hence, we have a.s.
\begin{align}
\frac{\frac{1}{m+n}\sum_{i=1}^{m+n} \frac{k_{\epsilon}(x_k,x_i)}{\frac{1}{m+n} q_{\epsilon}(x_i)}f(x_i)}{\frac{1}{m+n}\sum_{i=1}^{m+n} \frac{k_{\epsilon}(x_k,x_i)}{ \frac{1}{m+n}q_{\epsilon}(x_i)}} 
\rightarrow \int_{S } \frac{k_{\epsilon}(x_k,x')}{d_{\epsilon}(x') \int_{S} \frac{k_{\epsilon}(x_k,x')}{d_{\epsilon}(x')}d\mathsf P(x') } f(x') d\mathsf P(x') =\frac{\int_{S }Q_{\epsilon}(x_k,x') f(x') d\mathsf P(x')}{\int_{S }Q_{\epsilon}(x_k,x')d\mathsf P(x')}=T_{\epsilon} f(x_k). \nonumber
\end{align}

\section{Algorithm of GL-GP under the manifold setup}\label{DBGP manifold algorithm summary}

\begin{algorithm}[bht!]
\SetAlgoLined
\Parameter{Algorithm inputs include $\epsilon$, $t$, $K$ and $\sigma_{noise}$.}
Construct the $(m+n) \times (m+n)$ matrices $W$ and $D$ as in (\ref{W matrix}) and (\ref{D matrix}) by using the kernel 
$k_{\epsilon}(x,x')=\exp\Big(-\frac{\|\iota(x)-\iota(x')\|^2_{\mathbb{R}^D}}{4\epsilon^2}\Big)$ 
and data points $\mathcal{X}=\{x_1, \cdots, x_m, x_{m+1}, \cdots, x_{m+n}\}$. Let
$\tilde{A} = D^{-1/2}W D^{-1/2}.$ 

Find the first $K$ eigenpairs of $\frac{I-\tilde{A}}{\epsilon^2}$, namely $\{\mu_{i,m+n,\epsilon}, U_{i,m+n,\epsilon}\}_{i=0}^{K-1}$. Let $\tilde{v}_{i,m+n,\epsilon}$ be the normalized vector of $D^{-1/2}U_{i,m+n,\epsilon}$ in the $l^2$ norm.

For $j=1,\cdots, m+n$, find
$\mathbb{N}(j)=|B^{\mathbb{R}^p}_{\epsilon}(\iota(x_j)) \cap \{\iota(x_1), \cdots, \iota(x_{m+n})\}|.$
Calculate
\begin{align}
\|\tilde{v}_{i,m+n,\epsilon}\|_{l^2(1/\hat{\mathsf{p}})}=\sqrt{\frac{|S^{d-1}|\epsilon^d}{d} \sum_{j=1}^{m+n} \frac{\tilde{v}_{i,m+n,\epsilon}^2(j)}{\mathbb{N}(j)}}. \nonumber 
\end{align}
For $i=0, \cdots, K-1$, construct
$v_{i,m+n,\epsilon}=\frac{\tilde{v}_{i,m+n,\epsilon}}{\|\tilde{v}_{i,m+n,\epsilon}\|_{l^2(1/\hat{\mathsf{p}})}}.$

Construct $\tilde{\mathsf H}_{\epsilon,K,t}$ as 
$\tilde{\mathsf H}_{\epsilon,K,t}=\sum_{i=0}^{K-1} e^{-\mu_{i,m+n,\epsilon} t} v_{i,m+n,\epsilon} v^\top_{i,m+n,\epsilon}\in \mathbb{R}^{(m+n)\times (m+n)}.$
Rewrite $\tilde{\mathsf H}_{\epsilon,K,t}$ as
$
\begin{bmatrix}
H_1 & H_2 \\
H_3 & H_4 \\
\end{bmatrix},
$
where $H_1\in \mathbb{R}^{m\times m}$, $H_2$, $H_3$, $H_4$ are block matrices.

Let $\textbf{y}\in \mathbb{R}^m$  be a vector with $\textbf{y}(i)=y_i$, where $y_i$ is the observed response variable corresponding to $x_i$. Then 
$\textbf{f}_*:=H_3(H_1+\sigma^2_{noise} I)^{-1}\textbf{y}$
is our proposed prediction.  

\caption{GL-GP ALGORITHM on manifold}
\end{algorithm}


\section{Proofs of Theorem \ref{GL-GP convergence rate main} and Theorem \ref{Nystrom rate main theorem}} \label{proof of convergence from H K epsilon t to C  K epsilon t}
In this section, let $\| \cdot \|_2:=\|\cdot\|_{L^2(M)}$ be the $L^2$ norm, $\|\cdot \|_\infty:=\|\cdot\|_{L^\infty(M)}$ be the $L^\infty$ norm. We have the following fact about the eigenfunctions of the Laplace-Beltrami operator, $\Delta$.

\begin{lemma}[\cite{hormander1968spectral,donnelly2006eigenfunctions}]\label{lemma hormander}
For a compact Riemannian manifold $(M,g)$ and $l>0$, we have the following bound for the $l$-th eigenvalue $\lambda_l$ and $L^2$ normalized eigenfunction $\phi_l$ of the Laplace-Beltrami operator:  
$\|\phi_l\|_\infty \leq C_1 \lambda_l^{\frac{d-1}{4}} \|\phi_l\|_2= C_1 \lambda_l^{\frac{d-1}{4}},$
where $C_1$ is a constant depending on the injectivity radius and sectional curvature of the manifold $M$.
\end{lemma}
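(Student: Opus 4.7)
This is the classical sharp pointwise bound of H\"ormander for Laplace--Beltrami eigenfunctions on a closed Riemannian manifold. The natural route is through the on-diagonal spectral function $e(x,x,\lambda):=\sum_{\lambda_j\leq \lambda}|\phi_j(x)|^2$ together with its sharp Weyl asymptotics, with the individual-eigenfunction bound then extracted by a squeezing argument.

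First I would record the elementary inequality
$$|\phi_l(x)|^2 \;\leq\; e(x,x,\lambda_l+1)-e(x,x,\lambda_l-1),$$
which follows from the fact that the jump of $e(x,x,\cdot)$ at $\lambda_l$ is $\sum_{\lambda_j=\lambda_l}|\phi_j(x)|^2\geq |\phi_l(x)|^2$. If one can establish the sharp Weyl law
$$e(x,x,\lambda)\;=\;c_d\,\lambda^{d/2}+O(\lambda^{(d-1)/2})$$
with the $O$-constant uniform in $x\in M$, then the right-hand side of the squeezing bound is $O(\lambda_l^{(d-1)/2})$, and taking square roots produces $\|\phi_l\|_\infty\leq C\lambda_l^{(d-1)/4}$ as stated.

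To prove the sharp Weyl law I would construct a short-time parametrix for the half-wave propagator $U(t)=e^{it\sqrt{-\Delta}}$ as a Fourier integral operator whose phase is generated by the geodesic distance, valid on a time interval strictly shorter than the injectivity radius; this is the step that forces the dependence of the final constant on $\mathrm{inj}(M)$. Convolving $e(x,x,\cdot)$ against a Schwartz function whose Fourier transform is compactly supported inside that time window produces a smoothed spectral measure whose on-diagonal kernel can be expanded to leading order by stationary phase, with remainders controlled through $C^k$-bounds on the FIO symbol that in turn depend on sectional curvature through Jacobi field estimates. A standard Tauberian argument, exploiting monotonicity of $e(x,x,\cdot)$ in $\lambda$, then transfers the smoothed asymptotics back to the cumulative spectral function with the required uniform remainder.

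The main obstacle is precisely this sharp $O(\lambda^{(d-1)/2})$ remainder: heat-kernel, resolvent, or min--max arguments yield substantially weaker pointwise bounds, and obtaining the sharp exponent uniformly in the base point genuinely requires the wave-equation/FIO framework together with geometric control of injectivity radius and curvature. Once the sharp Weyl law is in place, the rest of the argument is bookkeeping.
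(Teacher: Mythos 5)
The paper does not contain a proof of this lemma; it is quoted as a known result, citing H\"ormander (1968) and Donnelly (2006). Your sketch correctly reproduces the classical argument in those references --- squeezing $|\phi_l(x)|^2$ between increments of the on-diagonal spectral function $e(x,x,\cdot)$, proving the sharp uniform Weyl remainder $O(\lambda^{(d-1)/2})$ via a short-time half-wave parametrix and a Tauberian argument, and reading off where the injectivity-radius and curvature dependence enter --- so there is nothing in the paper to compare against, and your outline is consistent with the cited sources.
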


We also have the following fact about the eigenvalues of the Laplace-Beltrami operator, $\Delta$.

\begin{lemma}[\cite{hassannezhad2016eigenvalue}]\label{lemma weyls law}
For a $d$-dimensional compact and connected Riemannian manifold $(M,g)$, the eigenvalues of the Laplace-Beltrami operator, $0=\lambda_0< \lambda_1\leq\ldots$, satisfy
\begin{align}
c_2^{1+d\sqrt{\kappa}}\texttt{diam}(M)^{-2} l^{2/d} \leq \lambda_l \leq \frac{(d-1)^2}{4}\kappa + \check C_2 V(M)^{-2/d}l^{2/d}\nonumber 
\end{align}
for all $l \geq 1$, where the $\text{Ric}_g\geq -(d-1)\kappa g$ for $\kappa\geq 0$, and $c_2>0$ and $\check C_2>0$ are constants depending on $d$ only. Hence, $c_3 l^{2/d} \leq \lambda_l \leq C_3  l^{2/d}$, where $c_3$ depends on $d$, the diameter and the Ricci curvature of $M$ and $C_3>0$ depends on $d$, the volume and the Ricci curvature of $M$.
\end{lemma}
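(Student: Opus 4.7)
The plan is to establish this lemma as a quantitative version of Weyl's asymptotic law for eigenvalues of $-\Delta$, with explicit constants depending only on dimension, diameter, volume, and a lower Ricci bound. The two bounds are proved by very different mechanisms: the upper bound is a classical Rayleigh-quotient argument with test functions supported on disjoint small balls, while the lower bound requires a heat kernel (or Sobolev) estimate under the Ricci bound. The final conclusion $c_3 l^{2/d}\leq \lambda_l\leq C_3 l^{2/d}$ is then immediate from the two-sided quantitative bound after absorbing the additive $\tfrac{(d-1)^2}{4}\kappa$ term into a constant that depends on $d$ and the Ricci bound.

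For the \emph{upper bound}, I would use the min-max (Courant--Fischer) principle: $\lambda_l \leq \inf_{V} \sup_{u\in V\setminus\{0\}} \frac{\int_M |\nabla u|^2\,dV}{\int_M u^2\,dV}$, where $V$ ranges over $(l+1)$-dimensional subspaces of $H^1(M)$. Pick a maximal packing of $M$ by disjoint geodesic balls of radius $r$; by the Bishop--Gromov volume comparison under $\mathrm{Ric}_g\geq -(d-1)\kappa g$, the number of such balls is at least $\mathrm{const}\cdot V(M)/V_\kappa(r)$, where $V_\kappa(r)$ is the volume of a ball of radius $r$ in the space form of constant curvature $-\kappa$. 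Choosing $r$ so that this count equals $l+1$ yields $r^d \asymp V(M)/l$, modulated by a factor that encodes the curvature correction $\tfrac{(d-1)^2}{4}\kappa$. Taking Lipschitz hat functions supported on each ball, with $|\nabla u| \lesssim 1/r$, makes the Rayleigh quotient on their span at most $C/r^2$, giving the claimed upper bound.

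For the \emph{lower bound}, I would invoke the Li--Yau (or Davies) heat kernel upper bound, which under $\mathrm{Ric}_g \geq -(d-1)\kappa g$ yields a trace estimate of the form $\sum_{i\geq 0} e^{-\lambda_i t} \leq C_d\, V(M)\, t^{-d/2} e^{C_d \kappa t}$ for $t\leq \mathrm{diam}(M)^2$. From this, $(l+1) e^{-\lambda_l t}\leq \sum_{i=0}^l e^{-\lambda_i t}$ bounds $\lambda_l \geq t^{-1}\log\bigl((l+1)/(C_d V(M) t^{-d/2} e^{C_d\kappa t})\bigr)$; optimizing over $t$ with the constraint $t\leq \mathrm{diam}(M)^2$ produces a bound proportional to $\mathrm{diam}(M)^{-2} l^{2/d}$, with the Ricci curvature entering through the exponential factor $e^{C_d\kappa t}$. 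Tracking that factor carefully is what creates the unusual form $c_2^{1+d\sqrt{\kappa}}$ for the leading constant in Hassannezhad's statement.

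The main obstacle is purely bookkeeping of constants in the lower bound: the Li--Yau exponent $e^{C_d\kappa t}$ must be balanced against the diameter constraint $t\leq \mathrm{diam}(M)^2$, and doing so while still extracting the clean power $l^{2/d}$ requires a case analysis (small versus large $l$) and Cheeger--Buser style isoperimetric inputs. Because the result in the stated form is precisely the content of the cited work of Hassannezhad, I would ultimately invoke that reference to supply the exact constants, and use the strategy above only to justify why such a bound has the displayed shape. The simplified consequence $c_3 l^{2/d}\leq \lambda_l \leq C_3 l^{2/d}$ then follows by absorbing the additive constant $\tfrac{(d-1)^2}{4}\kappa$ into $C_3$ using $l\geq 1$ and bounding $c_2^{1+d\sqrt{\kappa}}\mathrm{diam}(M)^{-2}$ from below by a single constant $c_3$ depending on $d$, $\mathrm{diam}(M)$, and $\kappa$.
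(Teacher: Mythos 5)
Your proposal takes essentially the same route as the paper: invoke Hassannezhad's two-sided bound by citation, then deduce the simplified form $c_3 l^{2/d}\leq \lambda_l\leq C_3 l^{2/d}$ by absorbing the additive $\tfrac{(d-1)^2}{4}\kappa$ term into $C_3$ using $l\geq 1$ and setting $c_3 = c_2^{1+d\sqrt{\kappa}}\mathrm{diam}(M)^{-2}$. Your sketch of the underlying Weyl-type argument (Bishop--Gromov packing for the upper bound, heat-kernel trace estimate for the lower bound) is accurate as background but is not part of the paper's argument, which relies entirely on the citation for the displayed inequality.
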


The following Lemma shows the spectral convergence rate of the first $K$ eigenpairs of the operator $\frac{I-T_{\epsilon}}{\epsilon^2}$ to those of $\Delta$ as $\epsilon \rightarrow 0$. 

\begin{lemma}[\cite{2020spectral} Proposition 1] \label{T epsilon and Delta}
Assume that the eigenvalues of $\Delta$ are simple. Suppose $(\lambda_{i,\epsilon}, \phi_{i,\epsilon})$ is the $i$-th eigenpair of $\frac{I-T_{\epsilon}}{\epsilon^2}$ and $(\lambda_{i}, \phi_{i})$ is the $i$-th eigenpair of $-\Delta$. Assume both $\phi_{i,\epsilon}$ and $\phi_{i}$ are normalized in the $L^2(M)$ norm. For $K\in \mathbb{N}$, denote  $\mathsf \Gamma_K:=\min_{1 \leq i \leq K}\textup{dist}(\lambda_i, \sigma(-\Delta)\setminus \{\lambda_i\}).$  Suppose $\epsilon$ is small enough and
\begin{equation}\label{epsilon K relation 1}
\epsilon \leq \mathcal{K}_1 \min \left(\left(\frac{\min(\mathsf\Gamma_K,1)}{\mathcal{K}_2+\lambda_K^{d/2+5}}\right)^2,\, \frac{1}{(\mathcal{K}_3+\lambda_K^{(5d+7)/4})^2}\right)\,,
\end{equation} 
where $\mathcal{K}_1$ and $\mathcal{K}_2, \mathcal{K}_3>1$  are constants depending on $d$, $\mathsf p_m$, the $C^2$ norm of $\mathsf p$, and the volume, the injectivity radius, the curvature and the second fundamental form of the manifold. Then, there are $a_i \in \{-1, 1\}$ such that for all $0 \leq i < K$, 
$$|\lambda_{i,\epsilon}-\lambda_{i}|  \leq \epsilon^{\frac{3}{2}}, \quad \|a_i\phi_{i,\epsilon}-\phi_{i}\|_{\infty}  \leq \epsilon.$$
\end{lemma}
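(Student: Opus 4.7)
The plan is to prove the lemma by combining a quantitative pointwise expansion $\mathsf I_\epsilon = -\Delta + O(\epsilon)$ on smooth functions with a perturbation argument that trades eigenfunctions of $-\Delta$ against eigenfunctions of $\mathsf I_\epsilon$ through the spectral gaps $\mathsf\Gamma_K$. The powers of $\lambda_K$ appearing in the hypothesis \eqref{epsilon K relation 1} will be tracked explicitly and should emerge precisely from (i) bounds on $\|\phi_i\|_{C^4}$ via Lemma \ref{lemma hormander} and elliptic regularity, and (ii) the $L^2\to L^\infty$ smoothing cost of $T_\epsilon$ in the last step.

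First I would derive the pointwise expansion. Working in normal coordinates at $x$ and Taylor-expanding the Gaussian kernel together with the volume element, a standard diffusion-maps calculation under the $\alpha=1$ normalization cancels the density gradient at leading order and yields
\begin{equation*}
\mathsf I_\epsilon f(x)\;=\;-\Delta f(x)\;+\;\epsilon\, R_\epsilon[f](x),\qquad \|R_\epsilon[f]\|_\infty \leq C\,\|f\|_{C^4(M)},
\end{equation*}
with $C$ depending on $d$, $\mathsf p_m$, $\|\mathsf p\|_{C^2}$, and the injectivity radius, sectional curvature and second fundamental form of $M$. Applying this to $\phi_i$ and iterating $\Delta\phi_i = -\lambda_i\phi_i$ together with Lemma \ref{lemma hormander} to bound $C^4$ norms in terms of $L^\infty$ norms of $\Delta^2\phi_i$, I would obtain $\|\phi_i\|_{C^4}\lesssim \lambda_i^{2+(d-1)/4}\lesssim \lambda_K^{d/2+5/2}$ for $i<K$. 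This produces the quantitative approximate-eigenfunction bound $\|\mathsf I_\epsilon \phi_i - \lambda_i\phi_i\|_\infty \leq C\epsilon\,\lambda_K^{d/2+5/2}$, from which $\|\mathsf I_\epsilon \phi_i - \lambda_i\phi_i\|_2$ is controlled as well.

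Next I would exploit self-adjointness. The operator $T_\epsilon$ is self-adjoint on $L^2(M, PQ_\epsilon\cdot \mathsf p\,dV)$, a space equivalent to $L^2(M,dV)$ up to multiplicative constants bounded uniformly in $\epsilon$, so $\mathsf I_\epsilon$ admits an orthonormal eigenbasis $\{\phi_{j,\epsilon}\}$ with real eigenvalues $\lambda_{j,\epsilon}$. Expanding $\phi_i = \sum_j c_j\phi_{j,\epsilon}$ and applying $\mathsf I_\epsilon$, the approximate-eigenfunction bound gives $\sum_j c_j^2(\lambda_{j,\epsilon}-\lambda_i)^2 \leq C^2\epsilon^2\lambda_K^{d+5}$. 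Choosing the index $j=j(i)$ that minimizes $|\lambda_{j,\epsilon}-\lambda_i|$ yields a single eigenvalue of $\mathsf I_\epsilon$ within $O(\epsilon\lambda_K^{d/2+5/2})$ of each $\lambda_i$; combining this with the hypothesis $\epsilon\lesssim\bigl(\mathsf\Gamma_K/\lambda_K^{d/2+5}\bigr)^2$ ensures the approximating index is unique and the map $i\mapsto j(i)$ is the identity, so $|\lambda_{i,\epsilon}-\lambda_i|\leq \epsilon^{3/2}$ after absorbing constants. The same expansion then gives $\sum_{j\neq i}c_j^2 \lesssim \epsilon^2\lambda_K^{d+5}/\mathsf\Gamma_K^2$, hence $L^2$-closeness $\|a_i\phi_{i,\epsilon}-\phi_i\|_2 \leq \epsilon^{3/2}$ once $a_i=\mathrm{sign}(c_i)$ is chosen.

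Finally, to upgrade $L^2$ convergence of eigenfunctions to $L^\infty$, I would use the eigenrelation $(1-\epsilon^2\lambda_{i,\epsilon})\phi_{i,\epsilon} = T_\epsilon\phi_{i,\epsilon}$ and the pointwise representation of $T_\epsilon$ against its kernel: $|T_\epsilon g(x)|\leq C\epsilon^{-d/2}\|g\|_2$. Writing $a_i\phi_{i,\epsilon}-\phi_i = (1-\epsilon^2\lambda_{i,\epsilon})^{-1}\bigl[T_\epsilon(a_i\phi_{i,\epsilon}-\phi_i) + (T_\epsilon-(1-\epsilon^2\lambda_{i,\epsilon})I)\phi_i\bigr]$ and applying the $L^2\to L^\infty$ bound to the first term and the pointwise expansion to the second produces $\|a_i\phi_{i,\epsilon}-\phi_i\|_\infty \leq \epsilon$, provided $\epsilon\lesssim \lambda_K^{-(5d+7)/2}$ matches the second condition in \eqref{epsilon K relation 1}. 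The main obstacle is this very last step: the $\epsilon^{-d/2}$ loss from $L^2\to L^\infty$ interacts delicately with the $\lambda_K$-dependence in Step 2, and careful bookkeeping is what forces the specific exponents $d/2+5$ and $(5d+7)/4$ in the hypothesis; any slack in the $C^4$ bound for $\phi_i$ or in the pointwise expansion's remainder would directly inflate these powers.
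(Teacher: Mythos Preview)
The paper does not supply its own proof of this lemma: it is stated with a direct citation to Proposition~1 of \cite{2020spectral} and is used as a black box in the proofs of Theorems~\ref{GL-GP convergence rate main} and~\ref{Nystrom rate main theorem}. So there is no in-paper argument to compare your proposal against.

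That said, your outline is the standard strategy underlying the cited result: a quantitative bias expansion $\mathsf I_\epsilon f=-\Delta f+O(\epsilon)\|f\|_{C^k}$ from the diffusion-maps calculus, a spectral-gap perturbation argument to pin down $\lambda_{i,\epsilon}$ near $\lambda_i$ and $\phi_{i,\epsilon}$ near $\phi_i$ in $L^2$, and then an $L^2\!\to\!L^\infty$ bootstrap via the eigenrelation $T_\epsilon\phi_{i,\epsilon}=(1-\epsilon^2\lambda_{i,\epsilon})\phi_{i,\epsilon}$ together with the $\epsilon^{-d/2}$ sup-norm control of the kernel. This is exactly the architecture of the proof in \cite{2020spectral}, so at the level of ideas your plan is on target.

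Where the sketch is loose is in the bookkeeping that produces the precise exponents $d/2+5$ and $(5d+7)/4$. Your claimed bound $\|\phi_i\|_{C^4}\lesssim\lambda_i^{2+(d-1)/4}$ gives $(d+7)/4$, not the $d/2+5/2$ you then write, and neither matches $d/2+5$ without further losses you have not accounted for (the actual argument in \cite{2020spectral} tracks higher derivatives and an additional iteration in the $L^\infty$ bootstrap). Likewise, the passage ``$|\lambda_{i,\epsilon}-\lambda_i|\leq\epsilon^{3/2}$ after absorbing constants'' hides the mechanism: the bias is $O(\epsilon^2)$ times a $\lambda_K$-power, and the hypothesis \eqref{epsilon K relation 1} is what converts $C\epsilon^2\lambda_K^{a}\leq\epsilon^{3/2}$ into an upper bound on $\epsilon$. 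If you want a self-contained proof rather than a citation, these steps need to be made explicit with the correct order of the remainder and the correct derivative count on $\phi_i$; otherwise the specific constants in \eqref{epsilon K relation 1} will not emerge.
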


\begin{remark}
We assume that the eigenvalues of $\Delta$ are simple to simplify the notation. In the case when the eigenvalues are not simple, the same result still works by introducing the eigenprojection \cite{chatelin2011spectral}. 
\end{remark}

In the following lemma, we show that the first $K$ eigenpairs of $-L$ converge to those of $\frac{I-T_{\epsilon}}{\epsilon^2}$ as $m+n \rightarrow \infty$. The proof of the lemma follows from combining Corollary 1, Proposition 4, and Proposition 5 in \cite{2020spectral}, so we omit it.

\begin{lemma} \label{spectral convergence of L to I-T epsilon on closed manifold} Under Assumption \ref{assumption DM}, suppose  all the eigenvalues of $\Delta$ are simple. Let $\lambda_{i}$ be the $i$-th eigenvalue of $-\Delta$. Let $\mathsf \Gamma_K:=\min_{1 \leq i \leq K}\textup{dist}(\lambda_i, \sigma(-\Delta)\setminus \{\lambda_i\})$. Let $\mu_{i,m+n,\epsilon}$  be the $i$-th eigenvalue of $-L$ with the associated eigenvector $v_{i,m+n,\epsilon}$ normalized  as in \eqref{normalized V i n epsilon}. Let $(\lambda_{i,\epsilon}, \phi_{i,\epsilon})$ be the $i$-th eigenpair of $\frac{I-T_{\epsilon}}{\epsilon^2}$
with $\phi_{i,\epsilon}$ normalized in $L^2(M)$. If $\epsilon$ is small enough so that \eqref{epsilon K relation 1} holds and
$m+n$ is sufficiently large so that $(\frac{\log(m+n)}{m+n})^{\frac{1}{4d+13}} \leq \epsilon$, then  there are $a_i \in \{-1, 1\}$ such that for all $0 \leq i < K$, with probability greater than $1-(m+n)^{-2}$,  
$$|\mu_{i,n,\epsilon}-\lambda_{i,\epsilon}|\leq \Omega_1 \epsilon^{3/2}, \quad  
\max_{x_j \in \mathcal{X}}|a_i v_{i,m+n,\epsilon}(j)-\phi_{i,\epsilon}(x_j)|\leq  \Omega_2 \epsilon^{1/2}\,,$$ 
where 
$\Omega_1$ depends on $d$, the diameter of $M$, $\mathsf p_m$, and the $C^2$ norm of $\mathsf p$, and $\Omega_2$ depends on $d$, the diameter and the volume of $M$, $\mathsf p_m$, and the $C^2$ norm of $\mathsf p$.
\end{lemma}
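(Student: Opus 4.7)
The plan is to combine a pointwise concentration-of-measure estimate that quantifies how well the random matrix $-L$ approximates the deterministic operator $\mathsf{I}_\epsilon=(I-T_\epsilon)/\epsilon^2$ with a spectral perturbation argument that converts this into eigenvalue and eigenvector bounds, and then to reconcile the $\ell^2(1/\hat{\mathsf{p}})$ normalization with the target $L^2(M)$ normalization.

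First I would sharpen the law-of-large-numbers argument from the proof of Theorem \ref{Property of T epsilon on S}. Writing both $T_\epsilon f(x_k)$ and its discrete analogue $\sum_i A_{ki} f(x_i)$ as ratios of kernel averages, applying Bernstein's inequality to numerator and denominator separately (on a $d$-manifold, kernel integrals against $\mathsf{p}\,dV$ are of order $\epsilon^d$), and Taylor-expanding $k_\epsilon$ to extract the bias, one obtains for any $C^2$ function $f$ with controlled sup norm a uniform estimate of the form
\begin{align}
\max_{x_k\in\mathcal{X}}\left|\sum_i A_{ki}f(x_i) - T_\epsilon f(x_k)\right| \lesssim \epsilon^2\|f\|_{C^2(M)} + \sqrt{\frac{\log(m+n)}{(m+n)\epsilon^d}}\|f\|_\infty \nonumber
\end{align}
with probability at least $1-(m+n)^{-2}$ after a union bound over $\mathcal{X}$. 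The constraint $(\log(m+n)/(m+n))^{1/(4d+13)}\leq \epsilon$ absorbs the polynomial dependence on $\lambda_K$ that enters the $L^\infty$ bounds on $\phi_{i,\epsilon}$ (via Lemma \ref{T epsilon and Delta} and Lemma \ref{lemma hormander}) and delivers the advertised rates $\epsilon^{3/2}$ at the level of eigenvalues and $\epsilon^{1/2}$ at the level of eigenvectors.

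Next I would promote this pointwise error to a spectral statement. Since $\tilde A = D^{-1/2}WD^{-1/2}$ is symmetric and similar to $A$, the eigenvalues of $-L$ are real, so Weyl's inequality combined with the min-max characterization of eigenvalues, applied to the finite-dimensional subspace spanned by the sampled values of $\phi_{0,\epsilon},\ldots,\phi_{K-1,\epsilon}$, yields $|\mu_{i,m+n,\epsilon}-\lambda_{i,\epsilon}|\leq \Omega_1\epsilon^{3/2}$. For eigenvectors, the eigengap of $\mathsf{I}_\epsilon$ at level $i<K$ is bounded below by $\Gamma_K - 2\epsilon^{3/2}$ by Lemma \ref{T epsilon and Delta}, so a Davis-Kahan $\sin\Theta$ argument (equivalently, a Dunford contour integral of the resolvent difference $(\lambda I - L)^{-1}-(\lambda I - \mathsf{I}_\epsilon)^{-1}$) converts the pointwise operator error into the pointwise eigenvector error $\max_{x_j\in\mathcal{X}}|a_i v_{i,m+n,\epsilon}(j)-\phi_{i,\epsilon}(x_j)|\leq \Omega_2 \epsilon^{1/2}$; the sign $a_i\in\{-1,1\}$ reflects the ambiguity in choosing eigenvector orientation.

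The last ingredient is the renormalization from $\ell^2$ to $\ell^2(1/\hat{\mathsf{p}})$. The $\ell^2$-normalized eigenvector $\tilde v_{i,m+n,\epsilon}$, again by the LLN, is approximately a discretization of the $L^2(M,\mathsf{P})$-normalized $\phi_{i,\epsilon}$ rescaled by $1/\sqrt{m+n}$; dividing by $\|\tilde v_{i,m+n,\epsilon}\|_{\ell^2(1/\hat{\mathsf{p}})}$ cancels the density weight $\mathsf{p}$ and converts this to $L^2(M)$ normalization, provided $\hat{\mathsf{p}}$ approximates $\mathsf{p}$ uniformly on $\mathcal{X}$. A uniform error of order $\epsilon + \sqrt{\log(m+n)/((m+n)\epsilon^d)}$ on this nearest-neighbor density estimate (Corollary 1 of \cite{2020spectral}) is again controlled by $(\log(m+n)/(m+n))^{1/(4d+13)}\leq \epsilon$. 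The main obstacle will be coordinating these three sources of error---kernel expansion bias of order $\epsilon^2$, Monte Carlo variance of order $\sqrt{\log(m+n)/((m+n)\epsilon^d)}$, and density-estimation error---together with the polynomial blow-ups in $\lambda_K$ from Lemma \ref{lemma hormander} and the lower bound on the eigengap from Lemma \ref{T epsilon and Delta}, so that the single constraint on $\epsilon$ and $m+n$ produces the stated rates uniformly over $0\leq i<K$ and $x_j\in\mathcal{X}$; the exponent $1/(4d+13)$ is precisely what is needed to balance these contributions.
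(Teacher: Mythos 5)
The paper does not prove this lemma internally; it states that the result ``follows from combining Corollary~1, Proposition~4, and Proposition~5 in \cite{2020spectral}, so we omit it.'' Your sketch reconstructs the architecture that such a proof must have — Bernstein concentration for pointwise approximation of $T_\epsilon$ by $A$, a variational (restricted min-max) argument for eigenvalues, a $\sin\Theta$-type argument for eigenvectors, and a kernel-density-estimation step to reconcile the $\ell^2(1/\hat{\mathsf{p}})$ and $L^2(M)$ normalizations — and this is indeed the pipeline in the cited reference, so the outline is on the right track.

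The place where your proposal has a genuine gap is the eigenvector bound. What the lemma asserts is an $\ell^\infty$ bound $\max_{x_j\in\mathcal{X}}|a_i v_{i,m+n,\epsilon}(j)-\phi_{i,\epsilon}(x_j)|\leq \Omega_2\epsilon^{1/2}$, but Davis--Kahan and the $\sin\Theta$ theorem intrinsically produce $\ell^2$ control; they say nothing directly about sup-norm error. Upgrading $\ell^2$ eigenvector error to pointwise error requires an additional bootstrap step: use the fixed-point identities $\phi_{i,\epsilon}=(1-\epsilon^2\lambda_{i,\epsilon})^{-1}T_\epsilon\phi_{i,\epsilon}$ and $v_{i,m+n,\epsilon}=(1-\epsilon^2\mu_{i,m+n,\epsilon})^{-1}A v_{i,m+n,\epsilon}$ and then exploit the smoothing of the kernel, which maps $L^2$ to $L^\infty$ with operator norm of order $\epsilon^{-d/2}$. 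That $\epsilon^{-d/2}$ loss, together with the density-estimation error, is exactly what forces the exponent $1/(4d+13)$ rather than something milder, and your closing sentence that this exponent is ``precisely what is needed'' is asserted rather than derived. Your parenthetical equating Davis--Kahan with a Dunford contour integral of the resolvent difference hints at the right tool — the contour-integral representation is how one can push the error estimate through an operator with good $L^2\to L^\infty$ mapping properties — but as written it treats the two as interchangeable, and without carrying the $L^2\to L^\infty$ bootstrap through explicitly, the claimed pointwise rate $\epsilon^{1/2}$ is not established.
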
 

\begin{remark}
In \cite{2020spectral}, the statements of Proposition 1, Proposition 3, Proposition 4, and Proposition 5 do not include the case $i=0$.  The case $i=0$ is treated individually in the proof of the main theorem. However, those propositions are still correct for the case $i=0$.  Hence, we include the case $i=0$ in the statements of Lemmas \ref{T epsilon and Delta} and \ref{spectral convergence of L to I-T epsilon on closed manifold}.
\end{remark}

\textbf{Proof of Theorem \ref{GL-GP convergence rate main}}
By the triangular inequality, we have
\begin{eqnarray}
\lefteqn{ |\tilde{\mathsf H}_{\epsilon,K,t}(i,j)-\mathsf{C}_{\epsilon,K}(x_i, x_j, t)| 
= \Big|\sum_{l=0}^{K-1} \big(e^{-\mu_{l,m+n,\epsilon} t} v_{l,m+n,\epsilon}(i) v^\top_{l,m+n,\epsilon}(j)-e^{-\lambda_{l,\epsilon} t} \phi_{l,\epsilon}(x_i)\phi_{l,\epsilon}(x_j)\big)\Big| } \nonumber \\
& \leq  \sum_{l=0}^{K-1} \Big(|e^{-\mu_{l,m+n,\epsilon} t}-e^{-\lambda_{l,\epsilon} t}||\phi_{l,\epsilon}(x_i)\phi_{l,\epsilon}(x_j)|+|v_{l,m+n,\epsilon}(i) v^\top_{l,m+n,\epsilon}(j)-\phi_{l,\epsilon}(x_i)\phi_{l,\epsilon}(x_j)||e^{-\mu_{l,m+n,\epsilon} t}|\Big). \nonumber
\end{eqnarray}
By Lemma \ref{spectral convergence of L to I-T epsilon on closed manifold}, with probability greater than $1-(m+n)^{-2}$, for all $l < K$, 
$$|\mu_{l,m+n,\epsilon}-\lambda_{l,\epsilon}|\leq  \Omega_1\epsilon^{\frac{3}{2}},\qquad 
\max_{x_i}|a_l v_{l,m+n,\epsilon}(i)-\phi_{l,\epsilon}(x_i)|\leq  \Omega_2 \epsilon^{\frac{1}{2}}.$$ 
By Proposition \ref{positivity of eigenvalue of -L}, $\mu_{l,m+n,\epsilon} \geq 0$. Thus, $|e^{-\mu_{l,m+n,\epsilon} t}| \leq 1$ and we have
\begin{align}
|e^{-\mu_{l,m+n,\epsilon} t}-e^{-\lambda_{l,\epsilon} t}| \leq & |e^{-\mu_{l,m+n,\epsilon} t}||1-e^{(\mu_{l,m+n,\epsilon} -\lambda_{l,\epsilon} )t}| \leq  |1-e^{(\mu_{l,m+n,\epsilon} -\lambda_{l,\epsilon} )t}| \nonumber \\
 \leq & |\mu_{l,m+n,\epsilon} -\lambda_{l,\epsilon}|t e^{|\mu_{l,m+n,\epsilon} -\lambda_{l,\epsilon}|t} \leq   \Omega_1\epsilon^{\frac{3}{2}}t e^{\Omega_1\epsilon^{\frac{3}{2}}t} 
 \leq  \Omega_1 \epsilon^{\frac{3}{2}}t e^{\Omega_1 t}. \nonumber
\end{align}
Next, we bound $\|\phi_{l,\epsilon}\|_{\infty}$.  By Lemma \ref{T epsilon and Delta}, for $l<K$, $\|\phi_{l,\epsilon}\|_{\infty} \leq \|\phi_{l}\|_{\infty}+\epsilon$. Hence, 
\begin{align}
\|\phi_{0,\epsilon}\|_{\infty} \leq \|\phi_{0}\|_{\infty}+\epsilon \leq \frac{1}{\sqrt{\texttt{vol}(M)}}+\epsilon. \nonumber 
\end{align}
By Lemma \ref{lemma hormander}, for $1 \leq l <K$,
$\|\phi_{l,\epsilon}\|_{\infty} \leq \|\phi_{l}\|_{\infty}+\epsilon \leq  C_1 \lambda_l^{\frac{d-1}{4}}+\epsilon.$
We combine the above two cases and use the fact that $\epsilon < \mathcal{K}_1$, for all $l<K$,
\begin{align}\label{l infinity bound of all eignfunctions}
\|\phi_{l,\epsilon}\|_{\infty} \leq C_1 \lambda_l^{\frac{d-1}{4}}+\frac{1}{\sqrt{\texttt{vol}(M)}}+\epsilon  \leq C_1 \lambda_K^{\frac{d-1}{4}}+\frac{1}{\sqrt{\texttt{vol}(M)}}+\epsilon  \leq C_2 (\lambda_K^{\frac{d-1}{4}}+1),
\end{align}
where $C_2=C_1+\frac{1}{\sqrt{\texttt{vol}(M)}}$ which depends on $\mathsf p_m$, the $C^2$ norm of $\mathsf p$, the volume, the injectivity radius and the sectional curvature of the manifold. Therefore,
$|\phi_l(x_i)\phi_l(x_j)| \leq C^2_2 (\lambda_K^{\frac{d-1}{4}}+1)^2$ 
and
\begin{align}
|e^{-\mu_{l,n,\epsilon} t}-e^{-\lambda_l t}||\phi_l(x_i)\phi_l(x_j)| \leq \Omega_1 \epsilon^{\frac{3}{2}}t e^{\Omega_1 t} C^2_2 (\lambda_K^{\frac{d-1}{4}}+1)^2 \,. \nonumber  
\end{align}
Next, we bound the term $|v_{l,m+n,\epsilon}(i) v^\top_{l,m+n,\epsilon}(j)-\phi_{l,\epsilon}(x_i)\phi_{l,\epsilon}(x_j)||e^{-\mu_{l,m+n,\epsilon} t}|$. Since $\mu_{l,m+n,\epsilon} \geq 0$,
\begin{align}
|v_{l,m+n,\epsilon}(i) v^\top_{l,m+n,\epsilon}(j)-\phi_{l,\epsilon}(x_i)\phi_{l,\epsilon}(x_j)||e^{-\mu_{l,m+n,\epsilon} t}| \leq |v_{l,m+n,\epsilon}(i) v^\top_{l,m+n,\epsilon}(j)-\phi_{l,\epsilon}(x_i)\phi_{l,\epsilon}(x_j)|. \nonumber 
\end{align}
Also, for $l < K$, $|v_{l,m+n,\epsilon}(i) v^\top_{l,m+n,\epsilon}(j)-\phi_{l,\epsilon}(x_i)\phi_{l,\epsilon}(x_j)|$ is controlled by
\begin{align}
 &\, \max_{x_i}|a_lv_{l,m+n,\epsilon}(i)-\phi_{l,\epsilon}(x_i)|\left(\max_{x_i}|\phi_{l,\epsilon}(x_i)|+\max_{x_i}|v_{l,m+n,\epsilon}(i)|\right) \nonumber \\
\leq &\, \left(2 \max_{x_i}|\phi_{l,\epsilon}| + \Omega_2 \epsilon^{1/2}\right)\max_{x_i}|a_lv_{l,m+n,\epsilon}(i)-\phi_{l,\epsilon}(x_i)| \leq  C_2 (\lambda_K^{\frac{d-1}{4}}+1)\Omega_2 \epsilon^{1/2}.\nonumber
\end{align}
In conclusion,  $|\tilde{\mathsf H}_{\epsilon,K,t}(i,j)-\mathsf{C}_{\epsilon,K}(x_i, x_j, t)|$ is controlled by
\begin{align}
& K \Bigg(\Omega_1 \epsilon^{\frac{3}{2}}t e^{\Omega_1 t}  C^2_2 (\lambda_K^{\frac{d-1}{4}}+1)^2+ C_2 (\lambda_K^{\frac{d-1}{4}}+1)\Omega_2 \epsilon^{1/2} \Bigg) \nonumber \\
\leq & K \epsilon^{1/2}  \Bigg(\Omega_1 \epsilon t e^{\Omega_1 t}  C^2_2 (\lambda_K^{\frac{d-1}{4}}+1)^2+ C_2 (\lambda_K^{\frac{d-1}{4}}+1)\Omega_2  \Bigg) \nonumber \\
\leq & K \epsilon^{1/2}  (\Omega_1 \texttt{diam}(M) e^{\Omega_1 \texttt{diam}(M)}  C^2_2 +C_2\Omega_2) \Big(\lambda_K^{\frac{d-1}{4}}+1\Big)^2 
= K \epsilon^{1/2} \mathcal{K}_4 \Big(\lambda_K^{\frac{d-1}{4}}+1\Big)^2, \nonumber 
\end{align}
By Lemma \ref{lemma weyls law}, $$K(\lambda_K^{\frac{d-1}{4}}+1)^2 \leq K(C_3^{\frac{d-1}{4}} K^{\frac{d-1}{2d}}+1)^2 \leq (C_3^{\frac{d-1}{4}}+1)^2 K^2\,.
$$
Hence, $|\tilde{\mathsf H}_{\epsilon,K,t}(i,j)-\mathsf{C}_{\epsilon,K}(x_i, x_j, t)| \leq \mathcal{K}_4 K^2 \epsilon^{1/2}$,
where 
$$
\mathcal{K}_4=[\Omega_1 \epsilon \texttt{diam(}M) e^{\Omega_1 \texttt{diam}(M)}  C^2_2 +C_2\Omega_2](C_3^{\frac{d-1}{4}}+1)^2 $$ 
that depends on  $d$, $\mathsf p_m$,  the $C^2$ norm of $\mathsf p$, the diameter,  the volume and the Ricci curvature of $M$.

\

\textbf{Proof of Theorem \ref{Nystrom rate main theorem}}
(a) By Proposition \ref{extension of the eigenvector},  $\frac{1}{1-\epsilon^2\mu_{i,m+n,\epsilon}}E \tilde{v}_{i,m+n,\epsilon}(j)=\tilde{v}_{i,m+n,\epsilon}(j)$, where $x_j \in \mathcal{X}$. Hence, $\frac{1}{1-\epsilon^2\mu_{i,m+n,\epsilon}}E v_{i,m+n,\epsilon}(j)=v_{i,m+n,\epsilon}(j)$, where $x_j \in \mathcal{X}$.  Since the eigenvalue of $\tilde{\mathsf H}^*$ remain the same as $\mu_{i,m+n,\epsilon}$ by the construction, the result follows.

(b) By Corollary 1, Proposition 4, and Proposition 5 in \cite{2020spectral}, the same result as Lemma \ref{spectral convergence of L to I-T epsilon on closed manifold} holds for the extension vector $\frac{1}{1-\epsilon^2\mu_{i,m+n,\epsilon}}E v_{i,m+n,\epsilon}$. Specifically, we have
$$
\max_{x_j \in \mathcal{X}^*}\left| \frac{a_i}{1-\epsilon^2\mu_{i,m+n,\epsilon}}E v_{i,m+n,\epsilon}(j)-\phi_{i,\epsilon}(x_j)\right|\leq  \Omega_2 \epsilon^{1/2}
$$
with $a_i \in \{-1,1\}$. Since the eigenvalues of $\tilde{\mathsf H}^*$ remain the same as $\mu_{i,m+n,\epsilon}$ by the construction, the proof of the Theorem \ref{Nystrom rate main theorem} follows from the method as the proof of Theorem \ref{GL-GP convergence rate main}.

\section{Proof of Theorem \ref{difference theorem}} \label{difference in prediction}

Note that $\tilde{\mathsf H}_{\epsilon,K,t}$ and $\Sigma$ are both symmetric. Hence, by our assumption, there is a symmetric $(m+n) \times (m+n)$ matrix $E$ such that 
$\tilde{\mathsf H}_{\epsilon,K,t}=\Sigma +\delta E, $ 
where $E$ satisfies $\|E\|_\infty=O(1)$. Let 
$
E= \begin{bmatrix}
E_{11} & E_{12}  \\
E_{21} & E_{22}
\end{bmatrix}, 
$
where $E_{11}$ is a symmetric $m \times m$ matrix. Therefore,
we have 
$H_1=\Sigma_{11}+\delta E_{11}$ and 
$H_3=\Sigma_{21}+\delta E_{21}.$ 
Suppose $H_1+\sigma^2_{noise} I$ has the eigendecomposition $H_1+\sigma^2_{noise} I=\bar{U} \bar{\Lambda} \bar{U}^\top$ and $\Sigma_{11}+\sigma^2_{noise} I$ has the eigendecomposition $\Sigma_{11}+\sigma^2_{noise} I=U \Lambda U^\top$. We can apply perturbation theory for real symmetric matrices (Appendix A and Lemma E.4 in \cite{wu2018think}) such that we have
$\bar{\Lambda}=\Lambda+\delta \Lambda'$ and  
$\bar{U}=U \Theta +\delta U',$ 
where $\Lambda'$ is a diagonal matrix of order $O(1)$, $U'$ is of order $O(1)$ and $\Theta \in O(m)$ commutes with $\Lambda$ and $\Lambda^{-1}$. Therefore,
\begin{align} 
\textbf{f}_{*}-\textbf{f}_{\texttt{GLGP}} & =H_3(H_1+\sigma^2_{noise} I)^{-1}\textbf{y}-\Sigma_{21}(\Sigma_{11}+\sigma^2_{noise} I)^{-1}\textbf{y}, \nonumber
\end{align}
which can be expanded to
\begin{align}
&(\Sigma_{21}+\delta E_{21})(U \Theta+\delta U')(\Lambda+\delta \Lambda')^{-1}(\Theta^\top U^\top+\delta {U'}^\top) \textbf{y}  -\Sigma_{21} U \Theta  \Lambda^{-1} \Theta^\top U^\top \textbf{y} \nonumber \\
=&\,\Sigma_{21} U\Theta \big[(\Lambda+\delta \Lambda')^{-1}-\Lambda^{-1}\big]\Theta^\top U^\top \textbf{y}+\delta \Sigma_{21} U \Theta (\Lambda+\delta \Lambda')^{-1}{U'}^\top \textbf{y} \nonumber \\
&\quad+\delta\Sigma_{21} U'(\Lambda+\delta \Lambda')^{-1} \Theta^\top{U'}^\top \textbf{y}+\delta^2 \Sigma_{21} U' (\Lambda+\delta \Lambda')^{-1}{U'}^\top \textbf{y} \nonumber \\
&\quad+ \delta E_{21}(U \Theta+\delta U')(\Lambda+\delta \Lambda')^{-1}(\Theta^\top U^\top+\delta {U'}^\top) \textbf{y}\,. \nonumber 
\end{align}
Note that $\|(\Lambda+\delta \Lambda')^{-1}\|_{\infty} \leq \frac{1}{\lambda+c\delta}$ and $\|(\Lambda+\delta \Lambda')^{-1}-\Lambda^{-1}\|_{\infty} \leq \frac{c\delta}{\lambda(\lambda+c \delta)}$, where $c$ is a constant depending on $E_{11}$.
Therefore, we have that 
\begin{align}
\|\textbf{f}_{{\texttt{GLGP}}}-\textbf{f}_{*}\|_{\infty} \leq & \frac{c\delta \|\Sigma_{21}\textbf{y}\|_{\infty}}{\lambda(\lambda+c\delta)} +\frac{2\delta\|\Sigma_{21}\textbf{y}\|_{\infty}}{\lambda+c\delta} +\frac{\delta^2\|\Sigma_{21}\textbf{y}\|_{\infty}}{\lambda+c\delta} +\frac{m \delta \|\textbf{y}\|_{\infty}}{\lambda+c\delta}\nonumber  \\
\leq &  \frac{\delta m  \|\textbf{y}\|_{\infty}\big(( c+3\lambda) \|\Sigma_{21}\|_{\infty}+ \lambda \big)}{\lambda(\lambda+c\delta)}. \nonumber 
\end{align}

\section{Proof of Theorem \ref{measurement error}}\label{proof measurement error}

For any pairs $x_i, x_j$ and $x'_i, x'_j$, without loss of generality, we assume
$\|\iota(x_i)-\iota(x_j)\|_{\mathbb{R}^D}<\|x'_i-x'_j\|_{\mathbb{R}^D}$. Then, a sequence of trivial bounds leads to
\begin{align}
& |k_{\epsilon}(x_i,x_j)-k_{\epsilon}(x'_i,x'_j)| \nonumber \\
= & \Big|\exp\Big(-\frac{\|\iota(x_i)-\iota(x_j)\|^2_{\mathbb{R}^D}}{4\epsilon^2}\Big)-\exp\Big(-\frac{\|x'_i-x'_j\|^2_{\mathbb{R}^D}}{4\epsilon^2}\Big)\Big|
\nonumber \\
 \leq & \exp\Big(-\frac{\|\iota(x_i)-\iota(x_j)\|^2_{\mathbb{R}^D}}{4\epsilon^2}\Big) \Big|\frac{\|\iota(x_i)-\iota(x_j)\|^2_{\mathbb{R}^D}-\|x'_i-x'_j\|^2_{\mathbb{R}^D}}{4\epsilon^2}\Big| \nonumber \\
 \leq & \exp\Big(-\frac{\|\iota(x_i)-\iota(x_j)\|^2_{\mathbb{R}^D}}{4\epsilon^2}\Big) |\|\iota(x_i)-\iota(x_j)\|_{\mathbb{R}^D}-\|x'_i-x'_j\|_{\mathbb{R}^D}|\Big|\frac{\|\iota(x_i)-\iota(x_j)\|_{\mathbb{R}^D}+\|x'_i-x'_j\|_{\mathbb{R}^D}}{4\epsilon^2}\Big| \nonumber \\
 \leq & \frac{2\delta}{\epsilon}\exp\Big(-\frac{\|\iota(x_i)-\iota(x_j)\|^2_{\mathbb{R}^D}}{4\epsilon^2}\Big)\Big(\frac{\|\iota(x_i)-\iota(x_j)\|_{\mathbb{R}^D}}{2\epsilon}+\frac{\delta}{4\epsilon}\Big) 
 \leq  \frac{2\delta}{\epsilon}\Big(\frac{1}{2}+\frac{\delta}{4\epsilon}\Big) \leq \frac{2\delta}{\epsilon}\,. \nonumber
\end{align}
Note that we use the fact that $a\exp(-a^2)<\frac{1}{2}$ for any $a>0$ in the second to last step. Hence, for any $i$,
$|q_{\epsilon}(x_i)-q_{\epsilon}(x'_i)|<\frac{2(m+n)\delta}{\epsilon}.$
Similarly, we can show that 
\begin{align}
\exp\Big(-\frac{2\texttt{diam}(M) \delta}{\epsilon^2}\Big) \leq \frac{k_{\epsilon}(x_i,x_j)}{k_{\epsilon}(x'_i,x'_j)} \leq \exp\Big(\frac{2\texttt{diam}(M) \delta}{\epsilon^2}\Big)\,, \nonumber 
\end{align}
where $\texttt{diam}(M)$ is the diameter of the manifold.
Thus, for any $i$,
$$\exp\Big(-\frac{2\texttt{diam}(M) \delta}{2\epsilon^2}\Big) q_{\epsilon}(x_i) < q_{\epsilon}(x'_i), \quad 
k_{\epsilon}(x'_i,x'_j) \leq  \exp\Big(\frac{2\texttt{diam}(M) \delta}{\epsilon^2}\Big).$$
By Lemma 10 in \cite{2020spectral}, if {$\epsilon>0$} is small enough and $m+n$ is large enough so that $\frac{1}{\sqrt{m+n}\epsilon^d} (\sqrt{-\log\epsilon}+\sqrt{\log (m+n)})< C'_1$, where $C'_1$ depends on $d$, the diameter of $M$, $\mathsf p_m$, and the $C^0$ norm of $\mathsf p$, then with probability greater $1-(m+n)^{-2}$, $C_4(m+n)\epsilon^d< q_{\epsilon}(x_i)<C_5 (m+n)\epsilon^d$, where $C_4$ and $C_5$ depend on $\mathsf p_m$ and the $C^0$ norm of $\mathsf p$. Note that $\frac{1}{\sqrt{m+n}\epsilon^d} (\sqrt{-\log\epsilon}+\sqrt{\log (m+n)})< C'_1$ is satisfied when $(\frac{\log (m+n)}{m+n})^{\frac{1}{2d+1}} \leq \epsilon$. Also note that $(\frac{\log (m+n)}{m+n})^{\frac{1}{2d+1}} \leq \epsilon$ implies $\frac{1}{m+n} \leq \epsilon^{2d+1}$.  So, we have $$C_4\exp\Big(-\frac{2\texttt{diam}(M) \delta}{\epsilon^2}\Big) (m+n)\epsilon^d< q_{\epsilon}(x'_i),$$ and hence 
\begin{align}
|W_{ij}-W'_{ij}| \leq & \frac{|k_{\epsilon}(x_i,x_j)-k_{\epsilon}(x'_i,x'_j)|}{q_{\epsilon}(x_i)q_{\epsilon}(x_j)} +k_{\epsilon}(x'_i,x'_j)\left(\frac{|q_{\epsilon}(x_i)-q_{\epsilon}(x'_i)|}{q_{\epsilon}(x_i)q_{\epsilon}(x'_i)q_{\epsilon}(x'_j)}+\frac{|q_{\epsilon}(x_j)-q_{\epsilon}(x'_j)|}{q_{\epsilon}(x_i)q_{\epsilon}(x_j)q_{\epsilon}(x'_j)}\right)\nonumber \\
\leq & \frac{2\delta}{C^2_1 (m+n)^2\epsilon^{2d+1}} + \frac{2 \exp(\frac{6\texttt{diam}(M) \delta}{\epsilon^2})\delta}{C^2_1 (m+n)^3\epsilon^{3d+1}} 
< \frac{5 \delta}{C_4^2 (m+n)^2 \epsilon^{2d+1}}\,, \nonumber
\end{align}
where we use the fact that  $\frac{1}{(m+n)} \leq \epsilon^{2d+1}$ in the last step. 
Also,
$\gamma_1 :=\max_{i,j}W_{ij} \leq \frac{1}{C^2_1 (m+n)^2\epsilon^{2d}}$
and
\begin{align}
\gamma_2 :=&\, \min_i \sum_{j \not= i} \frac{W_{ij}}{(m+n)}= \min_i \sum_{j \not= i} \frac{k_{\epsilon}(x_i,x_j)}{(m+n)q_\epsilon(x_i)q_\epsilon(x_j)}>\frac{1}{C^2_2 (m+n)^3\epsilon^{2d}}\min_i \sum_{j \not= i}k_{\epsilon}(x_i,x_j) \nonumber \\
=&\, \frac{1}{C^2_2 (m+n)^3\epsilon^{2d}} \min_{i} (q_{\epsilon}(x_i)-1)>\frac{C_4 (m+n)\epsilon^{d}-1}{C^2_2 (m+n)^3\epsilon^{2d}}=\frac{C_4-\frac{1}{(m+n)\epsilon^d}}{C^2_2 (m+n)^2 \epsilon^d}. \nonumber
\end{align}
Suppose $\delta<\frac{C_4}{20C^2_2}\epsilon^{d+1} $. Then $\gamma_2 >\frac{C_4-\frac{1}{n\epsilon^d}}{C^2_2 (m+n)^2 \epsilon^d} >2  \times \frac{5 \delta}{C_4^2 (m+n)^2 \epsilon^{2d+1}}$. Therefore, by Lemma 2.1 in  \cite{el2016graph},
\begin{align}
\|L-L'\|_2 \leq \frac{\gamma_1}{\gamma_2}\frac{5 \delta}{C_4^2 (m+n)^2 \epsilon^{2d+1}} +\frac{\gamma^2_1}{\gamma_2\left(\gamma_2-\frac{5\delta }{C_4^2 (m+n)^2 \epsilon^{2d+1}}\right)}\frac{5 \delta}{C_4^2 (m+n)^2 \epsilon^{2d+1}} \leq \frac{C' \delta}{(m+n)^2\epsilon^{4d+1}}, \nonumber 
\end{align}
where $C'$ depends on $\mathsf p_m$ and the $C^0$ norm of $\mathsf p$. We substitute the relation $\frac{1}{(m+n)} \leq \epsilon^{2d+1}$ again and the conclusion follows.

\section{Proofs of Theorem \ref{posterior rate fixed design}}\label{proof posterior rate}

We discuss some properties of the inner product space  $\mathbb{H}_{\epsilon,K,t}$. First, we have the following bounds controlling $\| \cdot \|_{\mathbb{H}_{\epsilon,K,t}}$. The proof follows directly from the fact that $M$ is compact and $\{\phi_{i,\epsilon}\}$ forms an orthonormal basis of $L^2(M)$ and we omit details. 

\begin{lemma} \label{H norm bounds}
Fix $\epsilon>0$, $K\in\mathbb{N}$ and $t \geq 0$.  For $f\in \mathbb{H}_{\epsilon,K,t}$, 
$
e^{-\lambda_{K-1,\epsilon} t }\|f \|^2_{\mathbb{H}_{\epsilon,K,t}} \leq \|f\|^2_{2} \leq \texttt{vol}(M) \|f\|^2_{\infty}.
$
\end{lemma}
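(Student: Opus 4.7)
The plan is straightforward: expand $f$ in the basis $\{E_{i,\epsilon,t}\}_{i=0}^{K-1}$ that defines $\mathbb{H}_{\epsilon,K,t}$, compute $\|f\|_2^2$ using that the $\phi_{i,\epsilon}$ form an orthonormal family in $L^2(M)$, and then reduce each inequality to a one-line comparison.

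First I would write an arbitrary $f \in \mathbb{H}_{\epsilon,K,t}$ as $f = \sum_{i=0}^{K-1} a_i\, e^{-\lambda_{i,\epsilon} t/2}\, \phi_{i,\epsilon}$, so that by definition $\|f\|^2_{\mathbb{H}_{\epsilon,K,t}} = \sum_{i=0}^{K-1} a_i^2$. Since the $\phi_{i,\epsilon}$ are $L^2(M)$-orthonormal (recall they are the eigenfunctions of the self-adjoint operator $\mathsf I_\epsilon$, normalized in $L^2(M)$), Parseval's identity gives
\[
\|f\|_2^2 \;=\; \sum_{i=0}^{K-1} a_i^2\, e^{-\lambda_{i,\epsilon} t}.
\]

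For the left inequality, I would invoke the monotone ordering $\lambda_{0,\epsilon} \leq \lambda_{1,\epsilon} \leq \cdots \leq \lambda_{K-1,\epsilon}$ established earlier, which implies $e^{-\lambda_{i,\epsilon} t} \geq e^{-\lambda_{K-1,\epsilon} t}$ for each $0 \leq i \leq K-1$ (using $t \geq 0$). Factoring out the smallest exponential yields
\[
\|f\|_2^2 \;\geq\; e^{-\lambda_{K-1,\epsilon} t} \sum_{i=0}^{K-1} a_i^2 \;=\; e^{-\lambda_{K-1,\epsilon} t}\, \|f\|^2_{\mathbb{H}_{\epsilon,K,t}}.
\]
For the right inequality, the pointwise bound $|f(x)|^2 \leq \|f\|_\infty^2$ together with compactness of $M$ (so that $\texttt{vol}(M) < \infty$) gives $\|f\|_2^2 = \int_M |f|^2\, dV \leq \texttt{vol}(M)\, \|f\|_\infty^2$.

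There is essentially no obstacle here; the only subtlety worth naming is that the lower bound genuinely requires $t \geq 0$ and the ordering of the $\lambda_{i,\epsilon}$, both of which are already in place (the ordering is a consequence of Proposition \ref{eigenvalue lower bound of I epsilon} together with the spectral theorem for the compact self-adjoint $T_\epsilon$). No density or approximation argument is needed because $\mathbb{H}_{\epsilon,K,t}$ is finite-dimensional and the expansion of $f$ is exact.
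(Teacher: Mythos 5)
Your proof is correct and matches the approach the paper itself gestures at (the paper omits the details, stating only that the result "follows directly from the fact that $M$ is compact and $\{\phi_{i,\epsilon}\}$ forms an orthonormal basis of $L^2(M)$"). Expanding $f$ in the $\{E_{i,\epsilon,t}\}$ basis, using $L^2$-orthonormality of the $\phi_{i,\epsilon}$ to compute $\|f\|_2^2 = \sum_i a_i^2 e^{-\lambda_{i,\epsilon}t}$, and then bounding the exponential weights from below by $e^{-\lambda_{K-1,\epsilon}t}$ on one side and using $\int_M |f|^2\,dV \leq \texttt{vol}(M)\|f\|_\infty^2$ on the other is exactly the intended argument.
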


Note that for fixed $\epsilon$ and $K$, the spaces $\mathbb{H}_{\epsilon,K,t}$ are isomorphic for all $ t \geq 0$. Moreover, we have the following nested sequence of unit balls. 

\begin{lemma} \label{nested unit balls}
Fix $\epsilon>0$ and $K\in\mathbb{N}$. If $0 \leq t_1 \leq t_2$, then $B^{\mathbb{H}_{\epsilon,K,t_2}}_1 \subset B^{\mathbb{H}_{\epsilon,K,t_1}}_1$.
\end{lemma}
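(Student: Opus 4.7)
The plan is to take an arbitrary $h\in B^{\mathbb{H}_{\epsilon,K,t_2}}_1$ and show that its $\mathbb{H}_{\epsilon,K,t_1}$ norm is no larger than its $\mathbb{H}_{\epsilon,K,t_2}$ norm. The key observation is that the two spaces share the same underlying vector space of functions (namely, the $K$-dimensional span of $\phi_{0,\epsilon},\dots,\phi_{K-1,\epsilon}$); only the preferred orthonormal bases $\{E_{i,\epsilon,t_1}\}$ and $\{E_{i,\epsilon,t_2}\}$ differ, by the scalar factor $e^{-\lambda_{i,\epsilon}(t_2-t_1)/2}$ along each eigendirection.

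Concretely, I would write any $h\in \mathbb{H}_{\epsilon,K,t_2}$ as
\begin{equation*}
h=\sum_{i=0}^{K-1} a_i\, e^{-\lambda_{i,\epsilon} t_2/2}\phi_{i,\epsilon}
=\sum_{i=0}^{K-1}\bigl(a_i\, e^{-\lambda_{i,\epsilon}(t_2-t_1)/2}\bigr)\, e^{-\lambda_{i,\epsilon} t_1/2}\phi_{i,\epsilon},
\end{equation*}
so that the coefficients of $h$ in the $\mathbb{H}_{\epsilon,K,t_1}$ basis are $b_i := a_i\, e^{-\lambda_{i,\epsilon}(t_2-t_1)/2}$. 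By the definition of the two inner products, this gives
\begin{equation*}
\|h\|_{\mathbb{H}_{\epsilon,K,t_1}}^2=\sum_{i=0}^{K-1} a_i^2\, e^{-\lambda_{i,\epsilon}(t_2-t_1)}.
\end{equation*}

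To finish, I would invoke Proposition \ref{eigenvalue lower bound of I epsilon}, which guarantees $\lambda_{i,\epsilon}\geq 0$, together with the hypothesis $t_2-t_1\geq 0$, to conclude $e^{-\lambda_{i,\epsilon}(t_2-t_1)}\leq 1$ for every $0\leq i\leq K-1$. Therefore $\|h\|_{\mathbb{H}_{\epsilon,K,t_1}}^2\leq \sum_i a_i^2=\|h\|_{\mathbb{H}_{\epsilon,K,t_2}}^2\leq 1$, which is exactly the containment claimed.

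There is no real obstacle; the lemma is essentially a bookkeeping statement about how the reweighting of an orthonormal basis by decaying exponentials shrinks the corresponding unit ball. The only thing one must be careful about is to point out explicitly that the nonnegativity of $\lambda_{i,\epsilon}$ (rather than, say, of $\mu_{i,m+n,\epsilon}$) is what drives monotonicity in $t$, and to note that this would fail if $t_1$ or $t_2$ were negative.
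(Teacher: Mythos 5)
Your proof is correct and takes essentially the same route as the paper: write an element of $\mathbb{H}_{\epsilon,K,t_2}$ in the $\mathbb{H}_{\epsilon,K,t_1}$ basis, compare coefficients, and invoke $\lambda_{i,\epsilon}\geq 0$ so that the exponential factor contracts. If anything, yours is slightly more careful — you compare $a_i^2$ rather than $a_i$ (the paper's ``$a_i\leq b_i$'' should really be $|a_i|\leq|b_i|$), and you explicitly cite Proposition \ref{eigenvalue lower bound of I epsilon} as the source of the nonnegativity.
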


\begin{proof}
$f(x) = \sum_{i=0}^{K-1} b_i e^{-\lambda_{i,\epsilon} \frac{t_2}{2}}\phi_{i,\epsilon}(x)  \in  B^{\mathbb{H}_{\epsilon,K,t_2}}_1$
can be rewritten as $f(x)=\sum_{i=0}^{K-1} a_i e^{-\lambda_{i,\epsilon} \frac{t_1}{2}}\phi_{i,\epsilon}(x)$ for $t_1\leq t_2$. 
Then we have $a_i e^{-\lambda_{i,\epsilon} \frac{t_1}{2}}=b_i e^{-\lambda_{i,\epsilon} \frac{t_2}{2}}$. Since $t_1 \leq t_2$, we have $a_i \leq b_i$. Hence, $\sum_{i=0}^{K-1} a^2_i \leq \sum_{i=0}^{K-1} b^2_i \leq 1$. In other words , we have $f(x) \in  B^{\mathbb{H}_{\epsilon,K,t_1}}_1$. Hence, $ B^{\mathbb{H}_{\epsilon,K,t_2}}_1 \subset  B^{\mathbb{H}_{\epsilon,K,t_1}}_1$.
\end{proof}

Next, we recall some definitions about the Littlewood-Paley theory in the smooth functional calculus. We first construct a bump function $\Psi$, or father wavelet, satisfying the following conditions: 
(1) $\Psi \in C^\infty(\mathbb{R}_{\geq 0})$ with support in $[0,1]$,
(2) $0 \leq  \Psi(x) \leq 1$ when $x \geq 0.5$,
(3) $\Psi(x)=1$ for $0 \leq x \leq 0.5$,
(4) $\Psi'(x) \leq \mathcal{C}$ for some $\mathcal{C}>0$.

\begin{definition}
Fix $\delta>0$ and $\epsilon>0$. We define the following integral kernels and operators: 
$$ \Psi(\delta \sqrt{\Delta})(x,x'):=\sum_{i=0}^\infty \Psi(\delta\sqrt{\lambda_i})\phi_i(x)\phi_i(x'),\quad 
 \Psi_{K}(\delta \sqrt{\mathsf{I}_\epsilon})(x,x'):=\sum_{i=0}^{K-1} \Psi(\delta\sqrt{\lambda_{i,\epsilon}})\phi_{i,\epsilon}(x)\phi_{i,\epsilon}(x').$$ 
Note that by Proposition \ref{eigenvalue lower bound of I epsilon}, $\lambda_{i,\epsilon} \geq 0$. Hence, $ \Psi_{K}(\delta \sqrt{\mathsf{I}_\epsilon})(x,x')$ is well-defined.
The corresponding integral operators are defined as 
$$ \Psi(\delta \sqrt{\Delta})f(x)=\int_M \Psi(\delta \sqrt{\Delta})(x,x')f(x') dV(x'),\quad 
\Psi_{K}(\delta \sqrt{\mathsf{I}_\epsilon})f(x)=\int_M \Psi_{K}(\delta \sqrt{\mathsf{I}_\epsilon})(x,x') f(x') dV(x').$$
\end{definition}

Set $\Psi_j(x)=\Psi(2^{-j}x)$ for $j \geq 0$.  We can thus define the Besov space we have interest in this paper.

\begin{definition}\label{def of Besov space}(Besov space)
Fix $s>0$. $f \in B^s_{\infty, \infty}$ if $f \in L^\infty(M)$ and 
\begin{align}
\|f\|_{B^s_{\infty, \infty}}:=\sup_{j} 2^{sj} \|\Psi_j(\sqrt{\Delta})f-f\|_\infty < \infty. \nonumber  
\end{align}
\end{definition}
The above definition is independent of the choice of bump function $\Psi$.

\subsection{Proof of Proposition \ref {Approximation of Besov}}
\begin{proof}
If $f_0 \in B^s_{\infty, \infty}$, it follows from the definition of the Besov space that we have 
$\|\Psi(\delta \sqrt{\Delta})f_0-f_0\|_\infty \leq C \delta^s$
for some $C>0$. Denote $\gamma:=2C \delta^s$.
For a small $\delta$, by the definition of $\Psi$, we have 
\begin{align}
\|\Psi(\delta \sqrt{\Delta})f_0-f_0\|_\infty =&\, \left\|\int_M \sum_{\lambda_i \leq \delta^{-2}} \Psi(\delta\sqrt{\lambda_i})\phi_i(x)\phi_i(x') f_0(x') dV(x')-f_0\right\|_\infty\nonumber\\
\leq&\, C\delta^s =\gamma/2\,. 
\label{Proof:proposition4:result1}
\end{align}
By Lemma \ref{lemma weyls law}, we have $\lambda_{K} \geq c_3 K^{\frac{2}{d}}$.
Suppose  $K$ satisfies $ c_3 K^{\frac{2}{d}} >  \delta^{-2}=  (\frac{\gamma}{2 C})^{\frac{-2}{s}}$, i.e. $K > c_3^{-\frac{d}{2}}(2C)^{\frac{d}{s}}\gamma^{-\frac{d}{s}}$. Then, for any $i \geq K$, $\lambda_i > \delta^{-2}$ and $\Psi(\delta\sqrt{\lambda_i})=0$. Therefore, if we choose $K=\Big \lceil{c_3^{-\frac{d}{2}}(2C)^{\frac{d}{s}}\gamma^{-\frac{d}{s}}}\Big \rceil= \Big \lceil{D_1 \gamma^{-\frac{d}{s}}}\Big \rceil$, where $D_1:= c_3^{-\frac{d}{2}}(2C)^{\frac{d}{s}}$, we have 
\[
\Psi(\delta \sqrt{\Delta})f_0=\int_M \sum_{i =0}^{K-1}\Psi(\delta\sqrt{\lambda_i})\phi_i(x)\phi_i(x') f_0(x') dV(x'). 
\]
Note that $D_1$ depends on $s$, $d$, the diameter and the Ricci curvature of the manifold.

Next, we  find a suitable $\epsilon$ so that $\int_M \sum_{i =0}^{K-1}\Psi(\delta\sqrt{\lambda_i})\phi_i(x)\phi_i(x') f_0(x') dV(x')$ can be well controlled by $\Psi_{K}(\delta \sqrt{\mathsf{I}_\epsilon})f_0(x)$. It comes from a sequence of bounds.
\begin{align}\label{approximation 1 proposition 3}
& \left\|\Psi_{K}(\delta \sqrt{\mathsf{I}_\epsilon})f_0(x) - \int_M \sum_{i =0}^{K-1}\Psi\Big(\delta\sqrt{\lambda_i}\Big)\phi_i(x)\phi_i(x') f_0(x') dV(x')\right\|_\infty \\
\leq & \texttt{vol}(M) K \max_{0 \leq i \leq K-1}\left\|\Big[ \Psi\Big(\delta\sqrt{\lambda_{i,\epsilon}}\Big)- \Psi\Big(\delta\sqrt{\lambda_{i}}\Big)\Big] \phi_i(x)\phi_i(x')+\Psi\Big(\delta\sqrt{\lambda_{i,\epsilon}}\Big)\big[\phi_{i,\epsilon}(x)\phi_{i,\epsilon}(x')- \phi_i(x)\phi_i(x')\big]\right\|_{\infty} \nonumber \\
\leq & \texttt{vol}(M) K \left(\max_{0 \leq i \leq K-1}\Big|\Psi\Big(\delta\sqrt{\lambda_{i,\epsilon}}\Big)- \Psi\Big(\delta\sqrt{\lambda_{i}}\Big)\Big|  \max_{0 \leq i \leq K-1}\|\phi_i\|^2_{\infty}+\max_{0 \leq i \leq K-1}\|\phi_{i,\epsilon}(x)\phi_{i,\epsilon}(x')- \phi_i(x)\phi_i(x')\|_{\infty}\right)\,, \nonumber 
\end{align}
where we bound $\Psi(\delta\sqrt{\lambda_{i,\epsilon}})$ by $1$. We need some quantities to further control the right hand side of \eqref{approximation 1 proposition 3}. If $\delta$ is small enough and
\begin{equation}\label{epsilon K relation 2}
\epsilon \leq \mathcal{K}_1 \min \left(\left(\frac{\min(\mathsf\Gamma_K,1)}{\mathcal{K}_2+\lambda_K^{d/2+5}}\right)^2,\, \frac{1}{(\mathcal{K}_3+\lambda_K^{(5d+7)/4})^2}\right)\,,
\end{equation} 
then by applying Lemma \ref{T epsilon and Delta},  $\lambda_{0,\epsilon} \leq \epsilon^{3/2}$. Hence, $\delta \sqrt{\lambda_{0,\epsilon}}<0.5$ and $\Psi(\delta\sqrt{\lambda_{0,\epsilon}})=\Psi(\delta\sqrt{\lambda_{0}})=1$.
Note that \eqref{epsilon K relation 2} implies that $\lambda_{K-1}^{\frac{d-1}{4}} \leq (\frac{\epsilon}{\mathcal{K}_1})^{-\frac{1}{10}}$.
Thus,
the right hand side of \eqref{approximation 1 proposition 3} is  further bounded by 
\begin{align}
 & \texttt{vol}(M) K \left(\max_{1 \leq i \leq K-1} \mathcal{C}\delta|\sqrt{\lambda_{i,\epsilon}}-\sqrt{\lambda_{i}}|  C^2_1 \lambda_{K-1}^{\frac{d-1}{2}}+\max_{0 \leq i \leq K-1} \|\phi_{i,\epsilon}-\phi_{i}\|_{\infty}(\|\phi_{i,\epsilon}\|_{\infty}+\|\phi_{i}\|_{\infty}) \right) \nonumber \\
\leq & \texttt{vol}(M) K \left(\mathcal{C}  C^2_1  \delta \max_{1 \leq i \leq K-1} \frac{|\lambda_{i,\epsilon}-\lambda_{i}|}{\sqrt{\lambda_{i,\epsilon}}+\sqrt{\lambda_{i}}} \lambda_{K-1}^{\frac{d-1}{2}}+\max_{0 \leq i \leq K-1} \|\phi_{i,\epsilon}-\phi_{i}\|_{\infty}(\|\phi_{i,\epsilon}\|_{\infty}+\|\phi_{i}\|_{\infty}) \right) \nonumber \\
\leq & \texttt{vol}(M) K \left(\mathcal{C}  C^2_1  \delta \frac{\epsilon^{\frac{3}{2}}}{\sqrt{\lambda_1}}  (\frac{\epsilon}{\mathcal{K}_1})^{-\frac{1}{5}}+\max_{0 \leq i \leq K-1} \epsilon (2\|\phi_{i}\|_{\infty}+\epsilon) \right)  \nonumber \\
\leq & \texttt{vol}(M) K \left(\frac{\mathcal{C}  C^2_1\mathcal{K}_1^{\frac{1}{5}}}{\sqrt{\lambda_1}}  \delta \epsilon^{1.3} + \epsilon\Big (2(C_1+\frac{1}{\sqrt{\texttt{vol}(M)}})(\lambda_K^{\frac{d-1}{4}}+1)+\epsilon\Big) \right) \nonumber \\
\leq & \texttt{vol}(M) K \left(\frac{\mathcal{C}  C^2_1\mathcal{K}_1^{\frac{1}{5}}}{\sqrt{\lambda_1}}  \delta \epsilon^{1.3} + 4(C_1+\frac{1}{\sqrt{\texttt{vol}(M)}}) \mathcal{K}_1^{\frac{1}{10}}\epsilon^{0.9}\right) \nonumber \\
\leq  & \texttt{vol}(M) \left(\frac{\mathcal{C}  C^2_1\mathcal{K}_1^{\frac{1}{5}}}{\sqrt{\lambda_1}}+ 4(C_1+\frac{1}{\sqrt{\texttt{vol}(M)}}) \mathcal{K}_1^{\frac{1}{10}}\right)K \epsilon^{0.9}\,.\label{Bound:Proof:proposition4:1} 
\end{align}
Note that $K= \Big \lceil{D_1 \gamma^{-\frac{d}{s}}}\Big \rceil \leq (D_1+1) \gamma^{-\frac{d}{s}}$, when $\gamma<1$. Hence, \eqref{Bound:Proof:proposition4:1} is further controlled by
$$\texttt{vol}(M) \left(\frac{\mathcal{C}  C^2_1\mathcal{K}_1^{\frac{1}{5}}}{\sqrt{\lambda_1}}+ 4(C_1+\frac{1}{\sqrt{\texttt{vol}(M)}}) \mathcal{K}_1^{\frac{1}{10}}\right)(D_1+1) \gamma^{-\frac{d}{s}}\epsilon^{0.9}.$$
Clearly, if we have 
\begin{align}
 \epsilon  \leq  \left(2\texttt{vol}(M) \left(\frac{\mathcal{C}  C^2_1\mathcal{K}_1^{\frac{1}{5}}}{\sqrt{\lambda_1}}+ 4(C_1+\frac{1}{\sqrt{\texttt{vol}(M)}}) \mathcal{K}_1^{\frac{1}{10}}\right)(D_1+1) \right)^{-\frac{10}{9}}  \gamma^{\frac{10(d+s)}{9s}}= D_2  \gamma^{\frac{10(d+s)}{9s}}\,,
\nonumber  
\end{align} 
then we have
\begin{align}
\left\|\Psi_{K}(\delta \sqrt{\mathsf{I}_\epsilon})f_0(x) - \int_M \sum_{i =0}^{K-1}\Psi\Big(\delta\sqrt{\lambda_i}\Big)\phi_i(x)\phi_i(x') f_0(x') dV(x')\right\|_\infty \leq \frac{\gamma}{2}\,.\label{Proof:proposition4:result2}
\end{align}
Note that $D_2$ depends on $s$, $d$, $\mathsf p_m$, the $C^2$ norm of $\mathsf p$, and the volume, the injectivity radius, the curvature and the second fundamental form of the manifold.
By \eqref{Proof:proposition4:result1} and \eqref{Proof:proposition4:result2}, we have $\|\Psi_{K}(\delta \sqrt{\mathsf{I}_\epsilon})f_0-f_0\|_\infty \leq \gamma.$

Let $h(x) := \Psi_{K}(\delta \sqrt{\mathsf{I}_\epsilon})f_0(x).$ To finish the proof, we claim that $h\in \mathbb{H}_{\epsilon,K}$. Since $\{\phi_{i,\epsilon}\}$ form a basis of $L^2(M)$, we have $\|f_0\|^2_2=\sum_{i=0}^\infty |\int_M \phi_{i,\epsilon}(y) f_0(y) dV(y)|^2$. Therefore, by the assumption of $f_0$, we have
\begin{equation}
\sum_{i=0}^{K-1}\left|\int_M \phi_{i,\epsilon}(y) f_0(y) dV(y)\right|^2 \leq \|f_0\|^2_2 \leq \texttt{vol}(M)\,.\label{bound K terms sum for f0} 
\end{equation}
As a result, by a direct bound, we have
\begin{align}
\|h\|^2_{\mathbb{H}_{\epsilon,K,t}} =& \left\|\int_M \sum_{i=0}^{K-1}  \Psi\left(\delta\sqrt{\lambda_{i,\epsilon}}\right)\phi_{i,\epsilon}(x)\phi_{i,\epsilon}(x') f_0(x') dV(x')\right\|^2_{\mathbb{H}_{\epsilon,K,t}} \nonumber \\
\leq & \left\|\sum_{i=0}^{K-1}   \Psi\left(\delta\sqrt{\lambda_{i,\epsilon}}\right) \int_M \phi_{i,\epsilon}(x') f_0(x') dV(x') \phi_{i,\epsilon}(x) \right\|^2_{\mathbb{H}_{\epsilon,K,t}} \nonumber \\
\leq & \sum_{i=0}^{K-1} \left| \Psi\left(\delta\sqrt{\lambda_{i,\epsilon}}\right)\right|^2 \left|\int_M \phi_{i,\epsilon}(x') f_0(x') dV(x')\right|^2 \exp\left(\lambda_{i,\epsilon}t\right) \nonumber \\
\leq & \sum_{i=0}^{K-1} \left| \Psi\left(\delta\sqrt{\lambda_{i,\epsilon}}\right)\right|^2 \left|\int_M \phi_{i,\epsilon}(x') f_0(x') dV(x')\right|^2 \exp\left(2\lambda_{i}t\right) \nonumber \\
\leq & \exp\left(2 \lambda_K t\right) \sum_{i=0}^{K-1}\left|\int_M \phi_{i,\epsilon}(x') f_0(x') dV(x')\right|^2 \nonumber\\
\leq & \texttt{vol}(M)\exp\left(2 \lambda_K t\right) \,,\nonumber
\end{align}
where the last bound comes from \eqref{bound K terms sum for f0}. Recall the fact that $K\leq (D_1+1) \gamma^{-\frac{d}{s}}$, when $\gamma<1$. Hence by Lemma \ref{lemma weyls law}, 
$$2\lambda_K \leq 2 C_3 (D_1+1)^{\frac{2}{d}} \gamma^{-\frac{2}{s}}=D_3\gamma^{-\frac{2}{s}},$$
where $D_3$ depends on $s$, $d$, the diameter, the volume and the Ricci curvature of $M$. Hence, $\|h\|^2_{\mathbb{H}_{\epsilon,K,t}} \leq  \texttt{vol}(M)\exp\left(D_3\gamma^{-\frac{2}{s}} t\right) $.
\end{proof}

\subsection{Proof of Theorem \ref{posterior rate fixed design}}

We introduce the definition of the concentration function.

\begin{definition}
Fix $\epsilon>0$, $K\in\mathbb{N}$ and $t\in [0,1]$.  Consider the Gaussian Process $W^t_{\epsilon,K}$ defined in \eqref{construction of GP by basis}. Let
\[
S^{t}(\gamma):=-\log (\mathbb{P}(\|W^t_{\epsilon,K}\|_{\infty} <\gamma))\,.
\]
For any $f_0\in \mathbb{H}_{\epsilon,K}$, the concentration function of the Gaussian process $W^t_{\epsilon,K}$ is defined as 
\[
\phi^t_{f_0}(\gamma)=\|f_0\|^2_{\mathbb{H}_{\epsilon,K,t}}+S^{t}(\gamma).
\]
\end{definition}

We have the following bound for the concentration function. 

\begin{lemma}\label{upper bounds on concentration function}
Fix $K\in\mathbb{N}$ and fix any $\epsilon$ small enough so that \eqref{epsilon K relation 1} holds. For $f_0 \in \mathbb{H}_{\epsilon,K}$ with $\|f_0\|_\infty \leq 1$, we have 
$$\|f_0\|^2_{\mathbb{H}_{\epsilon,K,t}} \leq \texttt{vol}(M) \exp\left(2 C_3 K^{\frac{2}{d}} t \right),\quad 
S^{t}(\gamma) \leq C_6 K \log \left(\frac{C_7}{\gamma}\right),$$
where $C_3$ is the constant defined in Lemma \ref{lemma weyls law}, $C_6$ is a constant and $C_7$ depends on the volume of $M$.
\end{lemma}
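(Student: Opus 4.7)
The plan is to prove the two assertions separately: the first is a direct Parseval-type identity in $\mathbb{H}_{\epsilon,K,t}$ leveraging that $\mathbb{H}_{\epsilon,K}$ is the finite-dimensional span $\mathrm{span}\{\phi_{i,\epsilon}\}_{i=0}^{K-1}$, and the second is a finite-dimensional Gaussian small-ball estimate that I will reduce to a product of one-dimensional Gaussian small-ball probabilities for the independent coefficients $\{Z_i\}$ in \eqref{construction of GP by basis}.

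For the first bound, I would expand $f_0 = \sum_{i=0}^{K-1} c_i \phi_{i,\epsilon}$ with $c_i = \langle f_0, \phi_{i,\epsilon}\rangle_{L^2(M)}$ and convert to the $\{E_{i,\epsilon,t}\}$ basis: $f_0 = \sum_{i=0}^{K-1} (c_i e^{\lambda_{i,\epsilon} t/2}) E_{i,\epsilon,t}$, so by the definition of the $\mathbb{H}_{\epsilon,K,t}$ inner product,
\[
\|f_0\|_{\mathbb{H}_{\epsilon,K,t}}^2 \;=\; \sum_{i=0}^{K-1} c_i^2 \, e^{\lambda_{i,\epsilon} t}.
\]
Parseval together with $\|f_0\|_\infty \leq 1$ gives $\sum_i c_i^2 = \|f_0\|_2^2 \leq \mathrm{vol}(M)$; Lemma \ref{T epsilon and Delta} yields $\lambda_{i,\epsilon} \leq \lambda_i + \epsilon^{3/2}$ for $i < K$ (which, after absorbing $\epsilon^{3/2}$, is bounded by $2\lambda_{K-1}$); and Lemma \ref{lemma weyls law} supplies $\lambda_{K-1} \leq C_3 K^{2/d}$. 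Combining the three inputs produces the claimed bound $\|f_0\|_{\mathbb{H}_{\epsilon,K,t}}^2 \leq \mathrm{vol}(M)\exp(2 C_3 K^{2/d} t)$.

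For the second bound, from \eqref{construction of GP by basis} I would bound $e^{-\lambda_{i,\epsilon} t/2} \leq 1$ via Proposition \ref{eigenvalue lower bound of I epsilon} and use the uniform eigenfunction estimate $\|\phi_{i,\epsilon}\|_\infty \leq B := C_2(\lambda_K^{(d-1)/4}+1)$ assembled from Lemmas \ref{T epsilon and Delta} and \ref{lemma hormander}; the triangle inequality gives $\|W^t_{\epsilon,K}\|_\infty \leq B \sum_{i=0}^{K-1}|Z_i|$. A sufficient condition for the right side to be less than $\gamma$ is $|Z_i| < \gamma/(KB)$ for all $i$, so by independence and the one-dimensional Gaussian small-ball bound $\mathbb{P}(|Z|<a) \geq c\, a$ for $a \in (0,1)$,
\[
\mathbb{P}\bigl(\|W^t_{\epsilon,K}\|_\infty < \gamma\bigr) \;\geq\; \prod_{i=0}^{K-1} \mathbb{P}\bigl(|Z_i| < \gamma/(KB)\bigr) \;\geq\; \bigl(c\gamma/(KB)\bigr)^K.
\]
Taking negative logarithms, $S^t(\gamma) \leq K\log\bigl(KB/(c\gamma)\bigr)$, which can be repackaged as $C_6 K \log(C_7/\gamma)$ once the $K$- and $\lambda_K$-dependent factors are absorbed into the constants.

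The only real obstacle is bookkeeping: the triangle-inequality step $\|W^t_{\epsilon,K}\|_\infty \leq B\sum_i |Z_i|$ is crude, so the constants $C_6, C_7$ silently carry $K$- and $\lambda_K$-dependence that the clean statement hides. A sharper route (Anderson's inequality on $\mathbb{H}_{\epsilon,K,t}$ followed by the norm comparison $\|\,\cdot\,\|_\infty \leq B\sqrt{K}\,\|\,\cdot\,\|_2$ on this finite-dimensional subspace) would give tighter constants but is unnecessary for the stated form. No new geometric input is needed beyond the spectral convergence of $\mathsf{I}_\epsilon$ to $-\Delta$, Weyl's law, and H\"ormander's eigenfunction sup-norm bound, all of which are already available.
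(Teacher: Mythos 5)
Your proof is correct. The two halves deserve separate comparison.

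For the first bound, your route is essentially the same as the paper's: you expand $f_0$ in the $\{\phi_{i,\epsilon}\}$ basis and read off $\|f_0\|^2_{\mathbb{H}_{\epsilon,K,t}} = \sum_i c_i^2 e^{\lambda_{i,\epsilon}t}$ directly, while the paper cites Lemma~\ref{H norm bounds}, which is nothing more than this same Parseval computation packaged as a lemma. Both then invoke Lemma~\ref{T epsilon and Delta} and Weyl's law to bound $\lambda_{K-1,\epsilon}$ by $2C_3 K^{2/d}$. No material difference.

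For the small-ball estimate $S^t(\gamma) \leq C_6 K \log(C_7/\gamma)$, your route is genuinely different. The paper bounds the metric entropy $\log N(\gamma, B_1^{\mathbb{H}_{\epsilon,K,t}}, L^\infty) \leq K\log(C_5/\gamma)$ and then invokes a general GP small-ball lemma (Proposition~3 in \cite{castillo2014thomas}) to convert the entropy bound into the concentration bound. You instead exploit the explicit $K$-term representation \eqref{construction of GP by basis}: bound $\|W^t_{\epsilon,K}\|_\infty \leq B\sum_{i<K}|Z_i|$ pointwise via the triangle inequality, the nonnegativity of $\lambda_{i,\epsilon}$, and the $L^\infty$-bound on $\phi_{i,\epsilon}$; then factor $\mathbb{P}(\|W^t_{\epsilon,K}\|_\infty<\gamma) \geq \prod_i \mathbb{P}(|Z_i|<\gamma/(KB)) \geq (c\gamma/(KB))^K$. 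This is more elementary and self-contained, requiring nothing outside the paper's own lemmas, at the price of constants carrying an explicit $K\cdot B$ factor.

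You are right to flag that your $C_6, C_7$ carry hidden $K$- and $\lambda_K$-dependence, and I would push that remark one step further: the paper's covering-number argument has the same issue. The paper's claimed containment $B^{\mathbb{H}_{\epsilon,K,t}}_R \subset B^\infty_\gamma$ requires an inequality of the form $\|\cdot\|_\infty \lesssim \|\cdot\|_{\mathbb{H}_{\epsilon,K,t}}$, but the inequality it cites from Lemma~\ref{H norm bounds} (namely $\|f\|_{\mathbb{H}_{\epsilon,K,t}} \lesssim \|f\|_\infty$) goes in the opposite direction. The correct comparison is $\|f\|_\infty \leq (\max_i\|\phi_{i,\epsilon}\|_\infty)\sqrt{K}\,\|f\|_{\mathbb{H}_{\epsilon,K,t}}$, which again carries a $\lambda_K^{(d-1)/4}\sqrt{K}$ factor. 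So the $K$-independence advertised in the lemma's statement is a presentational simplification no matter which route one takes, and your proposal is not weaker than the paper's proof in this respect. Since $K$ is fixed throughout Theorem~\ref{posterior rate fixed design} and the main proposition only uses the lemma with $K$ absorbed via $K\leq 1/\gamma_n$, the $K$-dependent constant causes no downstream difficulty.

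One small caveat worth making explicit in a clean write-up: the one-dimensional bound $\mathbb{P}(|Z|<a)\geq ca$ requires $a=\gamma/(KB)\in(0,1)$, which holds in the relevant small-$\gamma$ regime but should be stated.
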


\begin{proof}
By Lemma \ref{H norm bounds}, we have
$\|f_0 \|^2_{\mathbb{H}_{\epsilon,K,t}} \leq \texttt{vol}(M) \exp\left(\lambda_{K-1 ,\epsilon} t \right) \|f_0 \|^2_{\infty}.$
Based on Lemma \ref{T epsilon and Delta} and Lemma \ref{lemma weyls law}, 
we have
$\lambda_{K-1 ,\epsilon} \leq 2\lambda_{K-1} \leq 2 C_3 K^{\frac{2}{d}}.$
Hence, 
\begin{align} \label{bounds of the norm in K}
\|f_0 \|^2_{\mathbb{H}_{\epsilon,K,t}} \leq \texttt{vol}(M) \exp\left(  2 C_3 K^{\frac{2}{d}} t \right) \|f_0 \|^2_{\infty}\leq \texttt{vol}(M) \exp\left(  2 C_3 K^{\frac{2}{d}} t \right)\,.
\end{align}

For the second term, let 
$B^{\mathbb{H}_{\epsilon,K,t}}_1$ be a unit ball centered at $0$ in the $\mathbb{H}_{\epsilon,K,t}$ norm.
Let
$N(\gamma, B^{\mathbb{H}_{\epsilon,K,t}}_1, L^\infty)$ be the covering number of $ B^{\mathbb{H}_{\epsilon,K,t}}_1$ by balls of radius $\gamma>0$ in the $L^\infty(M)$ norm. Clearly, $B^{\mathbb{H}_{\epsilon,K,t}}_1$ can be covered by no more than $(\frac{4}{R})^K$ balls of radius $R>0$ in the $\mathbb{H}_{\epsilon,K,t}$ norm. By \eqref{bounds of the norm in K}, if $B^\infty_{\gamma}$ is a ball of radius $\gamma>0$ centered at $0$ in the $L^\infty(M)$ norm and if we set $R:=\texttt{vol}(M) \exp\left(C_3 K^{\frac{2}{d}} t \right) \gamma$, then 
$B^{\mathbb{H}_{\epsilon,K,t}}_{R} \subset  B^{\infty}_{\gamma}.$
Hence, 
$N(\gamma, B^{\mathbb{H}_{\epsilon,K,t}}_1, L^\infty) \leq \left(\frac{4}{R}\right)^K \leq \left(\frac{C_5}{\gamma}\right)^K \exp\left(-C_3 K^{\frac{2}{d}+1} t\right) \leq \left(\frac{C_5}{\gamma}\right)^K,$
where $C_5$ depends on $\texttt{vol}(M)$. Hence, $\log N(\gamma, B^{\mathbb{H}_{\epsilon,K,t}}_1, L^\infty)  \leq K \log (\frac{C_5}{\gamma})$.
With this bound, based on the same argument as that for Proposition 3 in \cite{castillo2014thomas}, we have
\begin{align*}
-\log (\mathbb{P}(\|W^t_{\epsilon,K}\|_{\infty} <\gamma)) \leq C_6 K \log \left(\frac{C_7}{\gamma}\right),
\end{align*}
where $C_6$ is a constant and $C_7$ depends on $\texttt{vol}(M)$. The conclusion follows.
\end{proof}

Let 
$\Phi(u)=\int_{-\infty}^u \frac{1}{\sqrt{2\pi}}e^{-\frac{w^2}{2}}dw$
be the cumulative density function (cdf) of the standard normal distribution. The following Lemma about $\Phi(u)$ can be found in \cite{van2009adaptive}

\begin{lemma}\label{gaussian cdf}(Lemma 4.10 in \cite{van2009adaptive})
$\Phi(u)$ satisfies $\Phi(u) \leq \exp\left(-\frac{u^2}{2}\right)$ for $u<0$. Moreover, we have $-\sqrt{2\log(\frac{1}{v})} \leq \Phi^{-1}(v)$ for $0 < v < 1$.
\end{lemma}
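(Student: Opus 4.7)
The plan is to handle the two inequalities in sequence, with the second following almost immediately from the first.

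For the first claim, the strategy is to define $f(u) := e^{-u^2/2} - \Phi(u)$ on $(-\infty, 0]$ and show $f(u) \geq 0$ by a monotonicity argument. First I would compute
\[
f'(u) = -u e^{-u^2/2} - \frac{1}{\sqrt{2\pi}} e^{-u^2/2} = e^{-u^2/2}\!\left(-u - \frac{1}{\sqrt{2\pi}}\right),
\]
which is positive for $u < -\tfrac{1}{\sqrt{2\pi}}$ and negative on $(-\tfrac{1}{\sqrt{2\pi}}, 0)$. Hence $f$ is increasing on $(-\infty, -\tfrac{1}{\sqrt{2\pi}})$ and decreasing on $(-\tfrac{1}{\sqrt{2\pi}}, 0)$. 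Next I would evaluate the endpoints: $\lim_{u \to -\infty} f(u) = 0$ (both terms vanish, and the standard Mill's-ratio asymptotic $\Phi(u) \sim \frac{1}{|u|\sqrt{2\pi}} e^{-u^2/2}$ shows the approach is from above) and $f(0) = 1 - \tfrac{1}{2} = \tfrac{1}{2} > 0$. Combining monotonicity with these boundary values yields $f(u) \geq 0$ on $(-\infty, 0]$, which is exactly the bound $\Phi(u) \leq e^{-u^2/2}$ for $u < 0$.

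For the second claim, I would split on whether $v$ is above or below $1/2$. If $v \geq 1/2$, then $\Phi^{-1}(v) \geq 0$, while $-\sqrt{2\log(1/v)} \leq 0$, so the inequality is immediate. If $0 < v < 1/2$, set $u := \Phi^{-1}(v) < 0$. Then by the first part, $v = \Phi(u) \leq e^{-u^2/2}$, and taking logarithms gives $u^2 \leq 2\log(1/v)$. Since $u < 0$, this rearranges to $u \geq -\sqrt{2\log(1/v)}$, which is the desired bound.

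There is no genuine obstacle here; the only delicate point is the $u \to -\infty$ limit in the monotonicity argument, which needs the standard tail estimate for $\Phi$ to confirm that $f \to 0$ from above (alternatively, one can bypass this by using the direct bound $\Phi(u) \leq \frac{1}{|u|\sqrt{2\pi}} e^{-u^2/2}$ for $u \leq -\tfrac{1}{\sqrt{2\pi}}$ and checking the remaining bounded interval $[-\tfrac{1}{\sqrt{2\pi}}, 0]$ by observing $\Phi(u) \leq \tfrac{1}{2} < e^{-1/(4\pi)} \leq e^{-u^2/2}$).
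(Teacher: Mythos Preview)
Your proof is correct. The paper does not supply its own proof of this lemma; it simply cites it as Lemma~4.10 of \cite{van2009adaptive}, so there is no in-paper argument to compare against.

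One minor remark: your concern about the limit $f(u)\to 0$ being approached ``from above'' is unnecessary. Since $f'(u)>0$ on $(-\infty,-1/\sqrt{2\pi})$, the function $f$ is strictly increasing there, and strict monotonicity together with $\lim_{u\to-\infty}f(u)=0$ already forces $f(u)>0$ on that interval without any appeal to Mill's ratio. The decreasing piece then terminates at $f(0)=1/2>0$, so $f\geq 0$ on all of $(-\infty,0]$ follows directly.
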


We introduce the following technical lemma to prove the main result of this section.

\begin{lemma}\label{xln1/x}
If $\gamma_n=(\frac{n}{\log n})^{-\frac{d}{2q+2d}}$ and $n$ is sufficiently large so that $\log n \geq C_7$, then $\gamma_n^{-\frac{2q}{d}} \log \big(\frac{C_7} {\gamma_n}\big) \leq  n \gamma^2_n$, where $C_7$ is defined in Lemma \ref{upper bounds on concentration function}.
\end{lemma}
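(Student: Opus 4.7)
The plan is to substitute the explicit form of $\gamma_n$ on both sides and exploit the fact that the exponent $d/(2q+2d)$ has been tuned so that the two sides differ exactly by a factor of $\log n$. First, using $\tfrac{d}{2q+2d}\cdot\tfrac{2(q+d)}{d} = 1$, I would write
\begin{align}
\gamma_n^{2(q+d)/d} = \left(\frac{n}{\log n}\right)^{-\frac{d}{2q+2d}\cdot\frac{2(q+d)}{d}} = \left(\frac{n}{\log n}\right)^{-1} = \frac{\log n}{n}\,, \nonumber
\end{align}
which rearranges to $n\gamma_n^2 = \gamma_n^{-2q/d}\,\log n$. Hence the target bound $\gamma_n^{-2q/d}\log(C_7/\gamma_n) \leq n\gamma_n^2$ collapses to the much cleaner statement $\log(C_7/\gamma_n) \leq \log n$, i.e., $n\gamma_n \geq C_7$.

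It then remains to verify $n\gamma_n \geq C_7$ from the hypothesis $\log n \geq C_7$. I would expand $n\gamma_n = n^{(2q+d)/(2q+2d)}(\log n)^{d/(2q+2d)}$ and note that the assumption $q \geq d/2 > 0$ from Theorem \ref{posterior rate fixed design} forces $(2q+d)/(2q+2d) \geq 1/2$. For $n \geq e$ the factor $(\log n)^{d/(2q+2d)}$ is at least $1$, so the hypothesis $n \geq e^{C_7}$ yields $n\gamma_n \geq n^{1/2} \geq e^{C_7/2}$. The elementary inequality $e^{x/2} \geq x$ for all $x \geq 0$ (verified by checking its derivative vanishes at $x = 2\log 2$ where the value is positive) then gives $n\gamma_n \geq C_7$.

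The main obstacle, if one can call it that, is purely bookkeeping with fractional exponents; there is no substantive analytic content beyond observing that $\gamma_n$ was tuned precisely so that $n\gamma_n^{2+2q/d} = \log n$, which is the rate-matching identity underlying the posterior contraction argument in Theorem \ref{posterior rate fixed design}.
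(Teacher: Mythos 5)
Your proposal is correct and follows essentially the same path as the paper's proof: both hinge on the rate-matching identity $n\gamma_n^2 = \gamma_n^{-2q/d}\log n$ and then reduce the claim to $\log(C_7/\gamma_n)\leq \log n$. The only divergence is in the final verification: the paper derives $n\geq C_7\gamma_n^{-(2q+2d)/d}$ directly from the hypothesis $\log n \geq C_7$ and then uses $\gamma_n\leq 1$ with $(2q+2d)/d\geq 1$ to get $n\geq C_7/\gamma_n$, whereas you expand $n\gamma_n = n^{(2q+d)/(2q+2d)}(\log n)^{d/(2q+2d)}$, drop the logarithmic factor, and invoke $e^{x/2}\geq x$; both routes are sound, with the paper's being marginally more economical, and you did not actually need $q\geq d/2$ for the step $(2q+d)/(2q+2d)\geq 1/2$, which already holds for $q\geq 0$.
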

\begin{proof}
If $\log n \geq C_7$, then we have $(\frac{n}{\log n})^{-\frac{d}{2q+2d}} \geq (\frac{n}{C_7})^{-\frac{d}{2q+2d}}$. Hence, $\gamma_n \geq (\frac{n}{C_7})^{-\frac{d}{2q+2d}}$, which is equivalent to $n \geq C_7 \gamma_n^{-\frac{2q+2d}{d}}$.  
Moreover, a straightforward calculation shows that $\gamma_n=(\frac{n}{\log n})^{-\frac{d}{2q+2d}}$  is equivalent to $\gamma_n^{-\frac{2q}{d}} \log n = n \gamma^2_n$. Hence, 
\begin{align*}
n \gamma^2_n= \gamma_n^{-\frac{2q}{d}} \log n \geq  \gamma_n^{-\frac{2q}{d}} \log \left(C_7 \gamma_n^{-\frac{2q+2d}{d}}\right) \geq \gamma_n^{-\frac{2q}{d}} \log \left(\frac{C_7} {\gamma_n}\right). 
\end{align*}
\end{proof}

The main result of this section is summarized in the following proposition. Theorem \ref{posterior rate fixed design} follows from the above proposition and the same argument as Theorem 3.1 in \cite{van2008rates}.

\begin{proposition}
Fix $K\in\mathbb{N}$ and $\epsilon$ small enough so that \eqref{epsilon K relation 1} holds. For the probability density function of $\textbf{T}$, assume $p>1$ and $q \geq \frac{d}{2}$. Take $f_0 \in \mathbb{H}_{\epsilon, K}$ with $\|f_0\|_{\infty} \leq 1$. Denote $\gamma_n=(\frac{n}{\log n})^{-\frac{d}{2q+2d}}$. Then, when $n$ is sufficiently large and $\gamma_n \leq \frac{1}{K}$, we have 
\begin{eqnarray}
\mathbb{P}(\|W^\textbf{T}_{\epsilon,K}-f_0\|_{\infty} \leq \gamma_n) \geq \mathcal{A}_1 \exp\left(-\mathcal{A}_2 n \gamma_n^2\right),  \label{Prior mass}
\end{eqnarray}
where $\mathcal{A}_1>0$ and $\mathcal{A}_2>0$ are constants depending on  $\texttt{vol}(M)$,
and there is a Borel measurable subset $\mathsf B_n \subset \mathbb{H}_{\epsilon, K}$ such that
\begin{align}
& \mathbb{P}(W^\textbf{T}_{\epsilon,K} \not \in \mathsf B_n) \leq \exp\left(-C_6 n \gamma^2_n\right)\,, \label{Sieve}\\
& \log N(\tilde{\gamma_n}, \mathsf B_n, L^\infty) \leq \frac{1}{4}n\tilde{\gamma}^2_n\,, \label{Entropy}
\end{align}
where $\tilde{\gamma}_n:=2\gamma_n $.
\end{proposition}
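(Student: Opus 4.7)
The plan is to verify the three standard sufficient conditions of Ghosal--Ghosh--van der Vaart \cite{ghosal2000convergence, ghosal2007convergence} for a posterior contraction rate $\gamma_n$: the prior-mass lower bound \eqref{Prior mass}, the sieve-exit bound \eqref{Sieve}, and the metric-entropy bound \eqref{Entropy}. Throughout, $C_{10},C_{11},\ldots$ denote constants depending only on $K$, $\epsilon$, $p$, $q$ and the geometry of $M$.

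\emph{Prior-mass step.} The strategy is to bound the concentration function $\phi^t_{f_0}(\gamma_n/2) = \|f_0\|^2_{\mathbb{H}_{\epsilon,K,t}} + S^t(\gamma_n/2)$ uniformly on $t \in [0,1]$ via Lemma \ref{upper bounds on concentration function}, then integrate the standard small-ball inequality $\mathbb{P}(\|W^t_{\epsilon,K} - f_0\|_\infty \leq \gamma_n) \geq \exp(-\phi^t_{f_0}(\gamma_n/2))$ against the prior density $g$ of $\textbf{T}$. Because $K$ is fixed, Lemma \ref{upper bounds on concentration function} gives $\|f_0\|^2_{\mathbb{H}_{\epsilon,K,t}} \leq C_{10}$ and $S^t(\gamma_n/2) \leq C_6 K\log(2C_7/\gamma_n)$ for all $t\in[0,1]$. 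Choose $t_n := (n\gamma_n^2)^{-1/q}$, which lies in $(0,1/2]$ for $n$ large, and integrate over $[t_n,2t_n]$ using $g(t) \geq \mathcal{C}_1 t^{-p} e^{-t^{-q}}$. The assumption $p \geq 1$ keeps $t_n^{1-p}$ bounded below by a constant on this interval, yielding
\[
-\log\mathbb{P}\big(\|W^\textbf{T}_{\epsilon,K}-f_0\|_\infty \leq \gamma_n\big) \;\leq\; t_n^{-q} + C_6 K\log(1/\gamma_n) + C_{11}.
\]
By the choice $t_n^{-q} = n\gamma_n^2$ and Lemma \ref{xln1/x} applied to $\gamma_n = (n/\log n)^{-d/(2q+2d)}$, the second term is dominated by $n\gamma_n^2$, giving \eqref{Prior mass}.

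\emph{Sieve step.} Define
\[
\mathsf B_n \;:=\; M_n\, B^{\mathbb{H}_{\epsilon,K,t_n^*}}_1 + \gamma_n\, B^\infty_1, \qquad t_n^* := (4C_6 n\gamma_n^2)^{-1/q},\qquad M_n := 2\sqrt{C_6 n}\,\gamma_n,
\]
and decompose $\mathbb{P}(W^\textbf{T}_{\epsilon,K} \notin \mathsf B_n) \leq \mathbb{P}(\textbf{T} < t_n^*) + \mathbb{P}(\|W^\textbf{T}_{\epsilon,K}\|_{\mathbb{H}_{\epsilon,K,t_n^*}} > M_n,\,\textbf{T}\geq t_n^*)$. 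The first term is controlled by the substitution $u = t^{-q}$ in $\int_0^{t_n^*} \mathcal{C}_2 t^{-p} e^{-t^{-q}}\,dt$, producing a tail bound of order $e^{-(t_n^*)^{-q}/2} = e^{-2C_6 n\gamma_n^2}$ for $n$ large. For the second term, the key computation uses that if $\textbf{T}=t\geq t_n^*$, expanding $W^t = \sum_{i=0}^{K-1} Z_i e^{-\lambda_{i,\epsilon}t/2}\phi_{i,\epsilon}$ in the orthonormal basis $\{E_{i,\epsilon,t_n^*}\}$ of $\mathbb{H}_{\epsilon,K,t_n^*}$ gives $\|W^t\|^2_{\mathbb{H}_{\epsilon,K,t_n^*}} = \sum_{i=0}^{K-1} Z_i^2 e^{\lambda_{i,\epsilon}(t_n^*-t)} \leq \sum_i Z_i^2 \sim \chi^2_K$, where the domination is valid precisely because $t \geq t_n^*$ (cf.\ Lemma \ref{nested unit balls}). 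A standard $\chi_K^2$-tail estimate then gives $\mathbb{P}(\chi_K^2 > M_n^2) \leq e^{-M_n^2/4} = e^{-C_6 n\gamma_n^2}$. Summing the two bounds proves \eqref{Sieve}.

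\emph{Entropy step.} Any $L^\infty$ $\gamma_n$-cover of $M_n B^{\mathbb{H}_{\epsilon,K,t_n^*}}_1$ produces a $\tilde\gamma_n$-cover of $\mathsf B_n$. The $L^\infty$-covering estimate from the proof of Lemma \ref{upper bounds on concentration function} then yields
\[
N(\tilde\gamma_n,\mathsf B_n, L^\infty) \;\leq\; N\!\big(\gamma_n/M_n,\, B^{\mathbb{H}_{\epsilon,K,t_n^*}}_1, L^\infty\big) \;\leq\; (C_5 M_n/\gamma_n)^K \;=\; O\big((\sqrt n)^K\big),
\]
so $\log N(\tilde\gamma_n,\mathsf B_n, L^\infty) = O(K\log n)$. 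Since $n\tilde\gamma_n^2 = 4n\gamma_n^2 = 4 n^{q/(q+d)}(\log n)^{d/(q+d)}$ dominates $K\log n$ for $n$ large, condition \eqref{Entropy} follows.

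\emph{Main obstacle.} The main technical hurdle is the simultaneous calibration of $t_n$ in the prior-mass step and $(t_n^*, M_n)$ in the sieve step so that all three conditions succeed at the single rate $\gamma_n = (n/\log n)^{-d/(2q+2d)}$. The assumption $q \geq d/2$, together with the sharp form of Lemma \ref{xln1/x}, is what allows $K\log(1/\gamma_n)$ to be absorbed into $n\gamma_n^2$ in the prior-mass step; the assumption $p \geq 1$ keeps $t_n^{1-p}$ bounded below when $t_n\to 0$; and the ordering $t \geq t_n^*$ is what validates the chi-square domination $\sum_i Z_i^2 e^{\lambda_{i,\epsilon}(t_n^*-t)} \leq \chi_K^2$ in the sieve step.
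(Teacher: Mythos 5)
Your proposal is essentially correct and establishes the same three GGvdV conditions, but two of the three steps take a genuinely different route from the paper. In the prior-mass step the paper picks the interval $[t^*,2t^*]$ with $t^*=\frac{1}{4C_3}\gamma_n^{2/d}\log(1/\gamma_n)$, chosen precisely so that the $t$-dependent bound $\|f_0\|_{\mathbb{H}_{\epsilon,K,t}}^2\leq\texttt{vol}(M)\exp(2C_3K^{2/d}t)$ stays of size $O(\gamma_n^{-1})$ even under the weaker hypothesis $K\leq\gamma_n^{-1}$; the concentration function is then $O(\gamma_n^{-1}\log(1/\gamma_n))$ and Lemma~\ref{xln1/x} absorbs both it and $(t^*)^{-q}$ into $n\gamma_n^2$. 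You instead exploit that $K$ is literally fixed, so $\|f_0\|_{\mathbb{H}_{\epsilon,K,t}}^2$ is a constant and $S^t(\gamma_n/2)=O(\log n)$; you then tune $t_n$ by $t_n^{-q}=n\gamma_n^2$ exactly, letting the density tail carry the whole rate. Both are valid; the paper's version is uniform in $K$ up to $\gamma_n^{-1}$, yours is shorter in the stated fixed-$K$ regime. In the sieve step the paper uses the $t=0$ ball $M B_1^{\mathbb{H}_{\epsilon,K,0}}$ and bounds $\mathbb{P}(W^t\notin\mathsf B_n)$ via Borell's inequality together with the nested-ball Lemma~\ref{nested unit balls} and the Gaussian tail bound in Lemma~\ref{gaussian cdf}. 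You use a $t_n^*$-dependent ball, split on $\{\textbf{T}<t_n^*\}$, and replace Borell with the observation that for $t\geq t_n^*$ the coordinates of $W^t$ in the $\mathbb{H}_{\epsilon,K,t_n^*}$ basis are dominated by $\chi_K^2$ (this is a finite-dimensional shortcut and would not survive letting $K\to\infty$). The entropy step is the same modulo bookkeeping.

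Two small points to tighten. First, your $\mathsf B_n:=M_n B_1^{\mathbb{H}_{\epsilon,K,t_n^*}}+\gamma_n B_1^\infty$ is not a subset of $\mathbb{H}_{\epsilon,K}$ as the proposition requires; intersect with $\mathbb{H}_{\epsilon,K}$ as the paper does. This is harmless: it leaves $\mathbb{P}(W^{\textbf T}\notin\mathsf B_n)$ unchanged (since $W^{\textbf T}\in\mathbb{H}_{\epsilon,K}$ always) and can only decrease the covering number. Second, your two sieve-exit contributions sum to $e^{-2C_6n\gamma_n^2}+e^{-C_6n\gamma_n^2}\leq 2e^{-C_6n\gamma_n^2}$, which narrowly misses the stated $e^{-C_6n\gamma_n^2}$; take $M_n=2\sqrt{2C_6n}\,\gamma_n$ (as the paper does) so both terms are $\leq e^{-2C_6n\gamma_n^2}$ and the sum is $\leq e^{-C_6n\gamma_n^2}$ for $n$ large.
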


\begin{proof}
We start by proving \eqref{Prior mass}.
Based on \cite[Lemma 5.3]{van2008reproducing}, for a fixed $t\in(0,1]$, we have 
\begin{align}
 e^{-\phi^t_{f_0}(\gamma_n)} \leq \mathbb{P}(\|W^t_{\epsilon,K}-f_0\|_{\infty} \leq 2\gamma_n) \leq  e^{-\phi^t_{f_0}(2\gamma_n)}. \nonumber 
\end{align}
Hence, 
\begin{align}
&\mathbb{P}(\|W^\textbf{T}_{\epsilon,K}-f_0\|_{\infty} \leq 2\gamma_n)=\int_{0}^1 \mathbb{P}(\|W^t_{\epsilon,K}-f_0\|_{\infty} \leq 2\gamma_n) g(t) dt\nonumber\\
\geq&\, \int_{0}^1 e^{-\phi^t_{f_0}(\gamma_n)}g(t)dt \geq \int_{t^*}^{2t^*} e^{-\phi^t_{f_0}(\gamma_n)}g(t)dt, \nonumber 
\end{align}
where $t^*\in (0,1/2)$ will be determined later.
By Lemma \ref{upper bounds on concentration function}, 
$$\phi^t_{f_0}(\gamma_n) \leq  \texttt{vol}(M) \exp\left(  2 C_3 K^{\frac{2}{d}} t \right)+C_6 K \log \left(\frac{C_7}{\gamma_n}\right) 
\leq \texttt{vol}(M) \exp\left(  2 C_3 \gamma_n^{-\frac{2}{d}} t \right)+C_6 \gamma_n^{-1} \log \left(\frac{C_7}{\gamma_n}\right),$$
where the second inequality holds by the assumption that $K \leq \frac{1}{\gamma_n}$. When $n$ is sufficiently large, 
choose $t^*=\frac{1}{4C_3}(\gamma_n)^{\frac{2}{d}}\log(\frac{1} {\gamma_n})\in (0,1/2)$. Then, for $t\in [t^*, 2t^*]$, we have 
\begin{align}
\phi^t_{f_0}(\gamma_n) \leq  \texttt{vol}(M) \exp\left( 2 C_3 \gamma_n^{-\frac{2}{d}} t \right)+C_6 \gamma_n^{-1} \log \left(\frac{C_7}{\gamma_n}\right) \leq \texttt{vol}(M)  \gamma_n^{-1} +C_6 \gamma_n^{-1} \log \left(\frac{C_7}{\gamma_n}\right)\, .  \nonumber
\end{align}
Hence, when $n$ is sufficiently large, there is a constant $C_8>0$ depending on $\texttt{vol}(M)$ such that
$\phi^t_{f_0}(\gamma_n) \leq C_8 \gamma_n^{-1} \log \left(\frac{C_7} {\gamma_n}\right).$ 
Observe that $t^* \rightarrow 0$ as $n\to\infty$ and the lower bound for $g(t)$, which is $\mathcal{C}_1 t^{-p}e^{-t^{-q}}$, goes to $0$ as $t\to 0$. Therefore, 
$
\inf_{t\in [t^*, 2t^*]} g(t)=\mathcal{C}_1 {t^*}^{-p}e^{-{t^*}^{-q}}
$ 
when $n$ is sufficiently large.
As a result,
\begin{eqnarray}
\lefteqn{
\mathbb{P}(\|W^\textbf{T}_{\epsilon,K}-f_0\|_{\infty} \leq 2\gamma_n) 
 \geq  \int_{t^*}^{2t^*} e^{-\phi^t_{f_0}(\gamma_n)}g(t)dt
\geq  t^* \exp\left(- C_8\gamma_n^{-1} \log \left(\frac{C_7} {\gamma_n}\right)\right) \mathcal{C}_1 {t^*}^{-p}e^{-{t^*}^{-q}} } \nonumber \\
= &\, \left(\frac{1}{4C_3}\right)^{1-p}\gamma_n^{\frac{2-2p}{d}}\log^{1-p}\left(\frac{1} {\gamma_n}\right) \exp\left(- C_8\gamma_n^{-1} \log \left(\frac{C_7} {\gamma_n}\right)-(4C_3)^{-q}\gamma_n^{-\frac{2q}{d}}\log^{-q}\left(\frac{1} {\gamma_n}\right)\right)\,. \nonumber
\end{eqnarray}
When $n$ is sufficiently large, we have $(\gamma_n)^{-\frac{2}{d}}\log^{-1}(\frac{1} {\gamma_n})>1$. Using the assumptions that $p \geq 1$ and $q \geq \frac{d}{2}$,  the above term is bounded below by
$C_9 \exp\left(- C_{10} \gamma_n^{-\frac{2q}{d}} \log \left(\frac{C_7} {\gamma_n}\right)\right),$
where $C_9$ depends on $\texttt{vol}(M)$ and $p$ and $C_{10}$ depends on $\texttt{vol}(M)$ and $q$.
Finally, by Lemma \ref{xln1/x}, when $n$ is sufficiently large so that $\log n \geq C_7$, 
$
\mathbb{P}(\|W^\textbf{T}_{\epsilon,K}-f_0\|_{\infty} \leq 2\gamma_n)  \geq C_9 \exp\left(- C_{10} n \gamma_n^2\right)\,. 
$
Hence, we obtain our claim \eqref{Prior mass} that 
$$
\mathbb{P}(\|W^\textbf{T}_{\epsilon,K}-f_0\|_{\infty} \leq \gamma_n)  \geq C_9 \exp\left(-4 C_{10} n \gamma_n^2\right)=\mathcal{A}_1 \exp\left(-\mathcal{A}_2 n \gamma_n^2\right)
$$
for $\mathcal{A}_1,\mathcal{A}_2>0$.

Second, we prove \eqref{Sieve}.
Since $K \leq \frac{1}{\gamma_n}$, by Lemma \ref{upper bounds on concentration function}, for any $t\in [0,1]$, 
\begin{align}
S^{t}(\gamma_n)=-\log (\mathbb{P}(\|W^t_{\epsilon,K}\|_{\infty} <\gamma_n)) \leq C_6 \gamma_n^{-1} \log \left(\frac{C_7}{\gamma_n}\right)\,. \nonumber 
\end{align}
Let $B^{\mathbb{H}_{\epsilon, K, 0}}_1$ be the ball of radius $1$ centered at $0$ in $\mathbb{H}_{\epsilon, K, 0}$ and let $B_1$ be the unit ball centered at $0$ in $L^\infty(M)$.  Let $M=2 \sqrt{2C_6} \sqrt{n} \gamma_n$ and set
\[
\mathsf B_n:= (M B^{\mathbb{H}_{\epsilon, K, 0}}_1+\gamma_n B_1) \cap \mathbb{H}_{\epsilon, K}.  
\]
By Lemma \ref{nested unit balls}, we have for $t \geq 0$
\begin{align}
(M B^{\mathbb{H}_{\epsilon, K, t}}_1+\gamma_n B_1) \cap \mathbb{H}_{\epsilon, K} \subset (M B^{\mathbb{H}_{\epsilon, K, 0}}_1+\gamma_n B_1) \cap \mathbb{H}_{\epsilon, K} =\mathsf B_n\,.
\nonumber 
\end{align}
Hence, we have 
\begin{align}
\mathbb{P}(W^t_{\epsilon,K} \not \in \mathsf B_n) & =\mathbb{P}(W^t_{\epsilon,K} \not \in   (M B^{\mathbb{H}_{\epsilon, K, 0}}_1+\gamma_n B_1) \cap \mathbb{H}_{\epsilon, K} )  \leq  \mathbb{P}(W^t_{\epsilon,K} \not \in   (M B^{\mathbb{H}_{\epsilon, K, t}}_1+\gamma_n B_1) \cap \mathbb{H}_{\epsilon, K} ) \nonumber 
\end{align}

Note that $B_1$ is the unit ball centered at $0$ in $L^\infty(M)$ rather than in $ \mathbb{H}_{\epsilon, K,t}$. However, based on the definition of $S^{t}(\gamma_n)$, $\mathbb{P}(\|W^t_{\epsilon,K}\|_{\infty} <\gamma_n))=\exp\left(-S^{t}(\gamma_n)\right))$.
Hence, if we apply Borell's inequality, we have 
$$ \mathbb{P}(W^t_{\epsilon,K} \in   (M B^{\mathbb{H}_{\epsilon, K, t}}_1+\gamma_n B_1) \cap \mathbb{H}_{\epsilon, K} ) \geq \Phi(\Phi^{-1}(\exp\left(-S^{t}(\gamma_n)\right))+M).$$
In conclusion,
\begin{align}
\mathbb{P}(W^t_{\epsilon,K} \not \in \mathsf B_n) & \leq 1-\Phi(\Phi^{-1}(\exp\left(-S^{t}(\gamma_n)\right))+M)=\Phi(-\Phi^{-1}(\exp\left(-S^{t}(\gamma_n)\right))-M)\,. \nonumber 
\end{align}

Since $\mathbb{P}(\|W^t_{\epsilon,K}\|_{\infty} <\gamma_n) \rightarrow 0$ as $\gamma_n \rightarrow 0$, if $\gamma_n$ is small enough, we have $\mathbb{P}(\|W^t_{\epsilon,K}\|_{\infty} <\gamma_n) <\frac{1}{4}$, and hence $\exp\left(-S^{t}(\gamma_n)\right) < \frac{1}{4}$. 
By Lemma \ref{gaussian cdf}, 
$\Phi^{-1}(\exp\left(-S^{t}(\gamma_n)\right)) \geq -\sqrt{2 S^{t}(\gamma_n)}.$ 
So, 
\begin{align}
-\Phi^{-1}(\exp\left(-S^{t}(\gamma_n)\right))-M \leq  \sqrt{2 C_6 \gamma_n^{-1} \log \left(\frac{C_7}{\gamma_n}\right)}- 2 \sqrt{2 C_6} \sqrt{n} \gamma_n\,.   \nonumber
\end{align}
By Lemma \ref{xln1/x} and the fact that $q \geq \frac{d}{2}$, if $\gamma_n=(\frac{n}{\log n})^{-\frac{d}{2q+2d}}$ and $\log n \geq C_7$, then we have 
$$\gamma_n^{-1} \log \left(\frac{C_7} {\gamma_n}\right) \leq \gamma_n^{-\frac{2q}{d}} \log \left(\frac{C_7} {\gamma_n}\right) \leq  n \gamma^2_n.$$ 
Hence,
\[
-\Phi^{-1}(\exp\left(-S^{t}(\gamma_n)\right))-M \leq  \sqrt{2 C_6 n \gamma^2_n }- 2 \sqrt{2 C_6} \sqrt{n} \gamma_n \leq - \sqrt{2  C_6} \sqrt{n} \gamma_n.
\]
By Lemma \ref{gaussian cdf}, $\mathbb{P}(W^t_{\epsilon,K} \not \in \mathsf B_n) \leq \exp\left(-C_6 n \gamma^2_n\right)$. Therefore,
\[
\mathbb{P}(W^\textbf{T}_{\epsilon,K} \not \in \mathsf B_n) \leq \int_{0}^1 \exp\left(-C_6 n \gamma^2_n\right) g(t)dt \leq \exp\left(-C_6 n \gamma^2_n\right).
\]

Finally, we prove \eqref{Entropy}.
Choose $\tilde{\gamma}_n=2\gamma_n$. Then,
\begin{align}
\log N(\tilde{\gamma}_n,\,  (M B^{\mathbb{H}_{\epsilon, K, 0}}_1+\gamma_n B_1) \cap \mathbb{H}_{\epsilon, K}, \, L^\infty) \leq & \log N(\gamma_n, \, M B^{\mathbb{H}_{\epsilon, K, 0}}_1,\, L^\infty)\,. \nonumber
\end{align}
By Lemma \ref{upper bounds on concentration function} and the assumptions that $K \leq \frac{1}{\gamma_n}$ and $n \geq 8 C_6$,
\begin{align}
\log N(\gamma_n, M B^{\mathbb{H}_{\epsilon, K, 0}}_1, L^\infty) \leq K \log\left(\frac{C_7 M}{\gamma_n}\right) \leq \frac{1}{\gamma_n} \log (2 \sqrt{2C_6} \sqrt{n})= \frac{1}{2\gamma_n} \log (8 C_6 n) \leq \frac{1}{\gamma_n} \log n\,. \nonumber 
\end{align}
To show that $ \frac{1}{\gamma_n} \log n \leq n \gamma^2_n$, it is equivalent to show that $\log n \leq n \gamma_n ^3$.  Since $\gamma_n=(\frac{n}{\log n})^{-\frac{d}{2q+2d}}$, $n \gamma_n ^3 =n^{1-\frac{3d}{2q+2d}}(\log n)^{\frac{3d}{2q+2d}}$.  Hence, 
$\log n \leq n \gamma_n ^3$ is equivalent to $(\log n)^{\frac{2q-d}{2q+2d}} \leq n ^{\frac{2q-d}{2q+2d}}$, which follows from $q \geq \frac{d}{2}$ and $n \geq \log n$.
In conclusion, we have
\[
\log N(\tilde{\gamma_n}, B^{\mathbb{H}_{t}}_M+\gamma_n B_1, L^\infty) \leq n \gamma^2_n=\frac{1}{4}n\tilde{\gamma}^2_n.
\]
\end{proof}

\section{Choosing the GL-GP parameters in the two balloons example}\label{parameter choices two balloons}

We demonstrate the process of choosing the GL-GP parameters using the two balloons dataset in section \ref{simulations}.  This example is designed to improve understanding of relationships between the parameters and the geometry of the underlying set. 
To illustrate relationships among the parameters, we evaluate the functional in \eqref{marginal likelihood} over a grid with $K=1,\cdots, 35$, $\epsilon^2=0.001+0.001 \times i$, $ i=1,\cdots, 29$, $t=0.01 \times j$, $j=1,\cdots, 100$ and $\sigma_{noise}^2=0.1 \times k$, $k=1,\cdots, 20$. 
Over the grid, the maximum is achieved when $K=30$, $\epsilon^2=0.012$, $t=0.33$ and $\sigma_{noise}^2=0.9$.  
In Figure \ref{plot marginal likelihood}, we plot the log of the marginal likelihood when $t=1$ and $\sigma_{noise}^2=1$. We can observe two local optima at $K=22$ and $\epsilon^2=0.011$, and $K=30$ and $\epsilon^2=0.012$, respectively. Our simple grid search algorithm can find the global optima even with multimodality. 
We emphasize that both $\epsilon$ and $K$ are related to the geometry of $S$, and their contribution to the covariance matrix is via a series of nonlinear operators; Hence, their roles in the marginal likelihood are complicated. 
Compared with $\epsilon$, which is a scale defining the intrinsic neighborhood and is closely related to the geometry of $S$, $t$ and $\sigma_{noise}$ are related to the regression function. Thus the dependence of the likelihood on $t$ and $\sigma_{noise}$ should be different. In Figure \ref{plot marginal likelihood}, we plot the log of the marginal likelihood over the grid with $\epsilon^2=0.001+0.001 \times i$, $i=1, \cdots, 29$ and $t=0.01 \times i$, $i=1,\cdots 60$, when we fix $K=24$ and $\sigma_{noise}^2=0.5$. The marginal likelihood is unimodal with an absolute maximum at $\epsilon^2=0.012$ and $t=0.28$. 

\begin{figure}
\centering
\includegraphics[width=1 \columnwidth]{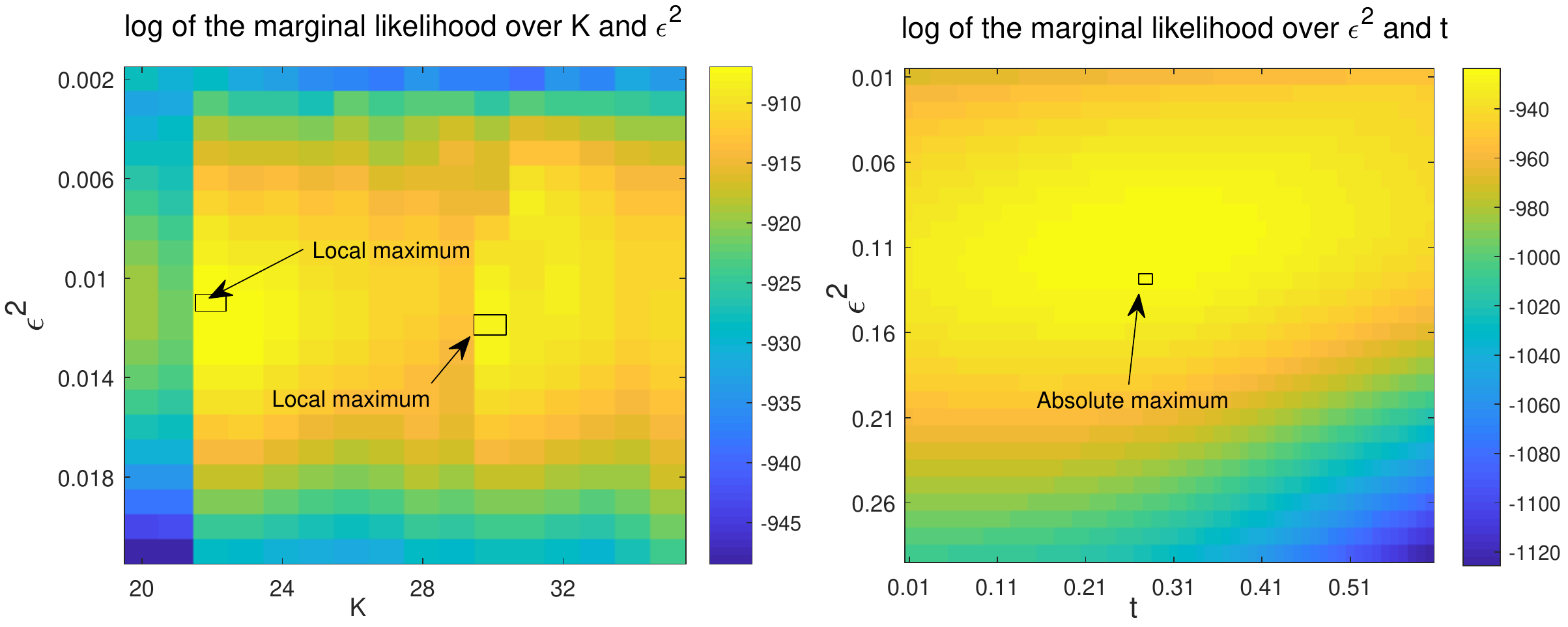}
\caption{Left: log marginal likelihood for $K$ from $20$ to $35$ and $\epsilon^2$ from $0.002$ to $0.021$, with $t=1$ and $\sigma_{noise}^2=1$. There are two local optima at  $K=22, \epsilon^2=0.011$ and $K=30, \epsilon^2=0.012$, respectively. Right: log marginal likelihood for  $\epsilon^2$ from $0.002$ to $0.03$ and $t$ from $0.01$ to $0.6$, with $K=24$ and $\sigma_{noise}^2=0.5$. There is a global optima at $\epsilon^2=0.012$ and $t=0.28$. }\label{plot marginal likelihood}
\end{figure}

\section{Performance of the Nystr\"{o}m extension on a spiral}\label{simulation nystrom}

We consider a spiral, $M$, embedded in $\mathbb{R}^{2}$ parametrized by 
$$\gamma(\theta)=((\theta+4)^{0.7}\cos(\theta), (\theta+4)^{0.7}\sin(\theta)) \in \mathbb{R}^{2},\mbox{ where }\theta \in [0, 8\pi).$$ 
We randomly sample $60$ labeled points $\{\theta_1, \ldots,\theta_{60}\}$ and $1500$ unlabeled points $\{\theta_{61}, \ldots, \theta_{1560}\}$ on $[0 ,8\pi)$ based on the uniform probability density function.  Let $x_i=\gamma(\theta_i)$ for $i=1,\ldots,1560$. The labels are sampled under equation~\eqref{eq:base} with $\sigma_{noise}^2=1$ and 
\begin{align}
f(\gamma(\theta)) =3\sin\big(\frac{\theta}{10}\big) +3\cos\big(\frac{\theta}{2}\big)+4\sin\big(\frac{4\theta}{5}\big)\,.
\end{align}
We sample $299$ indices $i_1 ,\cdots, i_{299}$ among $61, \cdots, 1560$ based on a uniform distribution. Let $\beta_j=\theta_{i_j}$ and $x'_j=\gamma(\beta_j)$ for $j=1, \cdots, 299$. We plot labels over $\{x_i\}$ for $i=1, \ldots, 60$,  and the ground truth over $\{x_1 , \cdots, x_{60}, x'_{1}, \cdots, x'_{299}\}$ in Figure \ref{spiralsmallsample}.  First, we show the performance of GL-GP by using the GL constructed from the dataset with $359$ sample points, $\{x_1, \cdots x_{60}, x'_{1} ,\cdots, x'_{299}\}$. By maximizing the marginal likelihood to choose the covariance parameters, we obtain  $K=28$, $\epsilon =\sqrt{0.13}$, $t=9$ and $\sigma_{noise}=1$ for the GL-GP, leading to an RMSE of $1.178$. For the GP with squared exponential covariance, we obtain $A=17$, $\rho=\sqrt{2.2}$, $\sigma_{noise}=\sqrt{0.7}$  leading to an RSME of $2.049$. We plot the predictions by the GL-GP and the GP over $\beta_i$ for $i=1 , \cdots, 299$ respectively in Figure \ref{extensionspiral}. 

\begin{figure}
\centering
\includegraphics[width=.8\columnwidth]{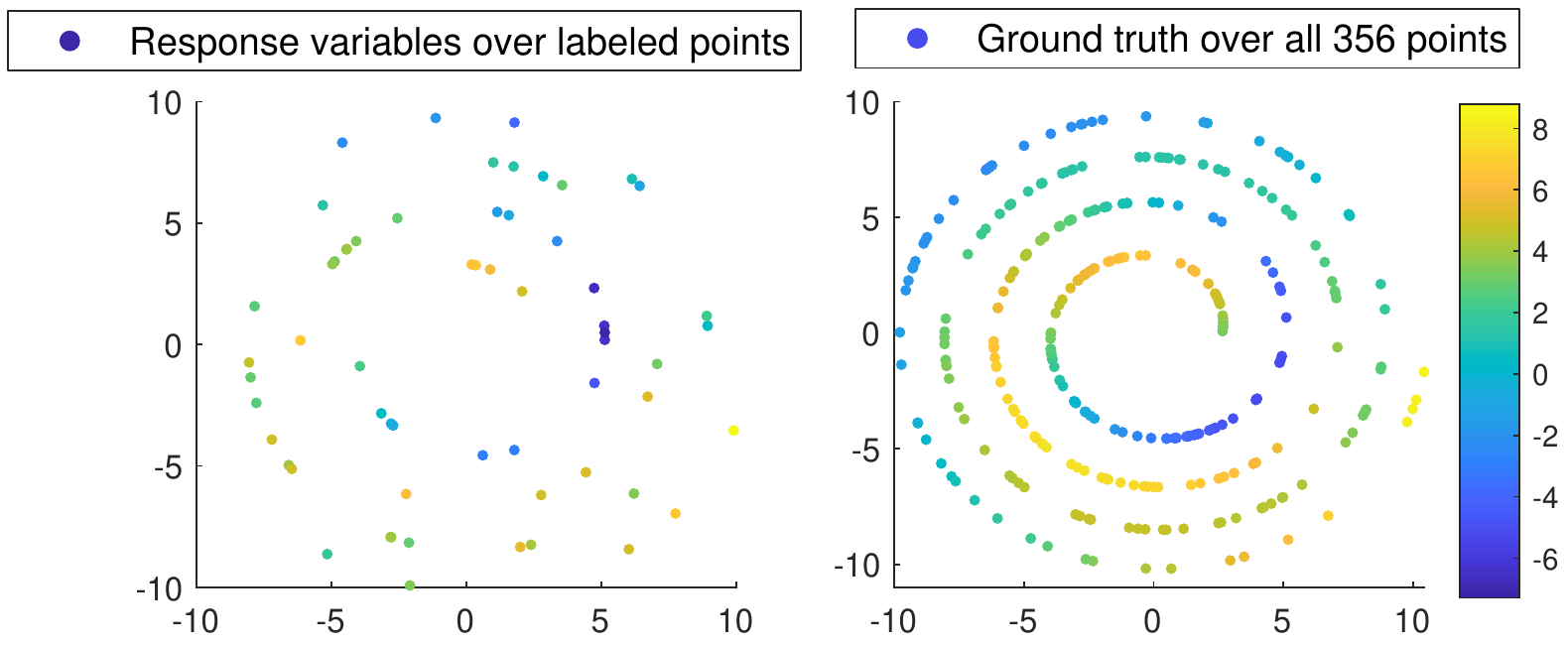}
\caption{Let $\{x_1, \dots, x_{60}\}$ be the labels points and $\{x'_{1}, \dots, x'_{299}\}$ be the unlabeled points. Left: The response variables over the labeled points. Rright: ground truth over all $359$ points. }. \label{spiralsmallsample}
\end{figure}

\begin{figure}
\centering
\includegraphics[width=.8 \columnwidth]{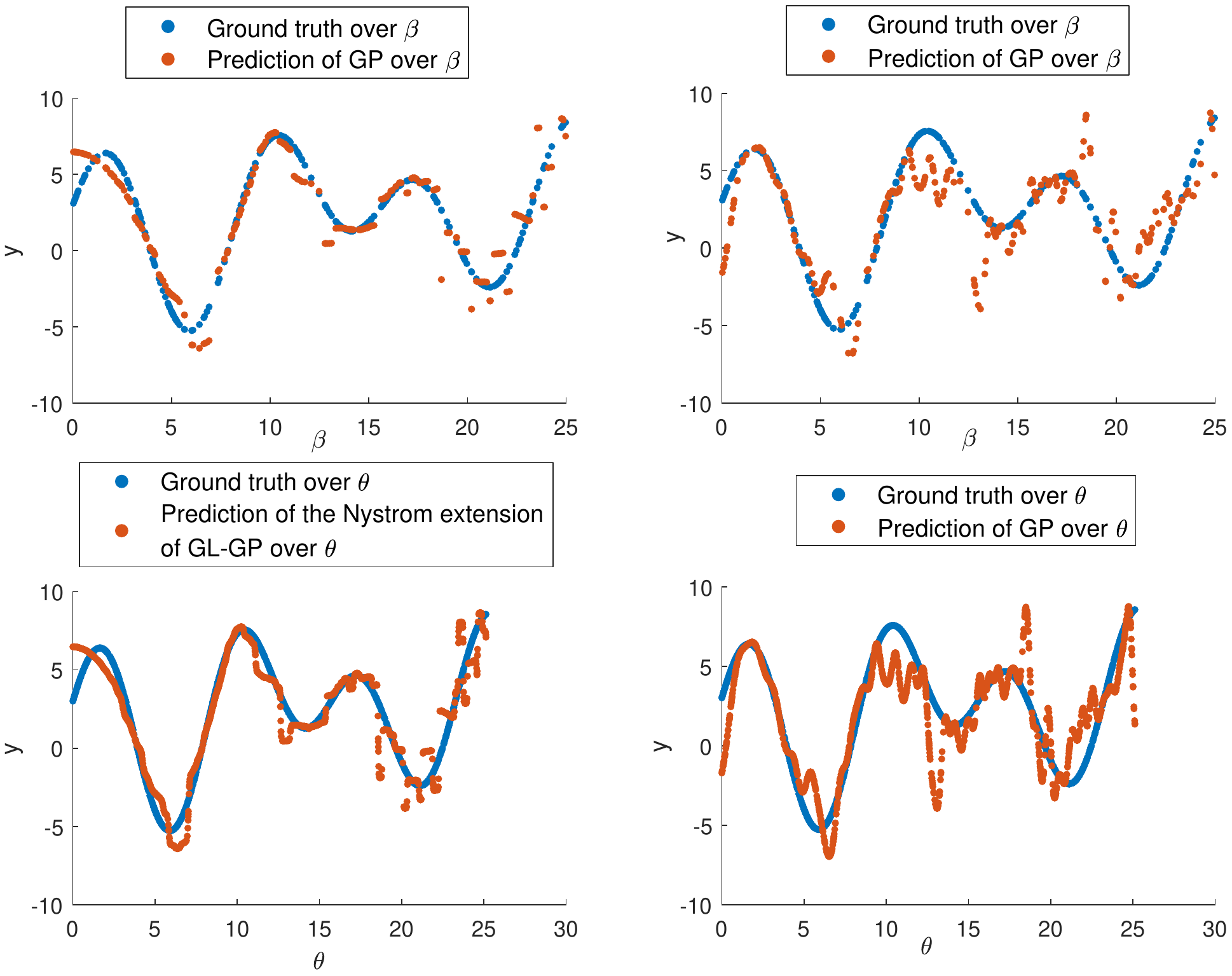}
\caption{Top left: The red points represent the prediction of GL-GP over $\{\beta_i\}$ for $i=1, \cdots, 299$ with $RSME=1.178$. The blue points correspond to the ground truth over all $\{\beta_i\}$ for $i=1, \cdots, 299$.  Top right: The red points represent the prediction of the GP over $\{\beta_i\}$ for $i=1, \cdots, 299$ with $RSME=2.049$. Bottom left: The red points represent the prediction of the Nystr\"{o}m extension method over $\{\theta_i\}$ for $i=61, \cdots, 1560$ with $RSME=1.235$. Bottom right: The red points represent the prediction of the GP over $\{\theta_i\}$ for $i=61, \cdots, 1560$ with $RMSE=1.967$.}. \label{extensionspiral}
\end{figure}

Next, we apply the Nystr\"{o}m extension. We extend the covariance matrix constructed from the $359$ points with paramters $K=28$, $\epsilon =\sqrt{0.13}$, $t=9$ and $\sigma_{noise}=1$ to a covariance matrix over $\{x_i\}_{i=1}^{1560}$ by using \eqref{DBGP covariance matrix extension}. The RSME between the predicton over $\{x_i\}_{i=61}^{1560}$ by this extended covariance matrix and the true values of the regression function is $1.235$. For GP with squared exponential covariance, the RSME between the predicton over $\{x_i\}_{i=61}^{1560}$ by this extended covariance matrix and the true values of the regression function is $1.967$. We plot the prediction by  the Nystr\"{o}m extension over $\{\theta_i\}_{i=61}^{1560}$ in Figure \ref{extensionspiral} and compare it with the prediction of squared exponential GP. Since the covariance matrix is an extension of the GL-GP  covariance matrix constructed over the small-size samples, the prediction is not as good as the prediction by using GL-GP covariance matrix constructed directly from the large-size samples in section \ref{simulations}. But, the Nystr\"{o}m extension method still improves the performance from the GP with the squared exponential covariance.

\section{Performance of Graph based Gaussian Processes on a square}

\begin{figure}
\centering
\includegraphics[width=.65\columnwidth]{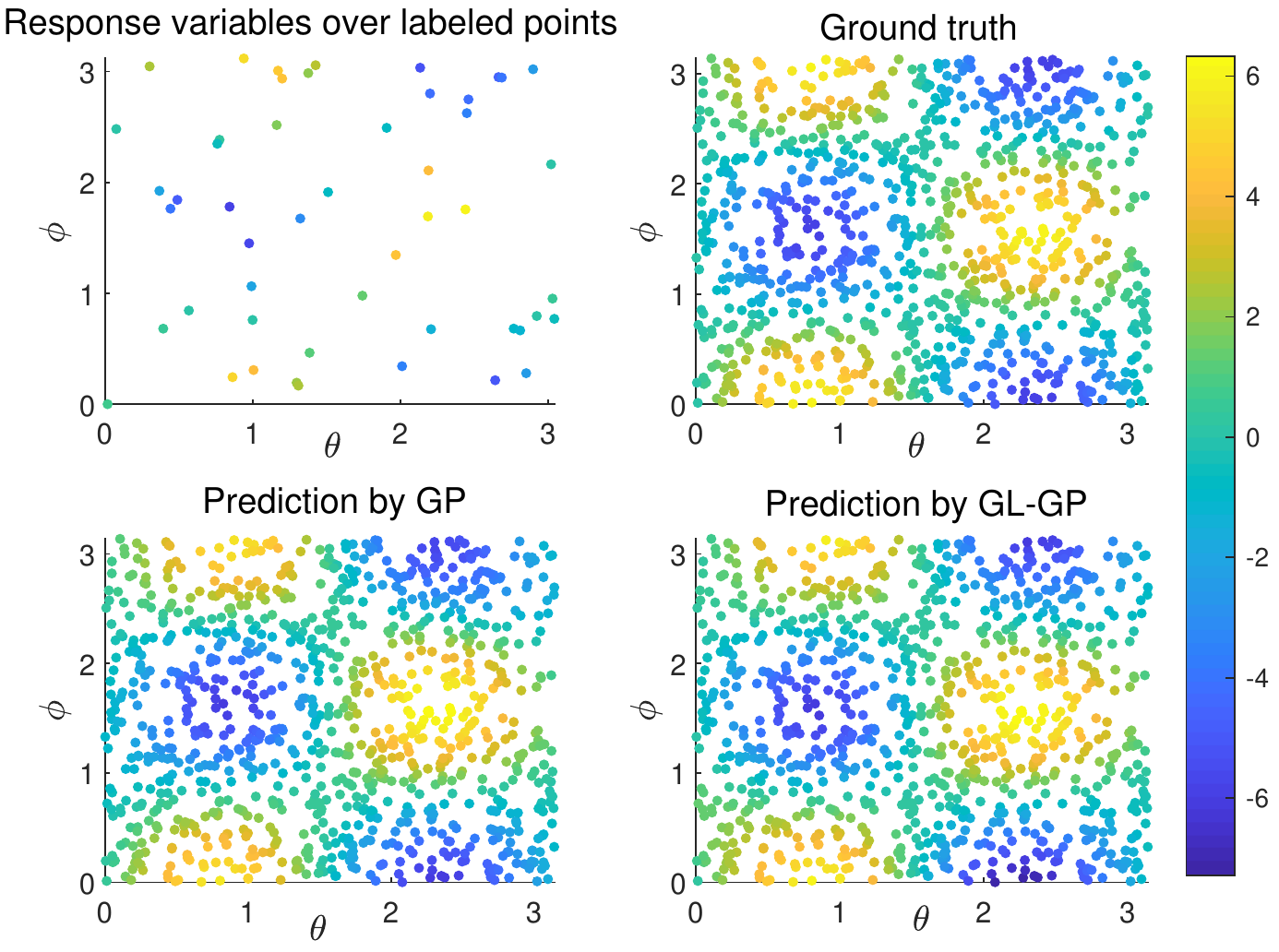}
\caption{ Let $\{x_1, \dots, x_{50}\}$ be the labels points and $\{x_{51}, \dots, x_{1050}\}$ be the unlabeled points. Top left: response variables over the labeled points. Top right: ground truth over all points. Bottom left: prediction of GL-GP over the unlabeled points with $RMSE=0.435$.  Bottom right: prediction of the GP over the unlabeled points with $RMSE=0.387$.}. \label{rectangle comparison}
\end{figure}

In this section, we apply the GL-GP to samples on a (flat) square in  $\mathbb{R}^{2}$. A square has trivial geometry, in particular, the intrinsic distance between two points on the square is the same as the extrinsic Euclidean distance, and there is no conflict between the global and local geometries. Hence, the GP with the square exponential covariance also reflects the intrinsic geometry of the set. We compare those two methods in this section. Let $S$ be a square of side length $\pi$ in $\mathbb{R}^{2}$. Each point $x$ in $S$ has coordinates $x=(\theta, \phi)$, where $0 \leq \theta, \phi \leq \pi$. We randomly sample $50$ labeled points $\{x_1, \ldots,x_{50}\}$ and $1000$ unlabeled points $\{x_{51}, \ldots, x_{1050}\}$ in $S$ based on the uniform probability density function. The labels are sampled under equation~\eqref{eq:base} with $\sigma_{noise}=0.5$ and 
\begin{align}\label{regression function square}
f(x)=f(\theta, \phi) =6\sin(2\theta) \cos(2\phi)\,.
\end{align}
We plot the labels over $x_i$ for $i=1, \ldots, 60$,  and the ground truth in the top two panels in Figure \ref{rectangle comparison}. By maximizing the marginal likelihoods to choose the covariance parameters, we obtain  $K=19$, $\epsilon =\sqrt{0.2}$, $t=0.2$ and $\sigma_{noise}=\sqrt{0.3}$ for the GL-GP, leading to an RMSE of $0.435$. For the GP with the square exponential covariance, we obtain $A=18$, $\rho=\sqrt{0.3}$, $\sigma_{noise}=\sqrt{0.2}$ and an RMSE of $0.387$. The bottom two panels in Figure \ref{rectangle comparison} show the predictions of the two different approaches.

Next, we sample $100$ groups of points on the square, $\{x^j_{1}, \ldots, x^j_{1050}\}$, $j=1,\cdots, 100$. For each group, the first $50$ points are labeled, while the remaining points are unlabeled.  The labels are sampled under equation~\eqref{eq:base} and the regression function \eqref{regression function square} with $\sigma_{noise}=0.5$. For each group, we maximize the marginal likelihoods to choose the covariance parameters and calculate the RSME between the predictions of the two methods and the ground truth over the unlabel points. The average RSME for GL-GP is $0.601$ with variance $0.027$. The average RSME for the GP with the squared exponential covariance is $0.584$ with variance $0.032$.

\end{document}